\DeclareMathAlphabet{\mathcal}{OMS}{cmsy}{m}{n}
\newcommand\bcmdtab{\noindent\bgroup\tabcolsep=0pt%
  \begin{tabular}{@{}p{10pc}@{}p{20pc}@{}}}
\newcommand\ecmdtab{\end{tabular}\egroup}
  \title[Founded (Auto)Epistemic Equilibrium Logic Satisfies Epistemic Splitting]
        {Founded (Auto)Epistemic Equilibrium Logic Satisfies Epistemic Splitting\thanks{This work has been partially supported by the Centre International de Math\'{e}matiques et d'Informatique de Toulouse (CIMI) through contract ANR-11-LABEX-0040-CIMI within the programme ANR-11-IDEX-0002-02 and the Alexander von Humboldt Foundation.}}
  \author[Jorge Fandinno]
         {JORGE FANDINNO\\
         IRIT, University of Toulouse, CNRS, France\\
			\email{ jorge.fandinno@irit.fr}
			\\
			Universit\"{a}t Potsdam, Germany\\
            \email{fandinno@uni-potsdam.de}}
\DeclareMathOperator{\bL}{\mathbf{K}}
\DeclareMathOperator{\bK}{\mathbf{K}}
\DeclareMathOperator{\bM}{\mathbf{M}}
\newcommand\sfequilibrium{\mbox{S5-equilibrium}\xspace}
\newcommand\kdequilibrium{\mbox{equilibrium}\xspace}
\newcommand\sfmodel{\mbox{S5-model}\xspace}
\newcommand\kdmodel{\mbox{D45-model}\xspace}
\newcommand\wv{\mathbb{W}}
\newcommand\kdint[2]{({#2, #1})}
\def\MEQ{{\rm {EQB}}}
\def\WMEQ{{\rm {WEQB}}}
\def\WEQ{{\rm {WEQ}}}
\def\EQ{{\rm {EQ}}}
\def\Atoms{\mathit{Atoms}}
\def\Head{\mathit{Head}}
\def\Body{\mathit{Body}}
\def\Bodym{\mathit{Body}_{\mathit{sub}}}
\def\Bodyr{\mathit{Body}_{\mathit{obj}}}
\def\Bodyp{\mathit{Body}^{+}}
\def\Bodyn{\mathit{Body}^{-}}
\def\Bodyrp{\mathit{Body}^+_{\mathit{obj}}}
\def\Bodymp{\mathit{Body}^+_{\mathit{sub}}}
\def\Bodyrn{\mathit{Body}^-_{\mathit{obj}}}
\def\Bodymn{\mathit{Body}^-_{\mathit{sub}}}
\def\SM{\text{\rm SM}}
\def\CL{\text{\rm CL}}
\def\modelsfeel{\models_{\text{\rm \tiny FEEL}}}
\def\equivfeel{\equiv_{\text{\rm \tiny FEEL}}}
\def\At{\text{\em At}}
\def\at{\text{\em At}}
\def\K{\mathbf{K}\, }
\def\M{\mathbf{M}\, }
\def\sS{\mathcal{S}}
\def\At{\text{\em At}}
\def\Litobj{\text{\em Lit}^{\mathit{obj}}}
\def\K{\mathbf{K}\, }
\def\M{\mathbf{M}\, }
\def\Head{\mathit{Head}}
\def\Body{\mathit{Body}}
\def\Bodyp{\mathit{Body}^{+}}
\def\Bodyn{\mathit{Body}^{-}}
\def\Atoms{\mathit{Atoms}}
\def\SM{\text{\rm SM}}
\definecolor{darkred}{rgb}{0.5,0.0,0.1}
\newcommand{\review}[2]{{\color{darkred} #1}\marginpar{\footnotesize {\color{darkred} #2}}}
\newcommand{\reviewcol}{\color{darkred}}
\newcommand{\rev}[1]{{\color{blue} #1}}
\definecolor{darkred}{rgb}{0.0,0.0,0.0}
\newcommand{\review}[2]{#1}
\newcommand{\reviewcol}{}
\begin{document}

%%%%%%%%%%%%%%%%%%%%%%%%%%%%%%%%%%%%%%%%%%%%%%%%%%%%%%%%%%%%%%%%%%%
\ifdefined\WITHREVIEWS
%!TEX root = main.tex

\newpage
\newpage
\section*{Summary of changes}

\pagestyle{empty}

I wish to thank the reviewers for their thorough evaluation and their useful comments and suggestions.
Each comment is answered below using the code R$x$.$yy$ for comment number $yy$ from reviewer $x$.
To facilitate the reviewing process, each change is colored in red and makes a marginal reference to the reviewer's comment code R$x$.$yy$.

\subsection*{\bf Reviewer 1}
\begin{enumerate}[label=R1.\arabic*]
\item\label{rev1.1} \rev{The author appear to consider having a unique world view
as a desirable property of a semantics. Other authors however,
for instance Shen and Eiter in a paper cited here, show that on 
significant examples having more than one world view makes sense.
So, a short discussion about this position would be in order.}

The author does not suggest that having a unique world view
is a desirable property,
instead just shown that there exists a class of programs (those epistemically stratified) which has at most one world view.
This does not mean that this is a desirable property in general, but it is hard to justify why an epistemically stratified program could have more than one world view.
% Note that, the author is among the authors of~\cite{CabalarFF2019splitting}, in which it is shown how any semantics satisfying Epistemic Splitting can use the usual ASP representation of action domains to obtain conformant plans in a way that there is a one-to-one correspondence between the these plans and the world views of the program.
% This approach requires several world views, one for each possible conformant plan and follows the usual generate-define-test methodology, also showing that having several world views is useful.
% A different approach to represent conformant plans that also makes use of several world views in Epistemic Logic programs was first discussed in~\cite{Kahl15}.

\item\label{rev1.2} \rev{Overall, the main result of the paper is that FAEEL satisfies the epistemic splitting property;
this is an important property, so the result is valuable.
The paper should be revised in order to better outline which are the new contributions,
as in the present form the reader may be confused on that.}

Done

\item\label{rev1.3} \rev{There is a confusion with names, as FAEEL and FEEL and EEL are very similar, and
are often used together in the same sentence. So, a reader risks to go lost.}

Now they are written as Founded Autoepistemic EEL, Founded EEL and EEL when occur in the same sentence.

\item\label{rev1.4} \rev{It should be said explicitly that you represent a set of interpretations in '[...]'}

This is mentioned in the introduction, page 2, as the first footnote.

\item\label{rev1.5} \rev{pg. 4 third line after Example 1 '(as usual...)' -> '(that as usual...)'}

Done

\item\label{rev1.6} \rev{pg. 4 Definition 1 Atoms(...), although intuitive, has not been defined}

The definition has been added in the background sections.

\item\label{rev1.7} \rev{pg 4 line before Definition 2 'obtaining thys' -> 'thus obtaining'}

Done

\item\label{rev1.8} \rev{pg. 4 Sect. 2.2 line 2 'belief view to the whole set'->'belief view the whole set'}

Done

\item\label{rev1.9} \rev{pg 5 Example 2 'none of the two'??? two what?}

Corrected

\item\label{rev1.10} \rev{pg. 5 Example 3 line 6 'world vies'->'world view'}

Done

\item\label{rev1.11} \rev{pg. 6 Definition 3: 'we define an' -> 'we define a'
Two lines after, 'was replaced by replaced' -> 'was replaced by' and
'in other words'-> 'intuitively'.}

Done

\item\label{rev1.12} \rev{pg. 7 fifth bullet: how can $W^t$, i.e., a set of sets, imply something or not?
This is not defined.}

A set of sets is a belief view as defined in section~\ref{sec:g91}.
This section also defines entailment for this class of belief views.
A footnote has been added here recalling this fact.

\rev{Below: 'A belief model just captures...' please explain.}

Done

\item\label{rev1.13} \rev{pg. 8 'Technically, this make it simpler' it would be 'makes', but what
do you refer to? Please rephrase the sentence.}

\item\label{rev1.14} \rev{pg. 8 'for all HT-interpretation' -> 'for all HT-interpretations'}

Done

\item\label{rev1.15} \rev{Proposition 1 again the problem of $W^t$ implies}

See~\ref{rev1.12}

\item\label{rev1.16} \rev{pg. 9 'stablish' -> 'establish'}

Done

\item\label{rev1.17} \rev{pg. 9 Property 2: 'and $\Pi$ has no' -> 'or $\Pi$ has no'}

This changes is not correct. It has been rewritten as
``either~$\Pi$ has a unique $\sS$-world view $\wv=\SM[\Pi] \neq \emptyset$ or both $\SM[\Pi]=\emptyset$ and $\Pi$ has no $\sS$-world view at all.'' for further clarification.

\item\label{rev1.18} \rev{pg. 9 Example 7: 'Note that using Theorem 4'-> 'Note that, by using Theorem 4,'}

Done

\item\label{rev1.19} \rev{pg. 10 'Furthermore, the following ...' please reformulate this awkward sentence}

Done

\item\label{rev1.20} \rev{pg. 11 'for any semantics that satisfy' -> 'for any semantics that satisfies'}

Done

\item\label{rev1.21} \rev{pg. 11 'coincides with the for the class' ???}

Corrected

\item\label{rev1.22} \rev{pg. 11 Theorem 6 'that satisfies' -> 'that satisfy'}

Done

\item\label{rev1.23} \rev{pg 13 Conclusions 'a desirable property for epistemic
logic programs that. among previous semantics' ???}

Corrected

\end{enumerate}

%%%%%%%%%%%%%%%%%%%%%%%%%%%%%%%%%%%%%%%%%%%%%%%%%%%%%%%%%%%%%%%%%%%%%%%%
\subsection*{\bf Reviewer 2}
\begin{enumerate}[label=R2.\arabic*]
\item\label{rev2.1} \rev{The paper is well written, even though it requires careful reading to take care of simple typos.}

Thanks, the typos have been revised.

\item\label{rev2.2} \rev{The paper (justifiably) talks about similarities between splitting of logic programs by Lifschitz and Turner and epistemic splitting.
But, since the formalism here is that of autoepistemic logic, one can also draw parallels with paper by Gelfond
"On stratified autoepistemic theories" (1987) in which negation as failure was first interpreted in terms of autoepistemic logic
and "On consistency and completeness of autoepistemic theories" by Gelfond and Przymusinska, 
1991 Fundamenta Informaticae 16(1):59-92 which contains a variant of splitting set theorem for autoepistemic logic.}

Thank you for pointing out these works. Appropriate references have been made and Corollary~\ref{cor:stratified.Moore}, which is complementary to Theorem 4 in Gelfond1987,  has been added.

\end{enumerate}

\newpage
%%%%%%%%%%%%%%%%%%%%%%%%%%%%%%%%%%%%%%%%%%%%%%%%%%%%%%%%%%%%%%%%%%%%%%%%
\subsection*{\bf Reviewer 3}
\begin{enumerate}[label=R2.\arabic*]
\item\label{rev3.1} \rev{as author said in the conclusion, there is not a formal comparison of the FAEEL semantics with other existing semantics of the epistemic specification so far, which makes the contribution of this paper incomplete and feeblish.}

Section~\ref{sec:related} has now been renamed as related work and all the semantics are compared in terms of the properties they satisfy.
Recall also that the paper formally prove that for epistemically tight programs FAEEL and G91 coincide.

\item\label{rev3.2} \rev{In the section 2, the satisfaction of $\neg$ operator is not given (Maybe it is easy to understand for most person in this field, but that should appear in a self-containd paper).}

Note that, in the background, $\neg$ is not considered as a connective. Instead, $\neg\varphi$ is defined as an abbreviation for the formula $\varphi \to \bot$.
The satisfaction of $\neg\varphi$ can be obtained by replacing it by its definition.
A proposition stating this has been added.
\end{enumerate}

\newpage
\pagestyle{plain}
\setcounter{page}{1}
\fi
%%%%%%%%%%%%%%%%%%%%%%%%%%%%%%%%%%%%%%%%%%%%%%%%%%%%%%%%%%%%%%%%%%%

\label{firstpage}

\maketitle

\begin{abstract}
In a recent line of research, two familiar concepts from logic programming semantics (unfounded sets and splitting) were extrapolated to the case of epistemic logic programs.
The property of \emph{epistemic splitting} provides a natural and modular way to understand  programs without epistemic cycles but, surprisingly, was only fulfilled by Gelfond's original semantics (G91), among the many proposals in the literature.
On the other hand, G91 may suffer from a kind of self-supported, unfounded derivations when epistemic cycles come into play.
Recently, the absence of these derivations was also formalised as a property of epistemic semantics called \emph{foundedness}.
Moreover, a first semantics proved to satisfy foundedness was also proposed, the so-called \emph{Founded Autoepistemic Equilibrium Logic} (FAEEL).
In this paper, we prove that FAEEL also satisfies the epistemic splitting property something that, together with foundedness, was not fulfilled by any other approach up to date.
To prove this result, we provide an alternative characterisation of FAEEL as a combination of G91 with a simpler logic we called \emph{Founded Epistemic Equilibrium Logic}~(FEEL), 
% This leaves us only with the task to prove that FEEL satisfies epistemic splitting
% and we obtain thaf FAEEL satisfies epistemic splitting as corollary of this result plus the known fact that Gelfond's semantics does.
which is somehow an extrapolation of the stable model semantics to the modal logic S5.
\end{abstract}

\begin{keywords}
Answer Set Programming,
Epistemic Specifications,
Epistemic Logic Programs
\end{keywords}

\section{Introduction}

The language of \emph{epistemic specifications}~\cite{Gelfond91} or \emph{epistemic logic programs} extends disjunctive logic programs, under the \emph{stable model}~\cite{GL88} semantics, with modal constructs called \emph{subjective literals}.
These constructs allow to check whether a regular or \emph{objective literal} $l$ is true in \emph{every} stable model (written $\bK l$) or in \emph{some} stable model (written $\bM l$) of the program.
For instance, the rule:
\begin{eqnarray}
a \leftarrow \neg\!\bK b \label{f:loop1}
\end{eqnarray}
means that $a$ should be derived whenever we cannot prove that all the stable models contain~$b$.
%
% Subjective literals have been incorporated as an extension of the Answer Set Programming (ASP) paradigm~\cite{MT99,Nie99} in different solvers and implementations -- see~\cite{LK18} for a recent survey.
%
The definition of a ``satisfactory'' semantics for epistemic specifications has proven to be a \mbox{non-trivial} enterprise with a long list of alternative semantics~\cite{Gelfond91,WangZ05,Truszczynski11,Gelfond11,CerroHS15,Kahl15,ShenE17,CabalarFF2019faeel}.
The main difficulty arises because subjective literals query the set of stable models but, at the same time, occur in rules that determine those stable models.
As an example, the program consisting of:
\begin{eqnarray}
b \leftarrow \neg\!\textbf{}\bK a \label{f:loop2}
\end{eqnarray}
plus the above rule~\eqref{f:loop1} has now two rules defining atoms $a$ and $b$ in terms of the presence of those same atoms in all the stable models.
To solve this kind of cyclic interdependence, the original semantics by~\citeN{Gelfond91} (G91) considered different alternative \emph{world views} or sets of stable models.
In the case of program \eqref{f:loop1}-\eqref{f:loop2}, G91 yields two alternative world views\review{\footnote{\reviewcol For the sake of readability, sets of propositional interpretations are embraced with $\sset{ \ }$ rather than $\set{\ }$.}}{rev1.4}, $[\{a\}]$ and $[\{b\}]$, each one containing a single stable model, and this is also the behaviour obtained in the remaining approaches developed later on.
As noted by~\cite{Truszczynski11}, the feature that made G91 unconvincing, though, was the generation of self-supported world views.
A prototypical example for this effect is the epistemic program consisting of the single rule:
\begin{eqnarray}
a \leftarrow \bK a \label{f:self}
\end{eqnarray}
whose world views under G91 are $[\emptyset]$ and $[\{a\}]$.
The latter is considered as counter-intuitive by all authors\footnote{This includes Gelfond himself, who proposed a new variant in~\cite{Gelfond11} motivated by this same example and further modified this variant later on in~\cite{Kahl15}.} because it relies on a self-supported derivation: $a$ is derived from $\bK a$ by rule~\eqref{f:self}, but the only way to obtain $\bK a$ is rule \eqref{f:self} itself.
Recently,~\citeN{CabalarFF2019faeel} proposed to characterise these unintended world views by  extending the notion of \emph{unfounded sets}~\cite{GelderRS91} from standard disjunctive logic programs~\cite{LeoneRS97} to the case of epistemic logic programs.
In that work, the authors also provided a new semantics, called \emph{Founded Autoepistemic Equilibrium Logic} (FAEEL), that fulfills that requirement.
In fact, FAEEL-world views are precisely those G91-world views that are founded, that is, those that do not admit any unfounded~set.

On the other hand,
it is obvious that programs without epistemic cycles (i.e. cycles involving epistemic literals) cannot have self-supported derivations.
In this sense, one could expect that proposals that tried to get rid of G91 self-supported derivations coincided with the latter, at least, for epistemically acyclic programs.
However,~\cite{LK18b} have recently pointed out that this is not the case: for instance, while in G91, (purely) epistemic constraints always remove world views, this does not hold in other semantics.
\citeN{Watson2000} and~\citeANP{Cabalar2018nmr}~\citeNN{Cabalar2018nmr,CabalarFF2019splitting} went a step farther defining a property called \emph{epistemic splitting} which, not only defines an intuitive behaviour for stratified epistemic specifications, but also extends the \review{splitting theorem, well-known for autoepistemic logic~\cite{DBLP:journals/fuin/GelfondP92} and standard logic programs~\cite{LifschitzT94}, to the case of epistemic logic programs.}{rev2.2}
For instance, if we consider a program consisting of rules~\eqref{f:loop1}-\eqref{f:loop2} plus
\begin{eqnarray}
c \leftarrow \bK a \label{f:splitting}
\end{eqnarray}
we may expect to obtain the world views $[\{a,c\}]$ and $[\{b\}]$ resulting from adding the atom~$c$ only to the belief sets of the world view that satisfies~$\bK a$.
This property is known to be satisfied by the G91 semantics, but surprisingly not for those that tried to correct its self-supported problem~\cite{Cabalar2018nmr,CabalarFF2019splitting}.

\review{The major contribution of this paper is the proof that FAEEL satisfies the epistemic splitting property as defined in~\cite{Cabalar2018nmr,CabalarFF2019splitting}.
Joining this result with the already known fact that this semantics also satisfies the foundedness property shows that FAEEL is a solid candidate to serve as a semantics of epistemic logic programs.
A second contribution of this paper is the introduction of a logic that we call \emph{Founded Epistemic Equilibrium Logic}~(FEEL) and which can be intuitively seen as the combination of the Equilibrium Logic with the modal logic~S5.
For the sake of comparative, FAEEL corresponds to the combination of the Equilibrium Logic with the Moore's Autoepistemic Logic~(AEL;~\citeNP{Moore85}).
In this sense, FEEL is the combination of a non-monotonic logic with a monotonic one, while FAEEL is the combination of two non-monotonic logics, a fact that makes FEEL much easier to study.
This bring us to the third contribution of the paper: FAEEL world views can be precisely characterised as those G91 world views that are at the same time FEEL world views. This allows us to study FAEEL properties by studying them independently in FEEL and G91 and then combining their results.
This is precisely the methodology used in proving the epistemic splitting theorem for FAEEL.}{rev1.2}

The rest of the paper is organised as follows.
Section~\ref{sec:background}  revisits the background knowledge about equilibrium logic, epistemic specifications, the epistemic splitting property and FAEEL necessary for the rest of the paper.
Section~\ref{sec:feel} introduces FEEL and studies the relation between this logic and FAEEL.
In Section~\ref{sec:splitting.proof}, we study the epistemic splitting property in FEEL and FAEEL and, in Section~\ref{sec:related},
we discuss other existent approaches to epistemic logic programs.
Finally, Section~\ref{sec:conclusions} concludes the paper.

\section{Background}
\label{sec:background}

We start by recalling the basic definitions needed for the rest of the paper.
Given a set of \emph{atoms}~$\at$, an (epistemic) formula is defined according to the following grammar:
\[
\fF ::= \bot \mid a \mid \fF_1 \wedge \fF_2 \mid \fF_1 \vee \fF_2 \mid \fF_1 \to \fF_2 \mid \bL \varphi
\qquad\text{ for any atom } a \in \at. 
\]
In our context, the epistemic reading of $\bL \psi$ is that ``$\psi$ is one of the agent's beliefs.''
Thus, a formula $\varphi$ is said to be \emph{subjective} if all its atom occurrences (having at least one) are in the scope of~$\bL$.
Analogously, $\varphi$ is said to be \emph{objective} if $\bL$ does not occur in $\varphi$.
For instance, $\neg\!\bL a \vee \bL b$ is subjective, $\neg a \vee b$ is objective and $\neg a \vee \bL b$ none of the two.
\review{Given a formula $\varphi$, by $\Atoms(\varphi)$ we denote the set of all atoms occurring in $\varphi$.
For instance, $\Atoms(\neg a \vee \bL b) = \set{a,b}$.}{rev1.6}
As usual
we define the following derived operators:
$\fF \leftrightarrow \fG \eqdef (\fF \to \fG) \wedge (\fG \to \fF)$,
\ $(\fF \leftarrow \fG) \eqdef (\fG \to \fF)$,
\ $\neg \fF \eqdef (\fF \to \bot)$
and
\ $\top \eqdef \neg \bot$.
An \emph{(epistemic) theory} is a (possibly infinite) set of formulas as defined above
and an \emph{objective theory} is a theory whose formulas are objective.
We write $\Atoms(\varphi)$ to represent the set of atoms occurring in any formula~$\varphi$ and $\Atoms(\Gamma)$ to represent the set of atoms occurring in any theory~$\Gamma$.
Recall that~\cite{Gelfond91} included a second subjective operator~$\M$ such that $\M l$ is readed as ``the agent believes that $l$ is possible.''
In this paper, we assume here that $\M \varphi$ is just an abbreviation\footnote{Several interpretations of $\M$ are possible in the logics considered in this paper depending on the level of foundedness that it is expected to satisfy.
We limit ourselves here to the simplest interpretation of $\M$, leaving other interpretations for a more detailed discussion in the future.} for $\neg\K \neg \varphi$.

\subsection{Equilibrium Logic and the Stable Models Semantics}
\label{sec:ht}

A \emph{propositional interpretation} $T$ is a set of atoms $T \subseteq \at$. 
%For readability sake, we will write classical interpretations as underlined strings so that, for instance, given $\at=\{a,b,c\}$, the strings $\str{abc}$ and $\str{ac}$ stand for $\{abc\}$ and $\{a,c\}$; $\stre$ stands for the empty interpretation.
%
We write $T \models \varphi$ to represent that $T$ classically satisfies formula $\varphi$. 
An \emph{\htinterpretation} is a pair $\tuple{H,T}$ (respectively called ``here'' and ``there'') of propositional interpretations such that $H \subseteq T \subseteq \at$; it is said to be \emph{total} when $H=T$.
We write $\tuple{H,T} \models \varphi$ to represent that $\tuple{H,T}$ \emph{satisfies} a formula~$\varphi$ under the recursive conditions:
\begin{itemize}[ topsep=2pt]
\item $\tuple{H,T} \not\models \bot$ 
\item $\tuple{H,T} \models p$ iff $p \in H$ 
\item $\tuple{H,T} \models \varphi \wedge \psi$ iff $\tuple{H,T} \models \varphi$ and $\tuple{H,T} \models \psi$
\item $\tuple{H,T} \models \varphi \vee \psi$ iff $\tuple{H,T} \models \varphi$ or $\tuple{H,T} \models \psi$
\item $\tuple{H,T} \models \varphi \to \psi$ iff both (i) $T \models \varphi \to \psi$ and (ii) $\tuple{H,T} \not\models \varphi$ or $\tuple{H,T} \models \psi$
\end{itemize}
As usual, we say that $\tuple{H,T}$ is a \emph{model} of a theory~$\Gamma$,
in symbols $\tuple{H,T} \models \Gamma$, iff $\tuple{H,T} \models \varphi$
for all $\varphi \in \Gamma$.
It is easy to see that $\tuple{T,T} \models \Gamma$ iff $T \models \Gamma$ classically.
For this reason, we will identify $\tuple{T,T}$ simply as $T$ and will use `$\models$' indistinctly.
% By $\CL[\Gamma]$ we denote the set of all classical models of $\Gamma$.
% %
Interpretation $\tuple{T,T}=T$ is a \emph{stable (or equilibrium) model} of a theory $\Gamma$ iff $T \models \Gamma$ and there is no $H\subset T$ such that $\tuple{H,T} \models \Gamma$.
We write $\SM[\Gamma]$ to stand for the set of all stable models of $\Gamma$.
% %
% Note that $\SM[\Gamma] \subseteq \CL[\Gamma]$ by definition.

%%%%%%%%%%%%%%%%%%%%%%%%%%%%%%%%%%%%%%%%%%%%%%
\subsection{G91 semantics for epistemic theories}
\label{sec:g91}
%%%%%%%%%%%%%%%%%%%%%%%%%%%%%%%%%%%%%%%%%%%%%%

To represent the agent's beliefs, we will use a set $\wv$ of propositional interpretations.
We call \emph{belief set} to each element $I \in \wv$ and \review{\emph{belief view} the whole set~$\wv$.}{rev1.8}
The difference between belief and knowledge is that the  former may not hold in the real world.
Thus, satisfaction of formulas will be defined with respect to an interpretation $I \subseteq \at$, possibly $I \not\in \wv$, that accounts for the real world: the pair $\kdint{I}{\wv}$ is called \emph{belief interpretation} (or interpretation in modal logic KD45).
Modal satisfaction is also written \mbox{$\kdint{I}{\wv} \models \varphi$} (ambiguity is removed by the interpretation on the left) and follows the conditions:
\begin{itemize}[ topsep=2pt]
\item $\kdint{I}{\wv} \not\models \bot$,
\item $\kdint{I}{\wv} \models a$ iff $a \in I$, for any atom $a \in \at$,
\item $\kdint{I}{\wv} \models \psi_1 \wedge \psi_2$ iff $\kdint{I}{\wv} \models \psi_1$ and $\kdint{I}{\wv} \models \psi_2$,
\item $\kdint{I}{\wv} \models \psi_1 \vee \psi_2$ iff $\kdint{I}{\wv} \models \psi_1$ or $\kdint{I}{\wv} \models \psi_2$,
\item $\kdint{I}{\wv} \models \psi_1 \to \psi_2$ iff $\kdint{I}{\wv} \not\models \psi_1$ or $\kdint{I}{\wv} \models \psi_2$, and
\item $\kdint{I}{\wv} \models \bL \psi$ iff $\kdint{J}{\wv} \models \psi$ for all $J \in \wv$.
\end{itemize}
Notice that implication here is classical, that is, $\varphi \to \psi$ is equivalent to $\neg \varphi \vee \psi$ in this context.
A belief interpretation $\kdint{I}{\wv}$ is a \emph{belief model} of $\Gamma$ iff $\kdint{J}{\wv} \models \varphi$ for all $\varphi \in \Gamma$ and all $J \in \wv \cup \set{I}$.
We say that $\wv$ is an  \emph{epistemic model} of $\Gamma$, and abbreviate this as $\wv \models \Gamma$, iff $\kdint{J}{\wv} \models \varphi$ for all $\varphi \in \Gamma$ and all $J \in \wv$.
%
% If this is the case, then the belief interpretation $\kdint{I}{\wv}$ with a distinguished $I$ is further a \emph{belief model} of $\Gamma$ if, additionally, $\kdint{I}{\wv} \models \varphi$ for all $\varphi \in \Gamma$. 
%
Belief models defined in this way correspond to modal logic KD45 whereas epistemic models correspond to S5.
% Epistemic models defined in this way correspond to modal logic S5, whereas belief models correspond to KD45.
% $\bL \varphi \to \varphi$ (anything believed holds in the real world) so $\bL$ represents knowledge under these models.

\begin{example}\label{ex:s5models}
Take the theory $\Gamma_{\ref{f:loop1}}=\{\neg\!\bL b \to a\}$ corresponding to rule \eqref{f:loop1}.
An epistemic model $\wv \models \Gamma_{\ref{f:loop1}}$ must satisfy: 
 $\tuple{\wv,J} \models \bL b$ or $\tuple{\wv,J} \models a$, for all $J \in \wv$.
We get three epistemic models from $\bL b$, $\sset{\{b\}}$, $\sset{\{a,b\}}$, and  $\sset{\{b\},\{a,b\}}$ and the rest of cases must force $a$ true, so we also get $\sset{\{a\}}$ and $\sset{\{a\},\{a,b\}}$.
In other words, $\Gamma_{\ref{f:loop1}}$ has the same epistemic models as $\bL b \vee \bL a$.
\qed
\end{example}
Note that rule \eqref{f:loop1} alone did not seem to provide any reason for believing $b$, but we got three epistemic models above satisfying $\bL b$.
Thus, we will be interested only in some epistemic models (that as usual we will call \emph{world views}) that minimize the agent's beliefs in some sense.
To define such a minimisation we rely on the following syntactic transformation that extend the one given by~\cite{Truszczynski11} by stating a explicit signature in which it is applied.
The explicit signature will be useful later on to define the epistemic splitting property.

\begin{definition}[Subjective reduct]
The \emph{subjective reduct} of a theory $\Gamma$ with respect to a set of belief views~$\wv$ and a signature $U \subseteq \At$,
also written $\Gamma^\wv_U$, is obtained by replacing each maximal subjective formula of the form $\bK \varphi$ with $\Atoms(\varphi) \subseteq U$ by $\top$ if $\wv \models \bK \varphi$; or by $\bot$ otherwise.
When $U=\text{\rm \At}$ we just write $\Gamma^\wv$.\qed
\end{definition}

Finally, we impose a fixpoint condition where each belief set $I \in \wv$ is required to be a stable model of the reduct, \review{obtaining thus the G91 semantics.}{rev1.5}

\begin{definition}[G91 world view]
A belief view~$\wv$ is called a
\emph{G91-world view} of $\Gamma$ if and only if it satisfies
\mbox{$\wv=\SM[\Gamma^\wv]$}.\qed
\end{definition}

\begin{examplecont}{ex:s5models}\label{ex:s5models2}
Take any $\wv$ such that $\wv \models \bL b$.
Then, $\Gamma_{\ref{f:loop1}}^\wv = \{\bot \to a\}$ with $\SM[\Gamma_{\ref{f:loop1}}^\wv]=[\emptyset]$.
\review{The empty set does not satisfy $\bL b$ so $\wv$ cannot be a G91-world view of~$\Gamma_{\ref{f:loop1}}$.}{rev1.9}
If $\wv \not\models \bL b$ instead, we get $\Gamma_{\ref{f:loop1}}^\wv = \{\top \to a\}$,
whose unique stable model is $\{a\}$.
As a result, $\wv=\sset{\{a\}}$ is  the unique G91-world view.
\qed
\end{examplecont}

\begin{examplecont}{ex:s5models2}\label{ex:s5models3}
Let now $\Gamma_{\ref{f:loop2}}=\{\neg\!\bL b \to a\,,\, \neg\!\bL a \to b\}$ corresponding to rules \mbox{\eqref{f:loop1}-\eqref{f:loop2}}.
Take any $\wv$ such that $\wv \models \neg\!\bL a \wedge \bL b$.
Then, $\Gamma_{\ref{f:loop1}}^\wv = \{\bot \to a \,,\, \top \to b\}$ and we have that $\SM[\Gamma_{\ref{f:loop1}}^\wv]=[\set{b}]$.
Since $\wv=[\set{b}]$ satisfies $\neg\!\bL a \wedge \bL b$, this is a G91-world view of~$\Gamma_{\ref{f:loop2}}$.
If $\wv \models \bL a \wedge \neg\bL b$ instead, we get $\Gamma_{\ref{f:loop1}}^\wv = \{\top \to a\,,\, \bot \to b\}$,
whose unique stable model is $\{a\}$.
As a result, $\wv=\sset{\{a\}}$ is also the other G91-world view of~$\Gamma_{\ref{f:loop2}}$.
To see that there is not any other \review{world views}{rev1.9}, note that $\wv \models \neg\!\bL a \wedge \neg\!\bL b$
implies that
$\Gamma_{\ref{f:loop2}}^\wv = \{\top \to a \,,\, \top \to b\}$ and $\SM[\Gamma_{\ref{f:loop2}}^\wv]=[\set{a,b}]$.
So this cannot a G91-world view.
Similar, it can be checked that no world view can satisfy $\bL a \wedge \bL b$.
\qed
\end{examplecont}

\begin{example}\label{ex:self-supporting.rule}
Take now the theory $\Gamma_{\ref{f:self}}=\{\bL a \to a\}$ corresponding to rule \eqref{f:self}.
If $\wv \models \bL a$ we get $\Gamma_{\ref{f:self}}^\wv = \set{ \top \to a }$ and $\SM[\Gamma_{\ref{f:self}}^\wv]=\{a\}$ so $\wv=\sset{\{a\}}$ is a G91-world view.
If $\wv \not\models \bL a$, the reduct becomes $\Gamma_{\ref{f:self}}^\wv = \set{ \bot \to a }$, a classical tautology with unique stable model $\emptyset$.
As a result, $\wv=\sset{\emptyset}$ is the second G91-world view of this theory.
\qed
\end{example}

\subsection{Epistemic Specifications and Epistemic Splitting}
%%%%%%%%%%%%%%%%%%%%%%%%%%%%%%%%%%%%%%%%%%%%%%%%%%%%%%%%%%%%%%%%%%%%%%%%%%%%%%%%%%%%%%

In this section, we recall the formal definition of the Epistemic Splitting property.
For the motivation of the interest of this property we refer to~\cite{Cabalar2018nmr,CabalarFF2019splitting}.
Let start by introducing a particular class of theories that correspond to the syntax of epistemic specifications or \emph{(epistemic logic) programs}.
Given a set of atoms $S \subseteq \At$, by
\mbox{$\neg S \eqdef \setm{ \neg a }{ a \in S }$}
and
$\neg\neg S \eqdef \setm{ \neg\neg a }{ a \in S }$
we respectively denote the set resulting of preapend one or two occurrences of the default negation operator $\neg$ to every atom in $S$.
An \emph{objective literal} is either an atom or a truth constant\footnote{For a simpler description of program transformations, we allow truth constants with their usual meaning.},
that is \mbox{$a \in \At \cup \{\top,\bot\}$}, or the result of preapend one or two default negation,
\mbox{$\neg a$}.
By $\Litobj \eqdef \At \cup \neg \At \cup \neg\neg \at \cup \set{\top,\bot,\neg\top,\neg\bot,\neg\neg\top,\neg\neg\bot}$
we denote the set of all objective literals.
A \emph{subjective literal}  is any expression of the form $\K l$, $\neg \K l$ or $\neg\neg \K l$ with $l \in \Litobj$ any objective literal.
A \emph{literal} is either an objective or subjective literal.

A \emph{rule} $r$ is an implication of the form:
\begin{gather}
a_1 \vee \dots \vee a_n \leftarrow L_1 \wedge \dots \wedge L_m
	\label{eq:rule}
\end{gather}
with $n\geq 0$ and $m\geq 0$, where each $a_i \in \At$ is an atom and each $L_j$ a literal.
The left hand disjunction of \eqref{eq:rule} is called the rule \emph{head} and abbreviated as $\Head(r)$.
When $n=0$, it corresponds to $\bot$ and $r$ is called a \emph{constraint}.
The right hand side of \eqref{eq:rule} is called the rule \emph{body} and abbreviated as $\Body(r)$.
As usual,
we define $\Bodyp(r)$ and $\Bodyn(r)$ as the conjunction of all positive and negative literals in $\Body(r)$, respectively, so that $\Body(r) \equiv \Bodyp(r) \wedge \Bodyn(r)$.
We further define $\Bodyr(r)$ and $\Bodym(r)$ as the conjunction of all objective and subjective literals in $\Body(r)$, respectively, so that $\Body(r) \equiv \Bodyr(r) \wedge \Bodym(r)$.
We also define $\Body^x_y(r) \eqdef \Body^x(r) \cap \Body_y(r)$
with $x \in \set{+,-}$ and $y \in \set{\mathit{obj},\mathit{sub}}$.
By abuse of notation, we will also use sometimes $\Body^x$, $\Body_y$ and $\Body^x_y$ as the set of literals occurring in those formulas.
When $m=0$, the body corresponds to $\top$ and $r$ is called a \emph{fact} (in this case, the body and the arrow symbol are usually omitted).
A rule is called \emph{objective} if all literals occurring in it are objective.
A \emph{program} $\Pi$ is a (possibly infinite) set of rules
and an \emph{objective program} is a program where all its rules are objective.
We are now ready to recall the epistemic splitting property:

\begin{definition}[Epistemic splitting set]\label{def:splitting.set}
A set of atoms \mbox{$U \subseteq \At$} is said to be an \emph{epistemic splitting set} of a program $\Pi$ if for any rule $r$ in $\Pi$ one of the following conditions hold
\begin{enumerate}[ label=(\roman*), leftmargin=18pt]
\item $\Atoms(r) \subseteq U$,
    \label{item:1:def:splitting}
% \item $\Atoms(r) \cap U = \emptyset$, or
%     \label{item:2:def:splitting}
\item 
% $\Atoms(\Bodym(r)) \subseteq U$\\
% and
$(\Atoms(\Bodyr(r) \cup \Head(r)) ) \cap U = \emptyset$%
\label{item:3:def:splitting}%
\end{enumerate}
We define \review{a \emph{splitting}}{rev1.11} of 
$\Pi$ as a pair $\tuple{B_U(\Pi),T_U(\Pi)}$ satisfying $B_U(\Pi) \cap T_U(\Pi) = \emptyset$, $B_U(\Pi) \cup T_U(\Pi) = \Pi$, all rules in $B_U(\Pi)$ satisfy (i) and all rules in $T_U(\Pi)$ satisfy (ii).
\qed
\end{definition}

\noindent 
With respect to the original definition of splitting set, we can see that the condition for the top program $\Atoms(\Head(r)) \cap U = \emptyset$
was replaced by the new condition~(ii),
which
% This conditions allow rules in the top part that may only have atoms in $U$ inside a subjective literal.
intuitively means that the top program may only refer to atoms~$U$ in the bottom through epistemic operators.
Another observation is that the definition of $B_U(\Pi)$ and $T_U(\Pi)$ is kept \mbox{non-deterministic} in the sense that some rules can be arbitrarily included in one set or the other.
These rules correspond to subjective constraints on atoms in~$U$, since these are the only cases that may satisfy conditions~(i) and~(ii) simultaneously.
Then, the idea is similar as in splitting a regular program: first we compute the world views of the bottom program $B_U(\Pi)$ and for each one we compute the world views of the top program after simplifying in it the subjective literals fixed by the bottom part.
Formally, given an epistemic splitting set~$U$ for a program~$\Pi$ and belief view~$\wv$, we define $E_U(\Pi,\wv) \eqdef T_U(\Pi)^\wv_U$, that is, we make the subjective reduct of the top with respect to $\wv$ and signature $U$.

\begin{definition}\label{def:splitting}
Given a semantics~$\sS$,
a pair $\tuple{\wv_b,\wv_t}$ is said to be a \emph{$\sS$-solution} of $\Pi$ with respect to an epistemic splitting set $U$ if $\wv_b$ is a $\sS$-world view of $B_U(\Pi)$ and $\wv_t$ is a $\sS$-world view of $E_U(\Pi,\wv_b)$.\qed
\end{definition}

Notice that this definition depends on a particular semantics~$\sS$ in the sense that each alternative semantics for epistemic specifications will define its own solutions for a given $U$ and~$\Pi$.
In particular, in this paper, we will consider five instantiations of this Definition~\ref{def:splitting} with semantics $\sS \in \set{G91, FAEEL, FEEL, EEL, AEEL }$.
Besides the already mentioned G91, Founded Autoepistemic EEL (FAEEL) and Founded EEL (FEEL) semantics,\review{\footnote{\reviewcol To avoid possible confusions between FAEEL, FEEL, AEEL and EEL, we will sometimes write them as Founded Autoepistemic EEL, Founded EEL, Autoepistemic EEL and EEL respectively, when they occur in the same sentence.}}{rev1.3} we will also consider the EEL and AEEL semantics from~\cite{CerroHS15}.

\begin{examplecont}{ex:s5models3}\label{ex:s5models4}
Back to our example,
let now $\Pi_{\ref{f:splitting}}$ be the program consisting of rules \mbox{\eqref{f:loop1}-\eqref{f:loop2} and~\eqref{f:splitting}}.
Then, we can see that $U=\set{a,b}$ is an epistemic splitting set of~$\Pi_{\ref{f:splitting}}$ and that it satisfies
\mbox{$B_U(\Pi_{\ref{f:splitting}}) = \set{\eqref{f:loop1}-\eqref{f:loop2}}$}
and
\mbox{$T_U(\Pi_{\ref{f:splitting}}) = \set{\eqref{f:splitting}}$}.
Furthermore, we have already seen that $B_U(\Pi_{\ref{f:splitting}})$ corresponds to the theory~$\Gamma_{\ref{f:loop2}}$
which has the following two G91-world views: $\sset{\set{a}}$ and $\sset{\set{b}}$.
Then, we can see that 
\mbox{$E_U(\Pi_{\ref{f:splitting}},{\sset{\set{a}}}) = \set{ c \leftarrow \top }$}
has a unique G91-world view $\sset{\set{c}}$
and that
\mbox{$E_U(\Pi_{\ref{f:splitting}},{\sset{\set{b}}}) = \set{ c \leftarrow \bot }$}
has the unique G91-world view $\sset{\emptyset}$.
As a result, we have two G91-solutions of $\Pi_{\ref{f:splitting}}$ with respect to $\set{a,b}$:
$\tuple{\sset{\set{a}},\sset{\set{c}}}$
and
$\tuple{\sset{\set{b}},\sset{\emptyset}}$.
It is also easy to check that 
\mbox{$B_U(\Pi_{\ref{f:splitting}})$}
has two G91-world views, $\sset{\set{a,c}}$ and $\sset{\set{b}}$ that can be obtained by composing the two above solutions.
\qed
\end{examplecont}

In the general case, the world views for the global program are reconstructed by the following operation:
$$\wv_b \sqcup \wv_t \ \  = \ \ \setm{I_b \cup I_t}{ I_b \in \wv_b  \text{ and } I_t \in \wv_t  }$$
(remember that both the bottom and the top may produce multiple world views, depending on the program and the semantics we choose).
For any set of atoms~$U \subseteq \at$ and belief view~$\wv$, we also define the restriction of $\wv$ to $U$ as $\restr{\wv}{U} \eqdef \setm{ I \cap U }{ I \in \wv }$.
Furthermore, we also define the complement of a set of atoms as $\overline{U} \eqdef \at \setminus U$.

\begin{property}[Epistemic splitting]\label{propy:epistemic.splitting}
A semantics~$\sS$ satisfies \emph{epistemic splitting} if for any epistemic splitting set $U$ of any program $\Pi$: $\wv$ is a $\sS$-world view of $\Pi$ iff there is a $\sS$-solution $\tuple{\wv_b,\wv_t}$ of $\Pi$ with respect to $U$ such that $\wv=\wv_b \sqcup \wv_t$. \qed
\end{property}

\begin{theorem}\label{thm:epistemic.splitting.g91}
Semantics G91 satisfies epistemic splitting.
Furthermore, if $\wv$ is a G91-world view of some program~$\Pi$ with respect to some splitting set~$U$,
then $\tuple{\restr{\wv}{U},\restr{\wv}{\overline{U}}}$ is a G91-solution of~$\Pi$ and it satisfies that $\wv = \restr{\wv}{U} \sqcup \restr{\wv}{\overline{U}}$.\qed
\end{theorem}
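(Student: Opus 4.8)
The statement has two parts. The first is the "if and only if" in Property~\ref{propy:epistemic.splitting} instantiated for G91; the second is the stronger claim that for a G91-world view $\wv$, the canonical decomposition $\tuple{\restr{\wv}{U},\restr{\wv}{\overline{U}}}$ is itself a G91-solution and recomposes to $\wv$. I would prove the second (stronger) direction first, since it gives the "only if" direction of the splitting property for free, and then handle the "if" direction separately.

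\textbf{Soundness direction (world view $\Rightarrow$ solution).} Let $\wv$ be a G91-world view of $\Pi$, so $\wv = \SM[\Pi^\wv]$. Fix a splitting $\tuple{B_U(\Pi),T_U(\Pi)}$. The key observation is about how the subjective reduct interacts with the splitting. Because every rule in $B_U(\Pi)$ satisfies condition~(i), i.e. $\Atoms(r)\subseteq U$, every maximal subjective formula $\bK\varphi$ occurring in $B_U(\Pi)$ has $\Atoms(\varphi)\subseteq U$; hence $B_U(\Pi)^\wv = B_U(\Pi)^\wv_U$ and moreover this reduct depends only on $\restr{\wv}{U}$ (since whether $\wv\models\bK\varphi$ for a subjective formula over atoms in $U$ depends only on $\restr{\wv}{U}$ — this should be isolated as a small lemma: $\wv\models\bK\varphi$ iff $\restr{\wv}{\Atoms(\varphi)}\models\bK\varphi$ when $\varphi$ is subjective). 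Similarly, $E_U(\Pi,\restr{\wv}{U}) = T_U(\Pi)^{\wv}_U$ reduces away exactly the subjective literals over $U$, leaving the subjective literals over $\overline{U}$ untouched, so $\Pi^\wv = B_U(\Pi)^\wv \cup E_U(\Pi,\restr{\wv}{U})^\wv$, and the latter is an objective-over-$U$ / arbitrary-over-$\overline U$ splitting of the objective-plus-residual-epistemic program $\Pi^\wv$ in the sense of the classical Lifschitz–Turner splitting theorem applied after the further reduct by $\wv$. Concretely: apply the classical splitting theorem for equilibrium logic / stable models (Lifschitz–Turner, which holds for HT because the reduct commutes appropriately) to $\Pi^\wv$ with the ordinary splitting set $U$; this needs that no rule of $E_U(\Pi,\wv)$ has a head atom in $U$, which follows from condition~(ii). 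This yields $\SM[\Pi^\wv] = \SM[B_U(\Pi)^\wv] \sqcup \SM[(E_U(\Pi,\restr{\wv}{U}))^{\,\wv'}]$ where $\wv' = \restr{\wv}{U}$ feeds the stable models of the bottom into the top. Then checking the fixpoint conditions: $\restr{\wv}{U} = \SM[B_U(\Pi)^{\restr{\wv}{U}}] = \SM[B_U(\Pi)^\wv]$ makes $\restr{\wv}{U}$ a G91-world view of $B_U(\Pi)$; and $\restr{\wv}{\overline U} = \SM[E_U(\Pi,\restr{\wv}{U})^{\restr{\wv}{\overline U}}]$ makes $\restr{\wv}{\overline U}$ a G91-world view of $E_U(\Pi,\restr{\wv}{U})$; the recomposition identity $\wv = \restr{\wv}{U}\sqcup\restr{\wv}{\overline U}$ is just the classical splitting conclusion restated.

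\textbf{Completeness direction (solution $\Rightarrow$ world view).} Conversely, let $\tuple{\wv_b,\wv_t}$ be a G91-solution, so $\wv_b = \SM[B_U(\Pi)^{\wv_b}]$ with all atoms in $\wv_b$ inside $U$ (forced by condition~(i): $B_U(\Pi)$ only mentions atoms in $U$), and $\wv_t = \SM[E_U(\Pi,\wv_b)^{\wv_t}]$. Set $\wv \eqdef \wv_b \sqcup \wv_t$; note $\restr{\wv}{U} = \wv_b$ and $\restr{\wv}{\overline U} = \wv_t$ provided $\wv_t$ introduces no atoms of $U$ — which again follows from condition~(ii) on the top rules (their heads and objective bodies avoid $U$, so their stable models contain no $U$-atoms after the reduct). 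One then shows $\wv = \SM[\Pi^\wv]$ by: (a) $\Pi^\wv = B_U(\Pi)^{\wv_b} \cup E_U(\Pi,\wv_b)^{\wv_t}$ using the locality lemma for subjective reducts (whether $\wv \models \bK\varphi$ depends only on $\restr{\wv}{U}=\wv_b$ or $\restr{\wv}{\overline U}=\wv_t$ according to whether $\Atoms(\varphi)\subseteq U$ — here condition~(ii) guarantees the top's subjective literals over $U$ are already resolved into $E_U(\Pi,\wv_b)$, and the remaining subjective literals of $\Pi$ are over $\overline U$); and (b) the classical splitting theorem in the direction "composing a bottom stable-model family with a top stable-model family yields all stable models of the union." Finally re-apply to conclude $\wv$ is a G91-world view.

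\textbf{Main obstacle.} The delicate point is bookkeeping the two-layer reduct: $\Pi$ is first reduced subjectively by $\wv$ to get $\Pi^\wv$, and only then do we want to apply the classical (objective) splitting theorem, but the classical theorem is about stable models (the inner HT-reduct), so we are really nesting three constructions. I must verify carefully that (1) the subjective reduct of $\Pi$ by $\wv$ decomposes as the subjective reduct of $B_U(\Pi)$ by $\restr{\wv}{U}$ together with $E_U(\Pi,\restr{\wv}{U})$ further subjectively-reduced by $\restr{\wv}{\overline U}$ — this is where condition~(ii) ("top refers to bottom atoms only through $\bK$") does the real work, ensuring no objective atom of $U$ leaks into the top and that all subjective queries about $U$-atoms in the top get frozen by $\wv_b$; and (2) the resulting program is a legitimate classical splitting of $\Pi^\wv$ with ordinary splitting set $U$, for which I need that $T_U(\Pi)$-rules have no head in $U$, again condition~(ii). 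Once the decomposition of $\Pi^\wv$ is pinned down, the rest is an application of the Lifschitz–Turner splitting theorem plus unwinding the fixpoint definitions, which is routine. I would package the locality-of-subjective-reduct fact as an explicit auxiliary lemma before the proof to keep the argument clean.
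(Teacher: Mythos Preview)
The paper does not prove Theorem~\ref{thm:epistemic.splitting.g91} itself; it simply cites the Main Theorem of~\cite{CabalarFF2019splitting} and remarks that the second clause is implicit in that proof. So there is no in-paper argument to compare against directly.

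That said, your plan is correct and structurally mirrors what the paper does for the analogous FEEL result (Theorem~\ref{thm:epistemic.splitting.feel} via Propositions~\ref{prop:splitting.aux} and~\ref{prop:splitting.aux2}): reduce to a union of programs over disjoint signatures and factor. Two small points are worth tightening. First, where you display the residual reduct as $(E_U(\Pi,\restr{\wv}{U}))^{\wv'}$ with $\wv'=\restr{\wv}{U}$, the exponent must be $\restr{\wv}{\overline{U}}$: after $E_U$ has already evaluated the $\bK$-literals over $U$, the only subjective literals left in the top are over~$\overline{U}$ (you do get this right two lines later, so this is just a slip). Second, you invoke full Lifschitz--Turner, but condition~(ii) of Definition~\ref{def:splitting.set} forbids even \emph{objective} occurrences of $U$-atoms in top rules, so after the full subjective reduct $\Pi^\wv$ the two halves share no atoms whatsoever; only the trivial ``disjoint-signature'' case of splitting is needed. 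This is also what makes the recomposition $\wv=\restr{\wv}{U}\sqcup\restr{\wv}{\overline{U}}$ go through: that identity is \emph{not} a tautology for arbitrary belief views (e.g.\ $\wv=\sset{\{a\},\{b\}}$ with $a\in U$, $b\in\overline{U}$ fails it), but it follows here precisely because $\wv=\SM[\Pi^\wv]$ already factors as a product of two independent stable-model sets.
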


Theorem~\ref{thm:epistemic.splitting.g91} was proved in~\cite[Main Theorem]{CabalarFF2019splitting}.
Note that there, it is only stated that G91 satisfies epistemic splitting, however it is easy to see that the second part of the statement was proved as an auxiliary result inside the proof of that theorem.
We decided to explicitly state this result as it will be useful for proving that FAEEL satisfies epistemic splitting.

\subsection{Founded Autoepistemic Equilibrium Logic}

We recall now the semantics of \emph{Founded Autoepistemic Equilibrium Logic} (FAEEL) from~\cite{CabalarFF2019faeel}.
The basic idea is an elaboration of the belief (or KD45) interpretation $\kdint{I}{\wv}$ already seen but replacing belief sets by HT pairs.
Thus, the idea of \emph{belief view} $\wv$ is extended to a non-empty set of HT-interpretations $\wv=\{\tuple{H_1,T_1}, \dots, \tuple{H_n,T_n}\}$ and  say that $\wv$ is \emph{total} when $H_i=T_i$ for all of them, coinciding with the form of belief views $\wv=\{T_1, \dots, T_n\}$ we had so far.
Similarly, a \emph{belief interpretation} is now redefined as $\kdint{\tuple{H,T}}{\wv}$, or simply $\kdint{H,T}{\wv}$, where $\wv$ is a belief view and $\tuple{H,T}$ stands for the real world, possibly not in $\wv$.
A belief interpretation~$\kdint{H,T}{\wv}$ is called \emph{total} iff both $\tuple{H,T}$ and $\wv$ are total.
Next, the satisfaction relation is defined as a combination of modal logic KD45 and HT.
A belief interpretation $\cI = \kdint{H,T}{\wv}$ satisfies a formula $\varphi$, written $\cI \models \varphi$, iff:
\begin{itemize}[topsep=2pt]
\item $\cI \not\models \bot$,
\item $\cI \models a$ iff $a \in H$, for any atom $a \in \at$,
\item $\cI \models \psi_1 \wedge \psi_2$ iff $\cI \models \psi_1$ and $\cI \models \psi_2$,
\item $\cI \models \psi_1 \vee \psi_2$ iff $\cI \models \psi_1$ or $\cI \models \psi_2$,
\item $\cI \models \psi_1 \to \psi_2$ iff both: (i) $\cI \not\models \psi_1$ or $\cI \models \psi_2$;
and (ii) $\kdint{T}{\wv^t} \not\models \psi_1$ or
$\kdint{T}{\wv^t} \models \psi_2$,\\\hspace*{2.25cm} where\review{\footnote{\reviewcol Note that $\wv^t$ is a belief view as defined in Section~\ref{sec:g91}.}}{rev1.12} $\wv^t=\{T_i \mid \tuple{H_i,T_i} \in \wv\}$.
\item $\cI \models \bL \psi$ iff $\kdint{H_i,T_i}{\wv} \models \psi$ for all $\tuple{H_i,T_i} \in \wv$.
\end{itemize}
A belief interpretation $\kdint{H,T}{\wv}$ is called a \emph{belief model} of a theory~$\Gamma$ iff $\kdint{H_i,T_i}{\wv} \models \varphi$ for all \htinterpretation $\tuple{H_i,T_i} \in \wv \cup\{\tuple{H,T}\}$ and all $\varphi \in \Gamma$.
Given theories $\Gamma$ and $\Gamma'$, we write $\Gamma \models \Gamma'$ when $\kdint{H,T}{\wv} \models \Gamma$ implies $\kdint{H,T}{\wv} \models \Gamma'$ for all belief interpretations.
We write $\Gamma \equiv \Gamma'$ iff $\Gamma \models \Gamma'$ and $\Gamma' \models \Gamma$.
Furthermore, when $\Gamma$ or $\Gamma'$ are singletons we may omit the brackets around their unique formula.

\review{

Recall that the negation of a formula $\neg\varphi$ is defined as an abbreviation for the implication $\varphi\to\bot$.
The following result is immediate from the above definition plus the persistence property proved in~\cite{CabalarFF2019faeel} (Proposition~1) and explicitly states the evaluation of negation:

\begin{proposition}
Given a belief interpretation~$\cI= \kdint{H,T}{\wv}$ and a formula~$\varphi$, it follows that $\cI \models \neg\varphi$ iff $\kdint{T}{\wv^t} \not\models \varphi$.\qed
\end{proposition}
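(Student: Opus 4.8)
The statement to prove is: given a belief interpretation $\cI = \kdint{H,T}{\wv}$ and a formula $\varphi$, we have $\cI \models \neg\varphi$ iff $\kdint{T}{\wv^t} \not\models \varphi$. Since $\neg\varphi$ is by definition $\varphi \to \bot$, the plan is simply to unfold the satisfaction clause for implication from the FAEEL semantics and then simplify using the fact that $\bot$ is never satisfied.

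First I would apply the implication clause to $\cI \models \varphi \to \bot$. By that clause, this holds iff (i) $\cI \not\models \varphi$ or $\cI \models \bot$, and (ii) $\kdint{T}{\wv^t} \not\models \varphi$ or $\kdint{T}{\wv^t} \models \bot$. Since no belief interpretation satisfies $\bot$, both disjuncts involving $\bot$ drop out, so the condition reduces to: $\cI \not\models \varphi$ and $\kdint{T}{\wv^t} \not\models \varphi$. Hence $\cI \models \neg\varphi$ iff both $\cI \not\models \varphi$ and $\kdint{T}{\wv^t} \not\models \varphi$.

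It remains to show that the second conjunct already implies the first, i.e., that $\kdint{T}{\wv^t} \not\models \varphi$ implies $\cI \not\models \varphi$; equivalently, by contraposition, $\cI \models \varphi$ implies $\kdint{T}{\wv^t} \models \varphi$. This is exactly the persistence property established in Proposition~1 of~\cite{CabalarFF2019faeel}: truth at the ``here'' component $\kdint{H,T}{\wv}$ is preserved at the associated total ``there'' interpretation $\kdint{T}{\wv^t}$. Invoking that result collapses the conjunction $\cI \not\models \varphi$ and $\kdint{T}{\wv^t}\not\models\varphi$ to just $\kdint{T}{\wv^t}\not\models\varphi$, which is the desired equivalence.

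There is no real obstacle here; the only point requiring care is making sure the persistence result is stated in a form that applies to an arbitrary formula $\varphi$ (not merely to the objective fragment) and to the specific pairing of $\kdint{H,T}{\wv}$ with $\kdint{T}{\wv^t}$, rather than some other ``there''-interpretation. Since the excerpt explicitly cites Proposition~1 of~\cite{CabalarFF2019faeel} as providing exactly this persistence property, I would simply cite it at that step and conclude.
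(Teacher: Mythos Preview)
Your proof is correct and follows essentially the same approach as the paper: the paper states that the result ``is immediate from the above definition plus the persistence property proved in~\cite{CabalarFF2019faeel} (Proposition~1),'' and its detailed argument (Proposition~\ref{prop:negation.there} in the appendix) unfolds $\neg\varphi=\varphi\to\bot$, drops the $\bot$-disjuncts, and then invokes persistence to eliminate the redundant conjunct---exactly as you do.
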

}{rev3.2}

\review{As recalled in Section~\ref{sec:ht}, stable models correspond to a class of HT-models called equilibrium models, that is, total minimal models.
Similarly, we define now equilibrium belief models as total minimal belief models with respect to the following order relation:}{rev1.12}

\begin{definition}\label{def:int.prec}
We define the partial order $\cI' \preceq \cI$ for belief interpretations $\cI'=\kdint{H',T'}{\wv'}$ and $\cI=\kdint{H,T}{\wv}$ when the following three conditions hold:
\begin{enumerate}[ leftmargin=18pt , label=(\roman*) ]
\item $T' = T$ and $H' \subseteq H$, and

\item for every $\tuple{H_i,T_i} \in \wv$,
there is some $\tuple{H'_i,T_i} \in \wv'$,
with $H'_i \subseteq H_i$.

\item for every $\tuple{H'_i,T_i} \in \wv'$,
there is some $\tuple{H_i,T_i} \in \wv$,
with $H'_i \subseteq H_i$.\qed
\end{enumerate}
\end{definition}
As usual, $\cI' \prec \cI$ means $\cI' \preceq \cI$ and $\cI' \neq \cI$.

\begin{definition}
A total belief interpretation $\cI=\kdint{T}{\wv}$ is said to be an \emph{equilibrium belief model} of some theory~$\Gamma$ iff $\cI$ is a belief model of $\Gamma$ and there is no other belief model $\cI'$ of $\Gamma$ such that $\cI' \prec \cI$.\qed
\end{definition}
By $\MEQ[\Gamma]$ we denote the set of equilibrium belief models of $\Gamma$.
As a final step, we impose a fixpoint condition to minimise the agent's knowledge as follows.
\begin{definition}\label{def:au-eqmodel}
A total belief view~$\wv$ is called an \emph{autoepistemic equilibrium model} or \emph{\mbox{FAEEL-world} view} of~$\Gamma$
iff:
\begin{IEEEeqnarray*}{c+x*}
\wv \ \ = \ \ \setm{ T }{ \kdint{T}{\wv } \in \MEQ[ \Gamma ] }
&\qed
\end{IEEEeqnarray*}
\end{definition}

\begin{Theorem}[Main Theorem in~\citeNP{CabalarFF2019faeel}]{\label{thm:g91}}
For any theory $\Gamma$, its FAEEL-world views are exactly its founded\footnote{For space reasons we omit here the definition of \emph{founded world view} and refer the reader to~\cite{CabalarFF2019faeel}. Intuitively, a world view is founded if all atoms that are true in all its belief set can be derived without cyclic references.
If we omit there founded, we obtain that every FAEEL-world view is also a G91-world view, but not necessarily vice-versa.} G91-world views of $\Gamma$.\qed
\end{Theorem}

\section{Founded Epistemic Equilibrium Logic}
\label{sec:feel}

Founded EEL is similar to Founded Autoepistemic EEL, but without the minimisation of knowledge.
Technically, this makes \review{Founded EEL}{rev1.13} simpler in two distinct ways: (i) it directly uses belief views instead of belief interpretations and, as a result, (ii) it lacks the autoepistemic fixpoint condition (Definition~\ref{def:au-eqmodel}).
Note that, as mentioned in the introduction Founded EEL can be seen as the combination of the stable model semantics with the modal logic S5, while Founded Autoepistemic EEL would be the combination of the stable model semantics with Moore's Autoepistemic Logic.
\review{In this sense, (i) is a direct consequence of the fact that Moore's Autoepistemic Logic is defined in terms of modal logic KD45 instead of S5.
In its turn, (ii) is a consequence of the fact that S5 is a monotonic logic, and thus Founded EEL do not need the autoepistemic fixpoint condition (Definition~\ref{def:au-eqmodel}) that Founded Autoepistemic EEL inherits from Moore's Autoepistemic Logic.}{rev1.13}

Formally, a belief view~$\wv$ is called an \emph{epistemic model} of a theory~$\Gamma$, in symbols $\wv \models \Gamma$ iff $\kdint{H_i,T_i}{\wv} \models \varphi$ for all \review{\htinterpretation}{rev1.14} $\tuple{H_i,T_i} \in \wv$ and all $\varphi \in \Gamma$.
Given theories $\Gamma$ and $\Gamma'$, we write $\Gamma \modelsfeel \Gamma'$ when $\kdint{H,T}{\wv} \models \Gamma$ implies $\kdint{H,T}{\wv} \models \Gamma'$ for all belief interpretations.
We write $\Gamma \equivfeel \Gamma'$ iff $\Gamma \modelsfeel \Gamma'$ and $\Gamma' \modelsfeel \Gamma$.
As above, when $\Gamma$ or $\Gamma'$ are singletons we may omit the brackets around their unique formula.

\begin{Proposition}[Persistence]{\label{prop:persistance}}
$\wv\models \varphi$ implies $\wv^t \models \varphi$.\qed
\end{Proposition}

\begin{proof}
Follows directly from Proposition~1 in~\cite{CabalarFF2019faeel}.
\end{proof}

The following order relation adapts Definition~\ref{def:int.prec} to the case of belief views.

\begin{definition}\label{def:int.prec.views}
Given belief views
$\wv_1$
and
$\wv_2$,
we write $\wv_1 \preceq \wv_2$
iff the following two condition hold:
\begin{enumerate}[ leftmargin=18pt , label=(\roman*) ]
\item for every $\tuple{H_2,T} \in \wv_2$,
there is some $\tuple{H_1,T} \in \wv_1$,
with $H_1 \subseteq H_2$.

\item for every $\tuple{H_1,T} \in \wv_1$,
there is some $\tuple{H_2,T} \in \wv_2$,
with $H_1 \subseteq H_2$.
\end{enumerate}
As usual, we write $\wv_1 \prec \wv_2$ iff $\wv_1 \preceq \wv_2$ and $\wv_1 \neq \wv_2$.\qed
\end{definition}

Then, equilibrium epistemic models are defined as usual:

\begin{definition}\label{def:eq.epistemic.models}
A total epistemic model $\wv$ of a theory~$\Gamma$ is said to be an \emph{epistemic equilibrium model} or \emph{\mbox{FEEL-world} view} iff there is no other epistemic model~$\wv'$ of $\Gamma$ such that $\wv' \prec \wv$.\qed
\end{definition}

The following observation establishes a relation between (equilibrium) belief models and (equilibrium) epistemic models similar to the existent between the standard modal logics~KD45 and~S5.
Recall that belief models correspond to the logical product of HT and KD45 while epistemic models come from the logical product of HT and S5.

\begin{observation}\label{obs:ht.d45models.are.s5models}
For any theory~$\Gamma$ and belief interpretation~$\cI = \tuple{\wv,H,T}$, the following statements hold:
\begin{enumerate}[ leftmargin=18pt , label=(\roman*) ]
\item If $\cI$ is a belief model of~$\Gamma$, then $\wv$ is a epistemic model of~$\Gamma$, and
\item If $\cI$ is an equilibrium belief model of~$\Gamma$, then $\wv$ is a equilibrium epistemic model of~$\Gamma$. \qed
\end{enumerate}
\end{observation}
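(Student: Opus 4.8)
The plan is to prove the two statements in sequence, with (i) serving as the essential ingredient for (ii).

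For part (i), the goal is to show that if $\cI = \kdint{H,T}{\wv}$ is a belief model of $\Gamma$, then $\wv$ is an epistemic model of $\Gamma$. By the definitions, $\cI$ being a belief model means $\kdint{H_i,T_i}{\wv} \models \varphi$ for all $\tuple{H_i,T_i} \in \wv \cup \set{\tuple{H,T}}$ and all $\varphi \in \Gamma$; while $\wv \models \Gamma$ (epistemic model) means exactly $\kdint{H_i,T_i}{\wv} \models \varphi$ for all $\tuple{H_i,T_i} \in \wv$ and all $\varphi \in \Gamma$. So (i) is essentially immediate: the belief-model condition quantifies over a superset of the interpretations that the epistemic-model condition requires, and crucially the satisfaction relation $\kdint{H_i,T_i}{\wv} \models \varphi$ is defined using the \emph{same} belief view $\wv$ in both cases, so no reindexing of the modal component is needed. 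I would simply note that dropping $\tuple{H,T}$ from $\wv \cup \set{\tuple{H,T}}$ leaves precisely the condition for $\wv \models \Gamma$.

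For part (ii), suppose $\cI = \kdint{T}{\wv}$ is an equilibrium belief model of $\Gamma$, i.e., it is a total belief model and there is no belief model $\cI' \prec \cI$. By part (i), $\wv$ is an epistemic model of $\Gamma$, and it is total since $\cI$ is total. It remains to rule out a strictly smaller epistemic model $\wv' \prec \wv$ (in the sense of Definition~\ref{def:int.prec.views}). The plan is to argue by contradiction: given such a $\wv'$, construct a belief interpretation $\cI' = \kdint{H',T}{\wv'}$ that is a belief model of $\Gamma$ with $\cI' \prec \cI$, contradicting equilibrium. The natural choice is to keep the real world unchanged at the ``there'' level — since $\wv$ is total, $\wv^t = \wv$, and one sets $T' = T$; for the ``here'' component of the real world I would take $H' = T$ as well (so $\cI'$ keeps the real world total) or, if needed to obtain strict $\prec$, note that it is the change in the belief view that already makes $\cI' \neq \cI$. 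One must then check that $\cI' = \kdint{T}{\wv'}$ is a belief model of $\Gamma$: its belief component is $\wv'$, which is an epistemic model of $\Gamma$ by hypothesis, so $\kdint{H'_i,T_i}{\wv'} \models \varphi$ for all $\tuple{H'_i,T_i} \in \wv'$; and the real world $\tuple{T,T}$ evaluated against $\wv'$ must also satisfy $\Gamma$ — here one uses that $\wv' \preceq \wv$ together with persistence-type monotonicity of the modal subformulas, or more directly that $\wv'$ being an epistemic model already handles the worlds in $\wv'$ and the total real world is simply one more. Finally, verifying $\cI' \preceq \cI$ reduces conditions (ii) and (iii) of Definition~\ref{def:int.prec} to conditions (i) and (ii) of Definition~\ref{def:int.prec.views} for $\wv' \preceq \wv$, and condition (i) of Definition~\ref{def:int.prec} holds trivially since $T' = T$ and $H' = T \subseteq T = H$.

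The main obstacle I anticipate is the bookkeeping in the contradiction step of (ii): one must make sure the constructed $\cI'$ is genuinely a belief model — in particular that the real world $\tuple{T,T}$ still satisfies every $\varphi \in \Gamma$ when its modal subformulas are now evaluated against the smaller view $\wv'$ rather than $\wv$. This is where I would invoke the fact that $\wv'$ is itself an epistemic model of $\Gamma$ (so every $\tuple{H'_i,T_i}\in\wv'$ satisfies $\Gamma$ relative to $\wv'$) and, if $T \notin \wv'^t$, an additional persistence argument (Proposition~\ref{prop:persistance} and the persistence property of~\cite{CabalarFF2019faeel}) to transfer satisfaction from the total real world down through the ``here/there'' structure. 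A cleaner alternative, which I would pursue if the direct route gets delicate, is to instead take $\cI' = \kdint{H,T}{\wv'}$ where $\tuple{H,T}$ is already known (from $\cI$ being a belief model and $\cI$ total, so $H = T$) to satisfy $\Gamma$ relative to $\wv$, and then show satisfaction is preserved when $\wv$ is replaced by $\wv' \preceq \wv$ for the formulas of the restricted program syntax — but since the statement is for arbitrary theories, the monotonicity claim must be handled carefully rather than assumed.
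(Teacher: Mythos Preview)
Your argument for part~(i) is correct; the paper gives no proof of the observation, treating both parts as immediate.

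For part~(ii) there is a genuine gap exactly where you flag it, and it cannot be closed in full generality: you need $\kdint{T,T}{\wv'}\models\Gamma$, but persistence runs from $\wv'$ to $(\wv')^t=\wv$ (the wrong direction) and no general monotonicity of satisfaction in the belief-view component holds for arbitrary theories. In fact the statement as written fails when $T\notin\wv$. Take $\Gamma=\{b\to\bL a,\ \neg a\to b\}$ over atoms $\{a,b\}$, $\wv=[\{a\}]$, and $T=\{b\}$. Then $\cI=\kdint{\{b\}}{[\{a\}]}$ is an equilibrium belief model: any strictly smaller $\cI''=\kdint{H'',\{b\}}{\wv''}$ either has $H''=\emptyset$ (then $\neg a\to b$ fails at the real world, since $a\notin T$ forces $b$) or has $H''=\{b\}$ and hence $\tuple{\emptyset,\{a\}}\in\wv''$ (then $b\to\bL a$ fails at the real world). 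Yet $\wv'=[\tuple{\emptyset,\{a\}}]\prec\wv$ \emph{is} an epistemic model of $\Gamma$ (at its unique world both $b$ and $\neg a$ fail, so both implications hold vacuously), so $\wv$ is not an equilibrium epistemic model. What the paper actually needs---and what does go through---is the case $T\in\wv$: the observation is only invoked to derive Theorem~\ref{prop:faeel->feel}, where the FAEEL fixpoint condition guarantees $T\in\wv$. Under that assumption the fix is simple: from $(\wv')^t=\wv\ni T$ and condition~(i) of Definition~\ref{def:int.prec.views} one obtains some $\tuple{H',T}\in\wv'$, and $\cI'=\kdint{H',T}{\wv'}$ is then automatically a belief model (its real world already lies in $\wv'$, which is an epistemic model) with $\cI'\prec\cI$. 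The missing idea is to choose the real world's ``here'' component so that $\tuple{H',T}\in\wv'$, rather than keeping it total and trying to transport satisfaction from $\wv$ to $\wv'$.
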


From this observation the following relation between autoepistemic equilibrium models and epistemic equilibrium models can be established:

\begin{theorem}\label{prop:faeel->feel}
Every FAEEL-world view of any theory~$\Gamma$ is also an \mbox{FEEL-world} view of~$\Gamma$.\qed
\end{theorem}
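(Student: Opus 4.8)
The plan is to obtain this as an essentially immediate consequence of Observation~\ref{obs:ht.d45models.are.s5models}(ii), which is precisely the tool that transports total‑minimality from belief models to epistemic models. First recall what the two notions unfold to: by Definition~\ref{def:au-eqmodel} a FAEEL‑world view $\wv$ is a total belief view satisfying the fixpoint equation $\wv = \setm{T}{\kdint{T}{\wv} \in \MEQ[\Gamma]}$, and by Definition~\ref{def:eq.epistemic.models} an FEEL‑world view is exactly an equilibrium epistemic model, i.e.\ a total epistemic model of $\Gamma$ below which no epistemic model of $\Gamma$ lies in the order $\prec$ of Definition~\ref{def:int.prec.views}. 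So the goal reduces to producing, from the FAEEL fixpoint condition, a single equilibrium belief model sitting over $\wv$.

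So let $\wv$ be a FAEEL‑world view of $\Gamma$. Since belief views are non‑empty by convention, I would fix some $T \in \wv$; the fixpoint equation then gives $\kdint{T}{\wv} \in \MEQ[\Gamma]$, that is, $\kdint{T}{\wv}$ is an equilibrium belief model of $\Gamma$. Instantiating Observation~\ref{obs:ht.d45models.are.s5models}(ii) with $\cI = \kdint{T}{\wv}$ yields that $\wv$ is an equilibrium epistemic model of $\Gamma$, which is by definition the claim that $\wv$ is an FEEL‑world view of $\Gamma$. (Part~(i) of the same observation, together with the fact that FAEEL‑world views are total, already gives that $\wv$ is a total epistemic model of $\Gamma$, but part~(ii) subsumes this.)

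What I expect to be the real obstacle is not visible at the level of this statement but is packaged inside Observation~\ref{obs:ht.d45models.are.s5models}(ii): one has to show that if some epistemic model $\wv'$ satisfied $\wv' \prec \wv$, then it could be lifted to a belief model strictly below $\kdint{T}{\wv}$, contradicting the latter's equilibrium status. The key points there are that $\wv' \prec \wv$ (with $\wv$ total) forces $\wv$ and $\wv'$ to carry the same set of ``there''-worlds while $\wv'$ must contain at least one pair $\tuple{H,T}$ with $H \subsetneq T$; that $\kdint{H,T}{\wv'}$ is then already a belief model of $\Gamma$, because the extra real world $\tuple{H,T}$ lies in $\wv'$ and so adds no new satisfaction obligation; and that $\kdint{H,T}{\wv'} \prec \kdint{T}{\wv}$ can be verified clause by clause against Definition~\ref{def:int.prec}. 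The only extra care needed at the level of the theorem itself is the non‑emptiness of $\wv$, which is exactly what lets us exhibit an equilibrium belief model over $\wv$ to feed into the Observation.
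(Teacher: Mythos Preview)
Your argument is correct and matches the paper's approach: the paper states Observation~\ref{obs:ht.d45models.are.s5models} precisely so that Theorem~\ref{prop:faeel->feel} becomes an immediate consequence of part~(ii), and your unpacking of the fixpoint condition to exhibit an equilibrium belief model $\kdint{T}{\wv}$ with $T\in\wv$ (using non-emptiness of $\wv$) is exactly the intended step. Your additional paragraph sketching why Observation~\ref{obs:ht.d45models.are.s5models}(ii) holds is not required for the proof of the theorem as stated, but it is accurate and gives a sound outline of that auxiliary fact.
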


In general the converse does not necessary holds as illustrated by the following example:

\begin{example}\label{ex:faeel.vs.feel}
Consider the program $\newprogram\label{prg:disjuntion}$ consisting of the single rule $a \vee b$.
This program has two stable models, $\set{a}$ and $\set{b}$ and accordingly a unique FAEEL-world view~$\sset{\set{a},\set{b}}$ which agrees with its unique G91-world view.
On the other hand,
this program has two extra \mbox{FEEL-world} views that are not
FAEEL-world views: $\sset{\set{a}}$ and $\sset{\set{b}}$.
These two extra models are clearly not well-justified in an epistemic sense as they respectively satisfy~$\bK a$ and~$\bK b$ with no evidence for that conclusion.
Finally, to see the difference between FEEL and modal logic S5, note that $\sset{\set{a,b}}$ is also an S5 model which is neither a FEEL nor a FAEEL-world view.\qed
\end{example}

This example also illustrates that the difference between FAEEL and FEEL can be formalised in terms of the \emph{supra-ASP} property introduced in~\cite{CabalarFF2019splitting} and recalled below: FAEEL satisfies supra-ASP~\cite[Proposition~3]{CabalarFF2019faeel} while FEEL does not.

\begin{property}[Supra-ASP]\label{property:supraASP}
A semantics~$\sS$ satisfies \emph{supra-ASP} if for any objective program $\Pi$ either~$\Pi$ has a unique $\sS$-world view $\wv=\SM[\Pi] \neq \emptyset$ or \review{both $\SM[\Pi]=\emptyset$ and $\Pi$ has no $\sS$-world view at all.}{rev1.17}\qed
\end{property}

On the other hand, the following result shows that we can obtain FAEEL-world views as the intersection of FEEL and G91-world views:

\begin{Theorem}{\label{thm:faeel<->feel+g91}}
For any theory~$\Gamma$, a belief view~$\wv$ is a FAEEL-world view iff (i) $\wv$ is a \mbox{FEEL-world} view and (ii) $\wv$ is a G91-world view.\qed
\end{Theorem}

\begin{examplecont}{ex:self-supporting.rule}\label{ex:self-supporting.rule3}
Back to~\mbox{$\Gamma_{\ref{f:self}}=\{\bL a \to a\}$}, recall that
this theory has two \mbox{G91-world} views:~$\sset{\emptyset}$ and $\sset{\set{a}}$.
It is easy to see that $\sset{\emptyset}$ is a \mbox{FEEL-world} view
and, from Theorem~\ref{thm:faeel<->feel+g91}, this implies that
this is also a FAEEL-world view.
Note that
there is no smaller belief view than~$\sset{\emptyset}$, so being a model is enough to show that is a \mbox{FEEL-world} view.
Note that, \review{by using Theorem~\ref{thm:faeel<->feel+g91},}{rev1.18} is much easier to see that 
$\sset{\emptyset}$ is a FAEEL-world view than directly using its definition since we would also need to check no other
\mbox{$T \not\in \sset{\emptyset}$}
satisfies
\mbox{$\kdint{T}{\sset{\emptyset}} \in \MEQ[\Gamma_{\ref{f:self}}]$}.
In this case, the only possibility is $\kdint{\set{a}}{\sset{\emptyset}}$ which fails because there is a smaller belief model
$\kdint{\emptyset,\set{a}}{\sset{\emptyset}}$
satisfying \mbox{$\bL a \to a$}.
On the other hand,
we can see that $\sset{\set{a}}$ is not a \mbox{FEEL-world} view and, thus neither FAEEL-world view, because $\sset{\tuple{\emptyset,\set{a}}}$ also satisfies
$\Gamma_{\ref{f:self}}$.
In this case, it is also easier to use
Theorem~\ref{thm:faeel<->feel+g91}
than using the FAEEL definition:
$\sset{\set{a}}$ is not a FAEEL-world view because 
$\cI' = \kdint{\set{a}}{\sset{\set{a}}} \not\in \MEQ[\Gamma_{\ref{f:self}}]$
and this is the case
because the smaller interpretation $\cI ''=\kdint{\set{a},\set{a}}{\sset{\tuple{\emptyset,\set{a}}}}$ also satisfies $\Gamma_{\ref{f:self}}$.
In particular, note that $\cI'' \not\models \bL a$
and, thus, clearly satisfies $\bL a \to a$.
\qed
\end{examplecont}

Example~\ref{ex:self-supporting.rule3} illustrates how Theorem~\ref{thm:faeel<->feel+g91} can be used to find FAEEL-world views using FEEL.
In general, this is much easier than directly applying their definition because belief views are simpler than belief interpretations
and because the autoepistemic fixpoint can be checked independently using the G91-semantics.

\section{Epistemic Splitting in Founded (Auto)Epistemic Equilibrium Logic}
\label{sec:splitting.proof}

Let us now study the epistemic splitting property in FEEL and FAEEL.
Let us start by stating a result analogous to Theorem~\ref{thm:epistemic.splitting.g91}, but this time for FEEL.

\begin{Theorem}[Epistemic splitting in FEEL]{\label{thm:epistemic.splitting.feel}}
FEEL satisfies the epistemic splitting property.
Furthermore, if $\wv$ is a \mbox{FEEL-world} view of some program~$\Pi$ with respect to some splitting set~$U$
then $\tuple{\restr{\wv}{U},\restr{\wv}{\overline{U}}}$ is a FEEL-solution of~$\Pi$ and it satisfies that $\wv = \restr{\wv}{U} \sqcup \restr{\wv}{\overline{U}}$.\qed
\end{Theorem}

The proof of Theorem~\ref{thm:epistemic.splitting.feel} is based in the following auxiliary results whose proof can be found in the supplementary material.

\begin{Proposition}{\label{prop:splitting.aux}}
Let $U \subseteq \at$ be some set of atoms and $\Pi = \Pi_1 \cup \Pi_2$ be a program such that
$\Atoms(\Pi_1) \subseteq U$ and $\Atoms(\Pi_2) \subseteq \overline{U}$.
Then, any belief view $\wv$ is an \mbox{FEEL-world} view of $\Pi$ iff
$\restr{\wv}{U}$ is an \mbox{FEEL-world} view of $\Pi_1$ and
$\restr{\wv}{\overline{U}}$ is an \mbox{FEEL-world} view of $\Pi_2$.\qed
\end{Proposition}

Intuitively, Proposition~\ref{prop:splitting.aux} says that, if we can split a program in a way that its two halves do not share atoms in common, then 
we can compute the world views of the whole program by combining the world views of each half.
\review{Furthermore, the following result
shows that we can check whether 
a belief view is an \mbox{FEEL-world} view by checking instead that 
that belief view is a \mbox{FEEL-world} view of a program obtained by simplifying the subjective literals in the rules of the top part accordingly to the belief view.}{rev1.19}

\begin{Proposition}{\label{prop:splitting.aux2}}
Let $\Pi$ be a program with epistemic splitting set~$U \subseteq \at$.
Then, any belief view~$\wv$ is an \mbox{FEEL-world} view of~$\Pi$
iff $\wv$ is an \mbox{FEEL-world} view of~$B_U(\Pi) \cup E_U(\Pi,\wv)$.\qed
\end{Proposition}

We can now join Propositions~\ref{prop:splitting.aux} and~\ref{prop:splitting.aux2} to show the following rewriting of epistemic splitting.

\begin{proposition}{\label{prop:epistemic.splitting.feel}}
Given any program $\Pi$ and epistemic splitting set~$U$ of~$\Pi$,
a belief view $\wv$ is an \mbox{FEEL-world} view of $\Pi$ iff 
$\restr{\wv}{U}$ is an \mbox{FEEL-world} view of $B_U(\Pi)$ and
$\restr{\wv}{\overline{U}}$ is an \mbox{FEEL-world} view of $\restr{\wv}{\overline{U}}$ of $E_U(\Pi,\restr{\wv}{U})$.\qed
\end{proposition}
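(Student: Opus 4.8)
The plan is to derive Proposition~\ref{prop:epistemic.splitting.feel} by composing Propositions~\ref{prop:splitting.aux2} and~\ref{prop:splitting.aux}, so that essentially no new machinery is needed. The one preliminary observation I would record first is that $E_U(\Pi,\wv) = T_U(\Pi)^\wv_U$ depends on $\wv$ only through its restriction $\restr{\wv}{U}$: the subjective reduct over signature~$U$ only inspects maximal subjective formulas $\bK\varphi$ with $\Atoms(\varphi)\subseteq U$, and for such formulas $\wv \models \bK\varphi$ holds iff $\restr{\wv}{U}\models\bK\varphi$, by a routine restriction lemma for FEEL-satisfaction of subjective formulas over a bounded signature. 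Hence $E_U(\Pi,\wv)=E_U(\Pi,\restr{\wv}{U})$, which is also what makes the statement well-posed (the program whose FEEL-world views we examine on the top genuinely only sees $\restr{\wv}{U}$).

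Then I would argue as follows. First, Proposition~\ref{prop:splitting.aux2} gives that $\wv$ is an FEEL-world view of $\Pi$ iff $\wv$ is an FEEL-world view of $B_U(\Pi)\cup E_U(\Pi,\wv)$. Second, fixing $\wv$ and setting $\Pi_1 = B_U(\Pi)$ and $\Pi_2 = E_U(\Pi,\wv)$, I check the hypotheses of Proposition~\ref{prop:splitting.aux}: every rule of $B_U(\Pi)$ satisfies condition~(i) of Definition~\ref{def:splitting.set}, so $\Atoms(\Pi_1)\subseteq U$; and every rule of $T_U(\Pi)$ satisfies condition~(ii), so its head atoms and its objective body atoms lie in $\overline{U}$, while its subjective body literals $\bK l$ either have $\Atoms(l)\subseteq U$ --- in which case the reduct over~$U$ replaces them by $\top$ or $\bot$ --- or else have their single atom in $\overline{U}$; therefore $\Atoms(\Pi_2)=\Atoms(E_U(\Pi,\wv))\subseteq\overline{U}$, and in particular $\Pi_1$ and $\Pi_2$ are atom-disjoint with $\Pi_1\cup\Pi_2 = B_U(\Pi)\cup E_U(\Pi,\wv)$. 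Proposition~\ref{prop:splitting.aux} then applies to $\Pi_1\cup\Pi_2$ and the belief view~$\wv$, yielding that $\wv$ is an FEEL-world view of $\Pi_1\cup\Pi_2$ iff $\restr{\wv}{U}$ is an FEEL-world view of $\Pi_1$ and $\restr{\wv}{\overline{U}}$ is an FEEL-world view of $\Pi_2$. Chaining the two biconditionals and rewriting $E_U(\Pi,\wv)$ as $E_U(\Pi,\restr{\wv}{U})$ by the preliminary observation gives exactly the claim.

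The proof has no genuine obstacle: the real content is already carried by Propositions~\ref{prop:splitting.aux} and~\ref{prop:splitting.aux2}, and what remains is bookkeeping. The only points deserving care are (a) verifying that the subjective reduct over signature~$U$ really removes every occurrence of a bottom atom from the top rules, so that $\Atoms(E_U(\Pi,\wv))\subseteq\overline{U}$ and Proposition~\ref{prop:splitting.aux} is applicable, and (b) the small restriction lemma $E_U(\Pi,\wv)=E_U(\Pi,\restr{\wv}{U})$; one should also note that applying Proposition~\ref{prop:splitting.aux} to the $\wv$-dependent program $B_U(\Pi)\cup E_U(\Pi,\wv)$ is legitimate precisely because that proposition holds for an arbitrary belief view and we instantiate it at this very~$\wv$, so no circularity is introduced.
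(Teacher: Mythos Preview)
Your proposal is correct and follows essentially the same approach as the paper: apply Proposition~\ref{prop:splitting.aux2} to reduce to $B_U(\Pi)\cup E_U(\Pi,\wv)$, note $E_U(\Pi,\wv)=E_U(\Pi,\restr{\wv}{U})$, check that the two parts have atoms contained in $U$ and $\overline{U}$ respectively, and then apply Proposition~\ref{prop:splitting.aux}. Your write-up is in fact more explicit than the paper's in justifying the inclusion $\Atoms(E_U(\Pi,\wv))\subseteq\overline{U}$ and the restriction identity $E_U(\Pi,\wv)=E_U(\Pi,\restr{\wv}{U})$, which the paper states without argument.
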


\begin{proof}
First note that, from Proposition~\ref{prop:splitting.aux2},
it follows that
$\wv$ is an \mbox{FEEL-world} view of $\Pi$ iff 
$\wv$ is an \mbox{FEEL-world} view of~$B_U(\Pi) \cup E_U(\Pi,\wv)$.
Furthermore, we have
$E_U(\Pi,\wv) = E_U(\Pi,\restr{\wv}{U})$.
Therefore, we immediately can see that
$\wv$ is an \mbox{FEEL-world} view of $\Pi$ iff 
$\wv$ is an \mbox{FEEL-world} view of~$B_U(\Pi) \cup E_U(\Pi,\restr{\wv}{U})$.
Furthermore, by construction we have $\Atoms(B_U(\Pi)) \subseteq U$ and $\Atoms(E_U(\Pi,\restr{\wv}{U})) \subseteq \overline{U}$ and, from Proposition~\ref{prop:splitting.aux},
this implies that the latter holds iff
$\restr{\wv}{U}$ is an \mbox{FEEL-world} view of $B_U(\Pi)$ and
$\restr{\wv}{\overline{U}}$ is an \mbox{FEEL-world} view of $E_U(\Pi,\restr{\wv}{U})$.
\end{proof}

\begin{Proofof}{\ref{thm:epistemic.splitting.feel}}
Assume first that $\wv$ is an equilibrium epistemic model of $\Pi$.
Then, from Proposition~\ref{prop:epistemic.splitting.feel},
it follows that
$\restr{\wv}{U}$ is an \mbox{FEEL-world} view of $B_U(\Pi)$ and
$\restr{\wv}{\overline{U}}$ is an \mbox{FEEL-world} view of $\restr{\wv}{\overline{U}}$ of $E_U(\Pi,\restr{\wv}{U})$ and it is easy to check that \mbox{$\wv = \restr{\wv}{U} \sqcup \restr{\wv}{\overline{U}}$}.
The other way around, assume there are \mbox{FEEL-world} views $\wv_b$ of $B_U(\Pi)$ and $\wv_t$ of $E_U(\Pi,\wv_b)$ and let $\wv=\wv_b \sqcup \wv_t$.
Note that $\Atoms(B_U(\Pi)) \subseteq U$ implies that every interpretation~$T \in \wv_b$ satisfies $T \subseteq U$
and that $\Atoms(E_U(\Pi,\wv_b)) \subseteq \overline{U}$ implies that every $T \in \wv_t$ satisfies $T \subseteq \overline{U}$.
Hence, it follows that $\wv_b = \restr{\wv}{U}$ and $\wv_t = \restr{\wv}{\overline{U}}$ and the result follows directly from Proposition~\ref{prop:epistemic.splitting.feel}.
\end{Proofof}

We can now use Theorem~\ref{thm:epistemic.splitting.feel} in combination with Theorems~\ref{thm:epistemic.splitting.g91} and~\ref{thm:faeel<->feel+g91} to show that FAEEL also satisfies epistemic splitting.

\begin{mtheorem}[Epistemic splitting in FAEEL]{\label{thm:epistemic.splitting.faeel}}
FAEEL satisfies the epistemic splitting property.
Furthermore, if $\wv$ is a FAEEL-world view of some program~$\Pi$ with respect to some splitting set~$U$
then $\tuple{\restr{\wv}{U},\restr{\wv}{\overline{U}}}$ is a FAEEL-solution of~$\Pi$ and it satisfies that
$\wv = \restr{\wv}{U} \sqcup \restr{\wv}{\overline{U}}$.\qed
\end{mtheorem}

\begin{proof}
Assume first that there is FAEEL-solution~$\tuple{\wv_b,\wv_t}$ of some program~$\Pi$ with respect to some splitting set~$U$
and let $\wv = \wv_b \sqcup \wv_t$.
By definition, this implies that $\wv_b$ is a FAEEL-world view of $B_U(\Pi)$ which, from Theorem~\ref{thm:faeel<->feel+g91},
implies that $\wv_b$ is both a FEEL and a G91-world view of $B_U(\Pi)$.
Similarly, we can see that $\wv_t$ is both a FEEL and a G91-world view of $E_U(\Pi,\wv_b)$
and, thus, that $\tuple{\wv_b,\wv_t}$ is both a FEEL and a G91-solution of~$\Pi$ with respect~$U$.
From Theorems~\ref{thm:epistemic.splitting.g91} and~\ref{thm:epistemic.splitting.feel},
these two facts respectively imply that $\wv$ is both a FEEL and a G91-world view of~$\Pi$ which, from Theorem~\ref{thm:faeel<->feel+g91} again, implies that $\wv$ is a FAEEL-world view of~$\Pi$.

The other way around is analogous.
Assume now that $\wv$ is a FAEEL-world view of some program~$\Pi$ with splitting set~$U$.
Then, from Theorem~\ref{thm:faeel<->feel+g91}, it follows that $\wv$ is both a FEEL and a G91-world view.
From Theorems~\ref{thm:epistemic.splitting.g91} and~\ref{thm:epistemic.splitting.feel},
these two facts respectively imply that $\tuple{\restr{\wv}{U},\restr{\wv}{\overline{U}}}$ is a FEEL and a G91-solution of~$\Pi$ with respect to~$U$ and $\wv = \restr{\wv}{U} \sqcup \restr{\wv}{\overline{U}}$.
Finally, from Theorem~\ref{thm:faeel<->feel+g91} again, this implies that $\tuple{\restr{\wv}{U},\restr{\wv}{\overline{U}}}$ is also a FAEEL-solution.
\end{proof}

It is interesting to note that \review{for any semantics that satisfies}{rev1.20} epistemic splitting, thus FAEEL and~G91, constraints indented to remove world views are well-behaved:

\begin{property}[Subjective constraint monotonicity]
A semantics satisfies \emph{subjective constraint monotonicity} if, for any epistemic program $\Pi$ and any subjective constraint $r$, $W$ is a world view of $\Pi \cup \{r\}$ iff both $W$ is a world view of $\Pi$ and $W \models r$.\qed
\end{property}

\begin{theorem}[Theorem~2 in~\citeNP{CabalarFF2019splitting}]\label{thm:splitting->constraint.monotonicity}
Epistemic splitting implies subjective constraint monotonicity. \qed
\end{theorem}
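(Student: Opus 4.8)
The plan is to obtain subjective constraint monotonicity as an almost immediate consequence of epistemic splitting, by choosing a splitting set that isolates the constraint~$r$ in the top program. Fix a semantics~$\sS$ that satisfies epistemic splitting, an epistemic program~$\Pi$, and a subjective constraint~$r$. The first step is to check that $U \eqdef \Atoms(\Pi \cup \set{r})$ is an epistemic splitting set of $\Pi \cup \set{r}$ for which $\tuple{\Pi,\set{r}}$ is an admissible choice of $\tuple{B_U(\Pi\cup\set{r}),T_U(\Pi\cup\set{r})}$. Indeed, every rule of~$\Pi$ satisfies condition~(i) of Definition~\ref{def:splitting.set} since $\Atoms(\Pi)\subseteq U$, and $r$ satisfies condition~(ii) because, being a subjective constraint, it has no objective body literals and empty head, so $\Atoms(\Bodyr(r)\cup\Head(r))=\emptyset$. (The fact that $r$ may also satisfy~(i) when $\Atoms(r)\subseteq\Atoms(\Pi)$ is exactly the harmless non-determinism noted after Definition~\ref{def:splitting.set}; we simply commit to placing~$r$ in the top.)

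The second step is to compute the top reduct $E_U(\Pi\cup\set{r},\wv_b)=\set{r}^{\wv_b}_U$ for an arbitrary belief view~$\wv_b$. Since every atom of~$r$ lies in~$U$, each subjective literal occurring in~$r$ is replaced in the subjective reduct by a truth constant, so $\set{r}^{\wv_b}_U$ consists of a single atom-free constraint that is equivalent to~$\top$ when $\wv_b\models r$ and to~$\bot$ when $\wv_b\not\models r$: writing~$r$ as $\Body(r)\to\bot$, one has $\wv_b\models r$ exactly when $\wv_b\not\models\Body(r)$, which is exactly when the reduced body collapses to~$\bot$ (this has to be checked uniformly over the three admissible shapes $\K l$, $\neg\K l$, $\neg\neg\K l$ of a subjective body literal). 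Using the elementary behaviour of~$\sS$ on objective programs---a tautological objective program has $\sset{\emptyset}$ as its unique $\sS$-world view, and an unsatisfiable one has no $\sS$-world view---it follows that a pair $\tuple{\wv_b,\wv_t}$ is a $\sS$-solution of $\Pi\cup\set{r}$ with respect to~$U$ (Definition~\ref{def:splitting}) if and only if $\wv_b$ is a $\sS$-world view of~$\Pi$ with $\wv_b\models r$ and $\wv_t=\sset{\emptyset}$; and in that case $\wv_b\sqcup\wv_t=\wv_b$.

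Combining this with the definition of epistemic splitting (Property~\ref{propy:epistemic.splitting}) closes the argument in both directions. If~$W$ is a $\sS$-world view of $\Pi\cup\set{r}$, epistemic splitting provides a $\sS$-solution $\tuple{\wv_b,\wv_t}$ with $W=\wv_b\sqcup\wv_t$; by the previous paragraph $\wv_b$ is a $\sS$-world view of~$\Pi$ with $\wv_b\models r$ and $W=\wv_b$, hence $W$ is a $\sS$-world view of~$\Pi$ and $W\models r$. Conversely, if $W$ is a $\sS$-world view of~$\Pi$ with $W\models r$, then $\tuple{W,\sset{\emptyset}}$ is a $\sS$-solution of $\Pi\cup\set{r}$ with respect to~$U$ and $W=W\sqcup\sset{\emptyset}$, so epistemic splitting yields that $W$ is a $\sS$-world view of $\Pi\cup\set{r}$.

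The only step I expect to require genuine care is the second one: showing that the reduced body of~$r$ is a contradiction precisely when $\wv_b\not\models r$, uniformly over the shapes of subjective literals, and appealing to the (standard) convention that $\sS$-world views are non-empty, so that the contradictory reduct genuinely contributes no solution rather than a spurious empty one. The remaining steps are a mechanical instantiation of Definitions~\ref{def:splitting.set} and~\ref{def:splitting} and Property~\ref{propy:epistemic.splitting}.
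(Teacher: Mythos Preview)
Your argument is correct and is exactly the intended one: isolate the subjective constraint in the top of a splitting with $U=\Atoms(\Pi\cup\{r\})$, observe that the reduced top collapses to a tautology or to $\bot$ according to whether $\wv_b\models r$, and read off constraint monotonicity from the splitting property. Note that the present paper does not give a proof of this theorem at all---it merely quotes it as Theorem~2 of~\cite{CabalarFF2019splitting}---so there is no in-paper proof to compare against; your reconstruction matches the standard proof from that reference.

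Two small remarks on the points you already flag. First, the appeal to ``elementary behaviour of~$\sS$ on objective programs'' (the atom-free tautology has unique world view $\sset{\emptyset}$, the contradiction has none) is indeed an assumption that does not literally follow from the splitting property alone; in~\cite{CabalarFF2019splitting} this is part of the background conventions on what counts as a semantics for epistemic programs (world views are non-empty and extend stable models on objective programs), so your explicit acknowledgement is appropriate. Second, the disjointness requirement $B_U\cap T_U=\emptyset$ in Definition~\ref{def:splitting.set} forces you to assume $r\notin\Pi$ when you set $B_U=\Pi$, $T_U=\{r\}$; the degenerate case $r\in\Pi$ (where $\Pi\cup\{r\}=\Pi$) is immediate once one knows that world views are epistemic models, but strictly speaking it sits outside your splitting argument. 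Neither point affects the substance of the proof.
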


Furthermore, this property also guarantees that, for \emph{epistemically stratified} programs, these semantics have at most a unique world view~(see Theorem~1 in~\citeNP{CabalarFF2019splitting}).
Another interesting property, not included in~\cite{CabalarFF2019splitting}, is that any semantics that satisfies epistemic splitting and the \emph{supra-ASP} property, necessary \review{coincides with the G91-semantics for the class}{rev1.21} of epistemic stratified programs.
Before tackling the notion of epistemic stratification, let us recall the epistemic dependence relation among atoms in a program $\Pi$ so that $dep(a,b)$ is true iff there is a rule $r \in \Pi$ such that $a \in \Atoms(\Head(r) \cup \Bodyr(r))$ and $b \in \Atoms(\Bodym(r))$.

\begin{definition}
We say that an epistemic program $\Pi$ is \emph{epistemically stratified} if we can assign an integer mapping $\lambda: \At \to \mathbb{N}$ to each atom such that
\begin{enumerate}[ leftmargin=18pt , label=(\roman*) ]
    \item \mbox{$\lambda(a) = \lambda(b)$} for any rule $r \in \Pi$ and atoms $a,b \in (\Atoms(r)\setminus \Bodym(r))$,
    \item \mbox{$\lambda(a)>\lambda(b)$} for any pair of atoms $a,b$ satisfying $dep(a,b)$.\qed
\end{enumerate}
\end{definition}

\begin{Theorem}{\label{thm:stratified.agrees.G91}}
Given any two semantics~$\sS$ and~$\sS'$ \review{that satisfy}{rev1.22} the epistemic splitting and supra-ASP properties and an epistemically stratified program~$\Pi$, one of following two condition hold:
\begin{enumerate}[ leftmargin=18pt , label=(\roman*) ]
\item $\Pi$ has neither $\sS$-world view nor $\sS'$-world view, or
\item $\Pi$ has exactly one $\sS$ and one $\sS'$-world view~$\wv$ which is the same in both semantics.
\end{enumerate}
\end{Theorem}

\begin{proof}
The proof follows by induction in the number of layers induced by the stratifications.
Note that, if a program has a unique layer, then it must be objective and, thus, the result follows directly from the supra-ASP property.
Otherwise, let $a \in \at$ such that there is no $b \in \at$ with
$\lambda(a) < \lambda(b)$
and let $U = \setm{ c \in \at }{ \lambda(c) < \lambda(a) }$.
Then, $U$ is an splitting set of~$\Pi$ 
and the epistemic splitting property tell us a belief view~$\wv$ is a $\sS$-world view of~$\Pi$ iff there are $\sS$-world views
$\wv_b$ of $B_U(\Pi)$ and $\wv_t$ of $E_U(\Pi,\wv_b)$.
Furthermore, $B_U(\Pi)$ and $E_U(\Pi,\wv_b)$ have less layers than~$\Pi$ so, by induction hypothesis, this holds iff
iff there are $\sS'$-world views
$\wv_b$ of $B_U(\Pi)$ and $\wv_t$ of $E_U(\Pi,\wv_b)$
iff
$\wv$ is a $\sS'$-world view of~$\Pi$.
\end{proof}

\begin{corollary}\label{cor:stratified.coincide}
For epistemically stratified programs, FAEEL and G91-world views coincide.\qed
\end{corollary}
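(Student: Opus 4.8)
The plan is to obtain the corollary as a direct instance of Theorem~\ref{thm:stratified.agrees.G91}, taking $\sS$ to be FAEEL and $\sS'$ to be G91. For this I only need to check that both semantics satisfy the two hypotheses of that theorem, namely epistemic splitting and supra-ASP. Epistemic splitting holds for G91 by Theorem~\ref{thm:epistemic.splitting.g91} and for FAEEL by the Main Theorem~\ref{thm:epistemic.splitting.faeel}. Supra-ASP holds for FAEEL by Proposition~3 of~\cite{CabalarFF2019faeel}, recalled after Example~\ref{ex:faeel.vs.feel}. For G91, supra-ASP is immediate from the definition of G91-world view: if $\Pi$ is an objective program then it contains no subjective subformula, so $\Pi^\wv = \Pi$ for every belief view~$\wv$, and the fixpoint equation $\wv = \SM[\Pi^\wv]$ collapses to $\wv = \SM[\Pi]$; hence $\Pi$ has the unique G91-world view $\SM[\Pi]$, which is nonempty exactly when $\Pi$ has stable models.

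With both hypotheses in place, Theorem~\ref{thm:stratified.agrees.G91} applied to an epistemically stratified program~$\Pi$ yields the dichotomy that either $\Pi$ has neither a FAEEL-world view nor a G91-world view, or it has exactly one world view under each semantics and the two coincide. In the first case the collections of FAEEL-world views and of G91-world views of~$\Pi$ are both empty; in the second case they are both equal to the common singleton set. Either way, the two collections are equal, which is precisely the claim.

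I do not anticipate any genuine obstacle, since the corollary is essentially a specialisation of Theorem~\ref{thm:stratified.agrees.G91}. The only point deserving a moment of care is aligning the G91 convention used in the paper with the precise wording of Property~\ref{property:supraASP}: when $\Pi$ has no stable model one must read G91 as producing \emph{no} world view rather than the empty belief view, so that it matches the ``$\Pi$ has no $\sS$-world view at all'' clause of supra-ASP and hence the first case of the dichotomy in Theorem~\ref{thm:stratified.agrees.G91}.
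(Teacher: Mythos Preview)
Your proof is correct and matches the paper's approach: the corollary is stated immediately after Theorem~\ref{thm:stratified.agrees.G91} with no explicit proof, as it is meant to follow directly by instantiating that theorem with $\sS=\text{FAEEL}$ and $\sS'=\text{G91}$, exactly as you do. Your additional care in verifying supra-ASP for G91 and in handling the empty-stable-model case is appropriate and fills in details the paper leaves implicit.
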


Note that Corollary~\ref{cor:stratified.coincide} does not apply to FEEL because this semantics does not satisfy \mbox{supra-ASP} as illustrated by Example~\ref{ex:faeel.vs.feel}.
Recall also that, from Proposition~2 in~\cite{CabalarFF2019faeel}, we already knew that, for programs where all occurrences of $\bK$ are in the scope of negation, FAEEL and \mbox{G91-world} views coincide.
Corollary~\ref{cor:stratified.coincide} enlarges the class programs in which FAEEL and G91 coincide by including all those that are epistemically stratified.
This immediately arises the question whether FAEEL and G91 also coincide for programs without positive cycles involving epistemic literals.
The following result shows that this is indeed the case.
Formally, we define the positive epistemic dependence relation among atoms in a program $\Pi$ so that $dep^+(a,b)$ is true iff there is a rule $r \in \Pi$ such that $a \in \Atoms(\Head(r) \cup \Bodyr(r))$ and $b \in \Atoms(\Bodymp(r))$.

\begin{definition}[Epistemically tight program]
We say that an epistemic program $\Pi$ is \emph{epistemically tight} if we can assign an integer mapping
$\lambda: \at \to \mathbb{N}$ to each atom such that
\begin{enumerate}[ label=(\roman*), leftmargin=18pt]
    \item \mbox{$\lambda(a) = \lambda(b)$} for any rule $r \in \Pi$ and atoms $a,b \in (\Atoms(r)\setminus \Bodym(r))$,
    \item \mbox{$\lambda(a)>\lambda(b)$} for any pair of atoms $a,b$ satisfying $dep^+(a,b)$.\qed
\end{enumerate}
\end{definition}

\begin{definition}
Given an epistemic theory~$\Gamma$ and a belief view~$\wv$,
its \emph{negatively subjective reduct}, in symbols~$\Gamma^{\underline{\wv}}$,
is obtained by replacing each maximal subjective formula of the form $\neg\!\bK \varphi$ by~$\top$ if $\wv \not\models \bK \varphi$; or by $\bot$ otherwise.\qed
\end{definition}

\begin{Proposition}{\label{prop:negatively.subjective.reduct.FAEEL.world.view}}
Given a theory~$\Gamma$ and,
a total belief view~$\wv$ is FAEEL-world view of~$\Gamma$ iff
$\wv$ is a FAEEL-world view of~$\Gamma^{\underline{\wv}}$.
\end{Proposition}

\begin{theorem}\label{thm:stratified.coincide}
For epistemically tight programs, FAEEL and \mbox{G91-world} views coincide.\qed
\end{theorem}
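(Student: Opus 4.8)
The plan is to reduce the claim to a statement purely about FEEL and then to run a foundedness‑style minimality argument driven by the level mapping. By Theorem~\ref{thm:g91} every FAEEL-world view is a G91-world view, so only one inclusion is at stake: I must show that every G91-world view~$\wv$ of an epistemically tight program~$\Pi$ is a FAEEL-world view of~$\Pi$. Since~$\wv$ is already a G91-world view, Theorem~\ref{thm:faeel<->feel+g91} reduces this to showing that~$\wv$ is a FEEL-world view of~$\Pi$.

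First I would dispose of negative subjective literals. Since the negatively subjective reduct replaces each occurrence of $\neg\bK\varphi$ by precisely the truth constant that~$\wv$ forces and leaves positive subjective literals untouched, one obtains $(\Pi^{\underline{\wv}})^\wv=\Pi^\wv$ and, moreover, that belief interpretations whose ``there''-belief-view equals~$\wv$ satisfy~$\Pi$ exactly when they satisfy~$\Pi^{\underline{\wv}}$. Together with Proposition~\ref{prop:negatively.subjective.reduct.FAEEL.world.view}, this gives that~$\wv$ is a G91-world view of~$\Pi$ iff it is one of~$\Pi^{\underline{\wv}}$, and likewise for FEEL-world views. Hence I may assume that every subjective literal of~$\Pi$ has the positive form $\bK l$, while keeping a level mapping~$\lambda$ with $\lambda(a)>\lambda(b)$ whenever $dep^+(a,b)$ (this is inherited because $dep^+$ can only shrink under the reduct).

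The core is then a minimality argument. Since $\wv=\SM[\Pi^\wv]$, each $T\in\wv$ is a stable model of the objective program~$\Pi^\wv$, and a routine check (the reduct fixes every subjective literal consistently with~$\wv$) shows that~$\wv$ is a total epistemic model of~$\Pi$. It remains to rule out an epistemic model $\wv'\prec\wv$ of~$\Pi$; by Definition~\ref{def:int.prec.views} such a~$\wv'$ satisfies ${\wv'}^t=\wv$ and contains a non-total pair. I would reach a contradiction by proving, by induction on the level~$n$, that in every pair of~$\wv'$ an atom of level at most~$n$ occurs in the ``here''-part iff it occurs in the ``there''-part; taken over all~$n$ this forces every pair of~$\wv'$ to be total, i.e.\ $\wv'=\wv$. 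The induction exploits two facts: negation is insensitive to the passage from~$\wv$ to~$\wv'$ (for default-negated objective literals by persistence, Proposition~\ref{prop:persistance}, together with ${\wv'}^t=\wv$; and there are no negative subjective literals after the preprocessing); and a positive subjective literal $\bK l$ occurring in a rule with head atom~$a$ has, by $dep^+(a,\cdot)$ and~$\lambda$, all of $\Atoms(l)$ at levels below~$\lambda(a)$, so by the induction hypothesis it has the same truth value under~$\wv'$ as under~$\wv$. Consequently, for the rules that can derive a level-$n$ atom, satisfaction by a pair of~$\wv'$ collapses to satisfaction of the corresponding reduced rule of~$\Pi^\wv$, and the minimality of each $T\in\wv$ as a model of~$\Pi^\wv$ forces its level-$n$ atoms into the ``here''-part.

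I expect the main obstacle to be making this inductive step rigorous in the presence of disjunctive heads and of objective dependencies that need not follow~$\lambda$: one must track how condition~(i) of epistemic tightness behaves under the negatively subjective reduct and combine it carefully with the minimality of~$T$ as a \emph{disjunctive} stable model, since several atoms of a layer may be derived jointly and a rule's objective body may refer to atoms outside that layer. It is tempting to shortcut the argument by turning~$\Pi^{\underline{\wv}}$ into an epistemically stratified program and invoking Corollary~\ref{cor:stratified.coincide}; but the negatively subjective reduct can destroy condition~(i) of the level mapping, so~$\Pi^{\underline{\wv}}$ need not be epistemically stratified, and an explicit minimality argument like the one above appears to be unavoidable.
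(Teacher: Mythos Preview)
Your preliminary reduction matches the paper: pass to $\Pi^{\underline{\wv}}$ via Proposition~\ref{prop:negatively.subjective.reduct.FAEEL.world.view} (together with the analogous fact for G91, namely $(\Pi^{\underline{\wv}})^{\wv}=\Pi^{\wv}$). From that point on, however, the paper takes \emph{exactly the shortcut you reject in your last paragraph}: it observes that after the reduct every remaining subjective body literal is positive, so $\Bodym(r)=\Bodymp(r)$ and hence $dep=dep^{+}$ on $\Pi^{\underline{\wv}}$; therefore epistemic tightness of~$\Pi^{\underline{\wv}}$ \emph{is} epistemic stratification, and Corollary~\ref{cor:stratified.coincide} finishes the argument in one line.

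Your objection that the reduct ``can destroy condition~(i)'' rests on reading $\Atoms(r)\setminus\Bodym(r)$ as a set difference of atom sets. Under the intended reading --- the atoms occurring in $\Head(r)\cup\Bodyr(r)$ --- the set is unchanged by the negatively subjective reduct, since the reduct only replaces subjective body literals by truth constants and leaves head and objective body intact. With that reading the same level mapping~$\lambda$ that witnesses tightness of~$\Pi$ also witnesses tightness of~$\Pi^{\underline{\wv}}$, and hence its stratification. So the shortcut is legitimate.

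Your layer-by-layer minimality argument is a plausible alternative and could presumably be pushed through, but it amounts to reproving Corollary~\ref{cor:stratified.coincide} (which itself iterates the epistemic splitting property across strata) by hand; the difficulties with disjunctive heads and cross-layer objective dependencies that you anticipate are precisely what that corollary already absorbs.
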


\begin{proof}
For any belief view~$\wv$ and espistemically tight program~$\Pi$
it follows that
$\wv$ is \mbox{FAEEL-world} view of~$\Pi$ iff
$\wv$ is a \mbox{FAEEL-world} view of~$\Pi^{\underline{\wv}}$
(Proposition~\ref{prop:negatively.subjective.reduct.FAEEL.world.view}).
Furthermore, 
it is easy to see that, since $\Pi$ is espistemically tight,
$\Pi^{\underline{\wv}}$ is also espistemically tight.
Moreover, for every $r \in \Pi^{\underline{\wv}}$ and every $L \in \Bodym(r)$,
we can check that $L \in \Bodyp(r)$.
That is, there are no subjective literals in the scope of negation and, thus,
$\Pi^{\underline{\wv}}$ being espistemically tight
implies that
$\Pi^{\underline{\wv}}$ is espistemically stratified.
Then, the result follows directly from
Corollary~\ref{cor:stratified.coincide}.
\end{proof}

It is also worth to mention that, as observed in~\cite{CabalarFF2019faeel}, it is possible to obtain Moore's Autoepistemic Logic from FAEEL simply by adding the exclude middle axiom  $p \vee \neg p$ for every atom in the signature.
Note that, if the original program was stratified, augmenting it with these formulas does not change this property. As a result we obtain the following corollary:

\begin{corollary}\label{cor:stratified.Moore}
Any epistemically stratified program~$\Pi$ has at most one Moore's autoepistemic extension.\qed
\end{corollary}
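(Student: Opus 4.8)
The plan is to derive the statement from material already assembled in the paper, via the translation of Moore's Autoepistemic Logic into FAEEL recalled just above. Given an epistemically stratified program $\Pi$, put $\Pi' \eqdef \Pi \cup \setm{ p \leftarrow \neg\neg p }{ p \in \at }$. Since $\neg\neg p \to p$ is equivalent to $p \vee \neg p$ in the underlying (modal) logic of here-and-there, $\Pi'$ is exactly the program obtained by adding all excluded-middle axioms to $\Pi$; hence, by the observation of \citeN{CabalarFF2019faeel} recalled before the statement, the Moore autoepistemic extensions of $\Pi$ are precisely the FAEEL-world views of $\Pi'$. It therefore suffices to prove that $\Pi'$ admits at most one FAEEL-world view.

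The first step is to check that $\Pi'$ is again epistemically stratified, witnessed by the very same mapping $\lambda$ that stratifies $\Pi$. Each new rule $p \leftarrow \neg\neg p$ mentions only the atom $p$ and contains no subjective literal, so it contributes no pair to the dependence relation $dep$ (leaving condition~(ii) untouched) and forces only the trivial requirement $\lambda(p)=\lambda(p)$ for condition~(i); thus $\lambda$ still witnesses epistemic stratification of $\Pi'$.

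The second step is to invoke the uniqueness result for epistemically stratified programs. By Theorem~\ref{thm:epistemic.splitting.faeel}, FAEEL satisfies epistemic splitting, and by Proposition~3 of \citeN{CabalarFF2019faeel} (recalled above) it satisfies supra-ASP; hence instantiating Theorem~\ref{thm:stratified.agrees.G91} with $\sS = \sS' = \text{FAEEL}$ shows that the epistemically stratified program $\Pi'$ has at most one FAEEL-world view. (Equivalently, one may combine Corollary~\ref{cor:stratified.coincide} with the corresponding uniqueness fact for G91.) Consequently $\Pi$ has at most one Moore autoepistemic extension, as claimed.

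I expect the only genuinely delicate point to be the first step: one must make sure the excluded-middle axioms are phrased inside the rule fragment of the language (here, as $p \leftarrow \neg\neg p$) so that the syntactic definition of epistemic stratification literally applies, and confirm that this rephrasing preserves FAEEL-world views. Both are routine, since the rewriting of $p \vee \neg p$ as $p \leftarrow \neg\neg p$ is an equivalence in the logic of belief interpretations and introduces no occurrence of the modal operator $\bK$, so it affects neither the set of world views nor the dependence relation.
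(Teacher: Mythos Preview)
Your argument is correct and follows essentially the same route as the paper: add the excluded-middle axioms, observe that stratification is preserved, and appeal to the uniqueness of world views for epistemically stratified programs in a semantics satisfying splitting and supra-ASP. Your only addition is the explicit rewriting of $p \vee \neg p$ as the rule $p \leftarrow \neg\neg p$ so that the syntactic definition of epistemic stratification literally applies; the paper simply asserts that ``augmenting it with these formulas does not change this property'' without spelling out the rule form, so your version is a bit more careful on this point.
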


A similar result was originally proved in~\cite[Theorem~4]{DBLP:conf/aaai/Gelfond87}.
Note that the class of programs considered stratified by us is slightly broader than the one used in~\cite{DBLP:conf/aaai/Gelfond87}: constraints are allowed in every strata and no distinction is made between positive and negative objective literals.
The price to pay is that Corollary~\ref{cor:stratified.Moore} does not ensure the existence of an extension.

\review{\section{Related work}\label{sec:related}}{rev3.1}

As mentioned in the introduction, the search for a ``satisfactory'' semantics for epistemic logic programs has leave us with a variety of semantics~\cite{Gelfond91,WangZ05,Truszczynski11,Gelfond11,CerroHS15,Kahl15,ShenE17,CabalarFF2019faeel}.
Among this
Epistemic Equilibrium Logic (EEL;~\citeNP{CerroHS15}) is very similar to Founded EEL in the sense that it is also defined as a combination of Equilibrium Logic and the modal logic S5.
There are some slight differences though, and as the name suggest Founded EEL satisfies the founded property defined~\cite{CabalarFF2019faeel} while EEL does not.
In fact, EEL can be characterised by a particular class of belief views that we call here~\emph{simple}:

\begin{definition}\label{def:EEL.world.view}
We say that a belief view $\wv$ is \emph{simple} iff for any $\tuple{H,T} \in \wv$ and $\tuple{H',T} \in \wv$, we have $H = H'$.
A a total belief view~$\wv$ is called an \emph{EEL-world view} of a theory~$\Gamma$
iff $\wv$ is a belief model of~$\Gamma$ and there is no simple belief model $\wv'$ of~$\Gamma$ satisfying $\wv = (\wv')^t$ and $H \subset T$ for some $\tuple{H,T} \in \wv'$.\qed
\end{definition}

It is easy to see that Definition~\ref{def:EEL.world.view} is just a rephrasing of epistemic equilibrium models as defined in~\cite{CerroHS15} by using the notation of this paper.

\begin{Proposition}{\label{prop:eel.char}}
A a total belief view~$\wv$ is an \emph{EEL-model} of a theory~$\Gamma$
iff $\wv$ is a belief model of~$\Gamma$ and there is no simple belief model $\wv'$ of $\Gamma$
such that $\wv' \prec \wv$.
\end{Proposition}

\begin{theorem}\label{thm:EEL}
Every FAEEL and \mbox{FEEL-world} view of any theory~$\Gamma$ is also an \mbox{EEL-world view.\qed}
\end{theorem}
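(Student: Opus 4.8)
The plan is to reduce the statement to the FEEL case and then exploit that EEL minimises over a strictly smaller class of candidate counter-models than FEEL does. By Theorem~\ref{prop:faeel->feel}, every FAEEL-world view of $\Gamma$ is already an FEEL-world view, so it suffices to show that every FEEL-world view $\wv$ of $\Gamma$ is an EEL-world view of $\Gamma$.

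For that I would work through Proposition~\ref{prop:eel.char}, which characterises EEL-world views as those total belief views $\wv$ that are belief models of $\Gamma$ and admit no simple belief model $\wv'$ of $\Gamma$ with $\wv' \prec \wv$. The model part is immediate: an FEEL-world view is, by Definition~\ref{def:eq.epistemic.models}, a total epistemic model of $\Gamma$, i.e. $\kdint{T_i}{\wv}\models\varphi$ for all $T_i\in\wv$ and $\varphi\in\Gamma$, which is exactly what it means for the (total) belief view $\wv$ to be a belief model of $\Gamma$ in the sense of Definition~\ref{def:EEL.world.view}. For the minimality part, suppose towards a contradiction that there is a simple belief model $\wv'$ of $\Gamma$ with $\wv'\prec\wv$. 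A simple belief model is in particular a belief model, hence an epistemic model of $\Gamma$ in the FEEL sense; together with $\wv'\prec\wv$ this contradicts the minimality clause of Definition~\ref{def:eq.epistemic.models} satisfied by $\wv$ as an FEEL-world view. So no such $\wv'$ exists, and by Proposition~\ref{prop:eel.char} $\wv$ is an EEL-world view.

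I do not expect a genuine obstacle here: the whole argument is the observation that minimisation over the larger class of all epistemic models (FEEL) is stronger than minimisation over the subclass of simple belief models (EEL). The only points that need care are bookkeeping ones, namely checking that ``belief model of a belief view'' in Definition~\ref{def:EEL.world.view}/Proposition~\ref{prop:eel.char} coincides with ``epistemic model'' in the FEEL sense (there is no separate real world once we pass to a belief view), and that the order $\prec$ appearing in Proposition~\ref{prop:eel.char} is literally the order on belief views from Definition~\ref{def:int.prec.views} used to define FEEL-world views. Both hold by construction, so the chain FAEEL-world view $\Rightarrow$ FEEL-world view $\Rightarrow$ EEL-world view goes through.
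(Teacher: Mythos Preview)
Your proposal is correct and follows essentially the same route as the paper: reduce FAEEL to FEEL via Theorem~\ref{prop:faeel->feel}, then invoke Proposition~\ref{prop:eel.char} and observe that every simple belief model is in particular an epistemic model, so FEEL-minimality (over all epistemic models) implies EEL-minimality (over simple ones). The paper phrases the containment the other way around (``every total belief view is simple, though there are non-total belief views that are not simple''), but the content is identical.
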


\begin{proof}
Note that that every total belief view is simple, though there are non-total belief views that are not simple.
Then, the result follows directly from Proposition~\ref{prop:eel.char} and Theorem~\ref{prop:faeel->feel}.
\end{proof}

As illustrated by the following example, in general, the converse of Theorem~\ref{thm:EEL} does not hold.

\begin{example}\label{ex:unfounded}
Take the epistemic logic program\newprogramhide\label{theory:larger.set.of.worlds}:
\begin{gather*}
a \vee b
\hspace{2cm}
a \leftarrow \bL b
\hspace{2cm}
b \leftarrow \bL a
\tag{$\program\ref{theory:larger.set.of.worlds}$}
\end{gather*}
whose unique epistemic equilibrium model is
$\wv = \sset{ \set{a}, \set{b} }$.
Note that 
$\wv' = \sset{ \set{a , b} }$
is not an epistemic equilibrium model because $\wv'' \prec \wv'$ with $\wv'' = \sset{ \tuple{\set{a},\set{a,b}} , \tuple{\set{b},\set{a,b}}}$.
However, $\wv'$ is an EEL-model.
Note that $\wv''$ is not a simple belief view and, thus, cannot be used as a witness to show that $\wv'$ is not an EEL-model.
On the other hand, the simple belief views
$\sset{ \tuple{\set{a},\set{a,b}} }$, \
$\sset{ \tuple{\set{b},\set{a,b}}}$,
and
$\sset{ \tuple{\emptyset,\set{a,b}}}$
are not models of this program.\qed
\end{example}

Interestingly, it can be shown that EEL also satisfies epistemic splitting.

\begin{Theorem}[Epistemic splitting in EEL]{\label{thm:epistemic.splitting.eel}}
EEL satisfies the epistemic splitting property.
Furthermore, if $\wv$ is a \mbox{EEL-world} view of some program~$\Pi$ with respect to some splitting set~$U$
then $\tuple{\restr{\wv}{U},\restr{\wv}{\overline{U}}}$ is a EEL-solution of~$\Pi$ and it satisfies that $\wv = \restr{\wv}{U} \sqcup \restr{\wv}{\overline{U}}$.\qed
\end{Theorem}

The proof of Theorem~\ref{thm:epistemic.splitting.eel} is analogous to the proof of Theorem~\ref{thm:epistemic.splitting.feel} just taking into account that now we have to restrict ourselves to simple interpretations.
Note that, in general, EEL does not satisfy supra-ASP.
Example~\ref{ex:faeel.vs.feel} can be used to illustrate this statement and, in fact, the program in this example has exactly the same \mbox{EEL-world} views as \mbox{FEEL-world} views.
For this reason~\cite{CerroHS15} also included a selection of \mbox{EEL-world} views called \mbox{AEEL-world} views in a similar spirit as how FAEEL-world views are a selection of \mbox{FEEL-world} views.
However, it has been shown in~\cite{Cabalar2018nmr,CabalarFF2019splitting} that AEEL does not satisfy epistemic splitting.
Theorem~\ref{thm:epistemic.splitting.eel} sheds more light into this issue by showing that it is not the EEL logic, but the selection of \mbox{AEEL-world} views, what breaks the splitting property.
In this sense, it would be possible to define \mbox{AEEL-world} views in an alternative way as the intersection of \mbox{EEL-world} views and G91-world views and obtain yet another semantics that satisfy epistemic splitting.
Note though that this alternative semantics (EEL+G91) would not satisfy the foundedness property introduced in~\cite{CabalarFF2019faeel}.
In fact, a variation of Example~\ref{ex:unfounded}, obtained by adding the constraint $\bot \leftarrow \neg\!\bL a$ to program~\program\ref{theory:larger.set.of.worlds}, was used~\cite{CabalarFF2019faeel} to show that, among others, AEEL does not satisfy the foundedness property.
This same example can also be used to show that EEL and EEL+G91 do not satisfy it.

To summarise the state of the art, let us recall the remaining property introduced in~\cite{Cabalar2018nmr,CabalarFF2019splitting}:

\begin{property}[Supra-S5]\label{property:supraS5}
A semantics satisfies \emph{supra-S5} when for every world view $W$ of an epistemic program $\Pi$ and for every $I \in W$, $\tuple{W,I} \models \Pi$. \qed
\end{property}

\begin{table}
\begin{tabular}{  @{}p{6.5pc}  c c c c c c c c c }
&G91 & G11 & EEL & AEL & EEL+G91 & K15 & S17 & FEEL & FAEEL
\\\hline
Supra-S5& 
\checkmark & \checkmark & \checkmark & \checkmark & \checkmark & \checkmark & \checkmark & \checkmark & \checkmark
\\\hline
Supra-ASP& 
\checkmark & \checkmark &  & \checkmark & \checkmark & \checkmark & \checkmark &  & \checkmark 
\\\hline
Subjective constraint monotonicity&
\checkmark & \checkmark & \checkmark & & \checkmark & & & \checkmark & \checkmark
\\\hline
Splitting
&
\checkmark &  & \checkmark & & \checkmark & & & \checkmark & \checkmark
\\\hline
Foundness&
& & & & & & & \checkmark & \checkmark
\\\hline
\end{tabular}
	\caption{Summary of properties in different semantics.}
	\label{table:summary}
\end{table}

Table~\ref{table:summary} summarises the known results for different semantics with respect to Properties~1-\ref{property:supraS5} plus \emph{foundness}.
Recall that, intuitively, \emph{foundness} means that a semantics is free of self-supported world view.
For space reasons, we refer to~\cite{CabalarFF2019faeel} for a formal definition.
Recall also that~\cite{WangZ05,Truszczynski11} extended the semantics of G91 to arbitrary theories in different ways, but for the class of logic programs both of them agree with G91.
For this reason, we will refer to both of them just as G91 in the table.
Counterexamples for the non satified properties can be founed in~\cite{KahlL18,Cabalar2018nmr,CabalarFF2019splitting,CabalarFF2019faeel} and in this paper for the case of EEL and FEEL not satisfying \mbox{Supra-ASP}.
Proofs for the satisfied properties can be found in~\cite{Cabalar2018nmr,CabalarFF2019splitting,CabalarFF2019faeel}
and in this paper.
To complete the table, the following result shows that, despite not satisfying epistemic splitting, the semantics propsed by~\citeN{Gelfond11} does satisfy subjective constraing monotonicity.

\begin{Proposition}{\label{prop:g11.subjective.contraint.monotonicity}}
The semantics defined by~\citeN{Gelfond11} satisfies subjective constraint monotonicity.\qed
\end{Proposition}

\section{Conclusions}
\label{sec:conclusions}

We have shown that Founded Autoepistemic EEL satisfies the \emph{epistemic splitting}, a desirable property for epistemic logic programs that, among previous semantics, was known to be satisfied only by G91.
On the other hand, it is well-known that the G91 semantics suffers from \mbox{self-supported} world views, 
something that was proved to be not the case for Founded Autoepistemic EEL in~\cite{CabalarFF2019faeel}.
In this sense, Founded Autoepistemic EEL is the first semantics whose world views are not \mbox{self-supported} and that satisfies epistemic splitting.
Furthermore, we have shown that, for epistemic tight programs (those not containing cycles involving positive epistemic literals), both G91 and Founded Autoepistemic EEL coincide.
This means that Founded Autoepistemic EEL corrects the problem with self-supported world views present in G91 without introducing further variations that are unrelated to this problem.

In addition, we have introduced Founded EEL, a logic which can be considered as a combination of the stable models semantics and the modal logic~S5, and an alternative characterisation of Founded Autoepistemic EEL-world views in terms of Founded EEL and G91.
This alternative characterisation may help us to further study properties of Founded Autoepistemic EEL and, in fact, it already has been used to prove the epistemic splitting property, strengthen the relation between Founded Autoepistemic EEL and G91, and also to study the relation with the Epistemic Equilibrium Logic introduced by~\citeN{CerroHS15}.

\bibliographystyle{acmtrans}
\bibliography{refs,dblp}

\label{lastpage}

\newpage

\appendix
\section{Proofs of Results}

The proof of Theorem~\ref{thm:faeel<->feel+g91} rely on the definition of weak autoepistemic world views
which coincide with G91-world views.
Let us start by defining semi-total interpretations.
We say that an interpretation $\tuple{\wv,H,T}$ is \emph{belief-total} iff 
$\wv$ is total.
It is easy to see that, every total interpretation is belief-total but not vice-versa.

\begin{definition}
A total interpretation $\cI$ is said to be a \emph{weak \kdequilibrium model} of some theory~$\Gamma$ iff $\cI$ is a model of $\Gamma$ and there is no other belief-total model $\cI'$ of $\Gamma$ such that $\cI' \prec \cI$.\qed
\end{definition}

Note that every \kdequilibrium model is also a weak \kdequilibrium model, but not vice-versa.
For instance, $\kdint{\set{a}}{\sset{\set{a}}}$ is a weak \kdequilibrium model of $\set{a \leftarrow \bL a}$
but not a \kdequilibrium model.
By $\WMEQ[\Gamma]$ we denote the set of weak equilibrium belief models of $\Gamma$.
As in FAEEL, we impose a fixpoint condition to minimise the agent's knowledge as follows.

\begin{definition}\label{def:weal-au-eqmodel}
A total belief view~$\wv$ is called a \emph{weak autoepistemic world view} of~$\Gamma$
iff
\begin{IEEEeqnarray*}{c+x*}
\wv \ \ = \ \ \setm{ T }{ \kdint{T}{\wv } \in \WMEQ[ \Gamma ] }
&\qed
\end{IEEEeqnarray*}
\end{definition}

\begin{lemma}\label{lem:semi-total.reduct}
Let $\Gamma$ be a formula and $\cI = \tuple{\wv,H,T}$ be a semi-total interpretation.
Then, $\cI$ is a model of $\varphi$ iff $\cI$ is a model of $\varphi^\wv$.\qed
\end{lemma}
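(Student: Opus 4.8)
The plan is to prove, by structural induction on $\varphi$, the pointwise equivalence ``$\cI \models \varphi$ iff $\cI \models \varphi^\wv$'' for every belief-total interpretation $\cI = \tuple{\wv,H,T}$ (i.e.\ with $\wv$ total). I would fix the total belief view $\wv$ once and for all and let the induction range over all pairs $H \subseteq T$, so that the induction hypothesis is also available at the ``there''-point $\tuple{\wv,T,T}$; this turns out to be exactly what the implication clause needs. I read the ``$\Gamma$'' in the statement as a single formula $\varphi$ and ``$\cI$ is a model of $\varphi$'' as $\cI \models \varphi$; the extension to theories and to the full belief-model notion is immediate afterwards, since every evaluation point in $\wv \cup \{\tuple{H,T}\}$ carries the same belief view $\wv$, hence the same reduct.

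The base cases $\varphi = \bot$ and $\varphi = a \in \At$ are trivial: $\varphi$ has no subjective subformula, so $\varphi^\wv = \varphi$. For $\varphi = \varphi_1 \wedge \varphi_2$, $\varphi_1 \vee \varphi_2$ or $\varphi_1 \to \varphi_2$ the first thing I would record is the purely syntactic fact that the subjective reduct commutes with these connectives, $(\varphi_1 \circ \varphi_2)^\wv = \varphi_1^\wv \circ \varphi_2^\wv$, because the top connective is not $\bK$ and so the reduct recurses into the two subformulas independently. After that, each satisfaction clause is decided by the truth values of $\varphi_1$ and $\varphi_2$ at $\cI$ — and, for the second condition of the implication clause, at $\kdint{T}{\wv^t}$. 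Since $\wv$ is total we have $\wv^t = \wv$, so $\kdint{T}{\wv^t} = \tuple{\wv,T,T}$ is again belief-total with the same belief view; applying the induction hypothesis both at $\cI$ and at $\tuple{\wv,T,T}$ then yields the equivalence for $\varphi_1 \circ \varphi_2$.

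The only case with real content is $\varphi = \bK\psi$, and here the point I would stress is that $\cI \models \bK\psi$ does not depend on the real-world component $\tuple{H,T}$ at all: by the clause for $\bK$, $\cI \models \bK\psi$ holds exactly when $\kdint{H_i,T_i}{\wv} \models \psi$ for all $\tuple{H_i,T_i} \in \wv$, which is precisely the condition $\wv \models \bK\psi$ that the subjective reduct tests (with signature $U = \At$, so its side condition $\Atoms(\psi) \subseteq U$ is vacuous). Consequently, if $\wv \models \bK\psi$ then $\varphi^\wv = \top$ and both sides of the equivalence hold, and if $\wv \not\models \bK\psi$ then $\varphi^\wv = \bot$ and both sides fail. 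This case is handled without recursing into $\psi$, which is also why it is harmless that $\kdint{H_i,T_i}{\wv}$ differs from $\cI$.

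I do not expect a genuine obstacle: this is a routine induction, and the subjective reduct was designed precisely so that it commutes with the non-modal structure while evaluating the modal atoms against $\wv$. The one spot demanding care is the implication clause, whose auxiliary evaluation point $\kdint{T}{\wv^t}$ carries belief view $\wv^t$ rather than $\wv$; the hypothesis that $\cI$ is belief-total is exactly what collapses $\wv^t$ to $\wv$ and keeps the argument within the scope of the (suitably quantified) induction hypothesis. Finally, I would phrase the lemma for later use in the form ``$\cI$ is a belief-total (belief) model of $\Gamma$ iff it is one of $\Gamma^\wv$'', which follows by applying the pointwise equivalence at every point of $\wv \cup \{\tuple{H,T}\}$.
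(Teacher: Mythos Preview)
Your proposal is correct and follows essentially the same route as the paper's proof: both establish the pointwise equivalence $\cI \models \varphi \Leftrightarrow \cI \models \varphi^\wv$ by structural induction, treat the $\bK\psi$ case by observing that its truth depends only on the (total) belief view $\wv$ and hence coincides with the test defining the reduct, and then lift the pointwise equivalence to the belief-model notion by applying it at every point of $\wv \cup \{\tuple{H,T}\}$. You are merely more explicit than the paper on two points it leaves implicit---the commutation $(\varphi_1 \circ \varphi_2)^\wv = \varphi_1^\wv \circ \varphi_2^\wv$ and the fact that $\wv^t=\wv$ is needed for the implication clause---but the argument is the same.
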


\begin{proof}
Assume that $\varphi = \bL \psi$.
Then, we have that
$\tuple{\wv,H',T'} \models \varphi$ if and only if
$\tuple{\wv,T'',T''} \models \psi$ for every $\tuple{T'',T''} \in \wv$
iff
$\wv$ is a \sfmodel of $\varphi$
iff $\varphi^\wv = \top$
iff $\cI \models \varphi^\wv$.
Then, by induction in the structure of~$\varphi$,
we get that
$\cI \models \varphi$
iff
$\cI \models \varphi^\wv$.
Finally, we have that $\cI$ is a model of $\varphi$
iff $\cI \models \varphi$ and $\tuple{\wv,T',T'} \models \varphi$ for every $\tuple{T',T'} \in \wv$
iff $\cI \models \varphi^\wv$ and $\tuple{\wv,T',T'} \models \varphi^\wv$ for every $\tuple{T',T'} \in \wv$
iff  $\cI$ is a model of $\varphi^\wv$.
\end{proof}

\begin{lemma}\label{lem:ht.correspondence}
Let $\Gamma$ be a propositional theory and $\cI = \tuple{\wv,H,T}$ be some interpretation.
Then, $\cI$ is a model of $\Gamma$ iff $\tuple{H',T'}$ is a
\htmodel of $\Gamma$ 
for every \htinterpretation $\tuple{H',T'} \in \wv \cup \set{ \tuple{H,T} }$.\qed
\end{lemma}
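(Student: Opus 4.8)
The plan is to reduce the statement to a pointwise correspondence between the belief‑interpretation semantics and the plain HT semantics on objective formulas, and then to prove that correspondence by a short structural induction.

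First I would unfold the definition of belief model: $\cI=\tuple{\wv,H,T}$ is a model of $\Gamma$ iff $\kdint{H',T'}{\wv}\models\varphi$ holds for every HT-interpretation $\tuple{H',T'}\in\wv\cup\set{\tuple{H,T}}$ and every $\varphi\in\Gamma$; reorganising the two universal quantifiers, this says precisely that for each $\tuple{H',T'}\in\wv\cup\set{\tuple{H,T}}$ the interpretation $\kdint{H',T'}{\wv}$ satisfies all of $\Gamma$. Since ``$\tuple{H',T'}$ is an HT-model of $\Gamma$'' unfolds in the same way, the lemma will follow once we establish the claim: for every objective formula $\varphi$, every belief view $\wv$, and every HT-interpretation $\tuple{H',T'}$, we have $\kdint{H',T'}{\wv}\models\varphi$ iff $\tuple{H',T'}\models\varphi$. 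I would stress that the claim is quantified over \emph{all} belief views, because the induction step for implication invokes it with a second belief view, namely $\wv^t$.

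The claim is then proved by structural induction on $\varphi$. The cases $\varphi=\bot$ and $\varphi=a$ are immediate from the satisfaction clauses (the atom case reduces to $a\in H'$ on both sides), and $\varphi=\psi_1\wedge\psi_2$ and $\varphi=\psi_1\vee\psi_2$ follow directly from the induction hypothesis since both semantics treat $\wedge$ and $\vee$ componentwise; there is no $\bL\psi$ case because $\varphi$ is objective. The only case requiring attention is $\varphi=\psi_1\to\psi_2$: here $\kdint{H',T'}{\wv}\models\psi_1\to\psi_2$ means both (i) $\kdint{H',T'}{\wv}\not\models\psi_1$ or $\kdint{H',T'}{\wv}\models\psi_2$, and (ii) $\kdint{T'}{\wv^t}\not\models\psi_1$ or $\kdint{T'}{\wv^t}\models\psi_2$. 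Applying the induction hypothesis to $\psi_1,\psi_2$ with belief view $\wv$ at $\tuple{H',T'}$ rewrites (i) as ``$\tuple{H',T'}\not\models\psi_1$ or $\tuple{H',T'}\models\psi_2$''; applying it to $\psi_1,\psi_2$ with belief view $\wv^t$ at the total interpretation $\tuple{T',T'}$, together with the fact recalled in Section~\ref{sec:ht} that $\tuple{T',T'}\models\chi$ iff $T'$ classically satisfies $\chi$, rewrites (ii) as ``$T'$ classically satisfies $\psi_1\to\psi_2$''. The conjunction of the two rewritten statements is exactly the HT clause for $\tuple{H',T'}\models\psi_1\to\psi_2$, which closes the induction and hence the lemma.

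I do not expect a real obstacle, since this is a routine correspondence lemma with a shallow induction. The one place needing care is the implication case, where the two conjuncts of the belief-interpretation clause for $\to$ (an ``here'' condition at $\kdint{H',T'}{\wv}$ and a ``there'' condition at $\kdint{T'}{\wv^t}$) must be matched against the two conjuncts of the HT clause for $\to$ (a classical condition at $T'$ and an ``here'' condition at $\tuple{H',T'}$), and the induction hypothesis must be applied with the right belief view in each of the two sub-invocations — which is the reason the claim is stated uniformly over all belief views.
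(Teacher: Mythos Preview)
Your proposal is correct and follows essentially the same approach as the paper: both unfold the definition of belief model and reduce to the pointwise claim that $\kdint{H',T'}{\wv}\models\varphi$ iff $\tuple{H',T'}\models\varphi$ for objective~$\varphi$. The only difference is that the paper asserts this claim in one line (``since $\varphi$ is a propositional formula\dots''), whereas you spell out the structural induction, including the care needed in the implication case with~$\wv^t$.
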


\begin{proof}
By definition, we have that
$\cI$ is a model of $\Gamma$
iff
$\cI$ is a model of $\varphi$
for all $\varphi \in \Gamma$.
Furthermore,
$\cI$ is a model of $\varphi$
iff $\cI \models \varphi$
and
$\tuple{\wv,H',T'} \models \varphi$ for every $\tuple{H',T'} \in \wv$
iff
$\tuple{\wv,H',T'} \models \varphi$ for every \htinterpretation $\tuple{H',T'} \in \wv \cup \set{ \tuple{H,T} }$.
Finally, since $\varphi$ is a propositional formula, we have that
$\tuple{\wv,H',T'} \models \varphi$
iff
$\tuple{H',T'} \models \varphi$.
Hence,
$\cI$ is a model of $\Gamma$ iff $\tuple{H',T'}$ is a
\htmodel of $\Gamma$ 
for every \htinterpretation $\tuple{H',T'} \in \wv \cup \set{ \tuple{H,T} }$.\qed
\end{proof}

\begin{theorem}\label{thm:g91.weak}
Given a theory~$\Gamma$ and some belief view~$\wv$,
we have that $\wv$ is a weak autoepistemic world view of~$\Gamma$ iff $\wv$ is a G91-world view of $\Gamma$.\qed
\end{theorem}

\begin{proof}
From Lemmas~\ref{lem:semi-total.reduct} and~\ref{lem:ht.correspondence},
we have that
$\tuple{\wv,T} \in \WEQ[ \Gamma ]$ iff $\tuple{\wv,T} \in \WEQ[ \Gamma^\wv ]$
iff $\wv \cup \set{T} \subseteq \SM[ \Gamma^\wv ]$
and, therefore, Definition~\ref{def:weal-au-eqmodel} can be rewriting as
\begin{gather*}
\wv \ \ = \ \ \setm{ T }{ \wv \cup \set{T} \subseteq \SM[ \Gamma^\wv ] }
\end{gather*}
iff
$\wv = \setm{ T }{ T \in \SM[ \Gamma^\wv ] }$
iff $\wv = \SM[ \Gamma^\wv ]$
iff $\wv$ is a G91-world view of $\Gamma$.
\end{proof}

\begin{proposition}\label{prop:a.world.view->weak}
Every FAEEL-world view is also a weak autoepistemic world view.\qed
\end{proposition}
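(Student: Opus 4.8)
The plan is to verify directly the fixpoint equation of Definition~\ref{def:weal-au-eqmodel} for a given FAEEL-world view $\wv$ of $\Gamma$, i.e.\ to show $\wv = \setm{T}{\kdint{T}{\wv} \in \WMEQ[\Gamma]}$ by proving both inclusions. The inclusion $\subseteq$ is immediate: every belief-total model is a belief model, so if an interpretation admits no strictly $\prec$-smaller belief model it admits no strictly $\prec$-smaller belief-total one either; hence $\MEQ[\Gamma] \subseteq \WMEQ[\Gamma]$, and since $\wv = \setm{T}{\kdint{T}{\wv} \in \MEQ[\Gamma]}$ we get $\wv \subseteq \setm{T}{\kdint{T}{\wv} \in \WMEQ[\Gamma]}$.

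For the converse I would show that $\kdint{T}{\wv} \in \WMEQ[\Gamma]$ forces $\kdint{T}{\wv} \in \MEQ[\Gamma]$ (and thus $T \in \wv$, since $\wv$ is a FAEEL-world view). Arguing by contradiction, suppose $\kdint{T}{\wv}$ is a weak equilibrium belief model but not an equilibrium belief model; being in particular a belief model, it then has a strictly $\prec$-smaller belief model $\cI' = \kdint{H',T'}{\wv'}$, with $T' = T$ by Definition~\ref{def:int.prec}. First I would dispose of the case where $\cI'$ is belief-total: conditions (ii)--(iii) of Definition~\ref{def:int.prec} then force $\wv' = \wv$, so $\cI' = \kdint{H',T}{\wv}$ with $H' \subsetneq T$, a belief-total model strictly below $\kdint{T}{\wv}$ --- contradicting $\kdint{T}{\wv} \in \WMEQ[\Gamma]$. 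So $\cI'$ is not belief-total, i.e.\ $\wv'$ contains some HT-pair $\tuple{H_i,T_i}$ with $H_i \subsetneq T_i$.

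The heart of the argument is then a ``transfer of non-minimality'': I would set $\cI'' \eqdef \kdint{H_i,T_i}{\wv'}$ and check that $\cI'' \prec \kdint{T_i}{\wv}$ while $\cI''$ is still a belief model of $\Gamma$. That $\cI''$ is a belief model follows because $\cI'$ being one already makes every HT-pair of $\wv' \cup \set{\tuple{H',T}}$ --- in particular every pair of $\wv'$ --- satisfy $\Gamma$ with respect to $\wv'$. That $\cI'' \prec \kdint{T_i}{\wv}$, and that $T_i \in \wv$ (whence $\kdint{T_i}{\wv} \in \MEQ[\Gamma]$, as $\wv$ is a FAEEL-world view), both follow from the subsidiary fact $(\wv')^t = \wv$, which is itself a direct reading of conditions (ii)--(iii) of Definition~\ref{def:int.prec} applied to $\cI' \preceq \kdint{T}{\wv}$ with $\wv$ total. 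The existence of $\cI''$ contradicts $\kdint{T_i}{\wv} \in \MEQ[\Gamma]$, closing the argument.

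I expect the main obstacle to be spotting the transfer-of-non-minimality idea --- that a witness to the non-minimality of $\kdint{T}{\wv}$ which fails to be belief-total secretly contains a witness to the non-minimality of one of the equilibrium belief models $\kdint{T_i}{\wv}$ with $T_i \in \wv$; after that, every verification ($(\wv')^t = \wv$, $\cI''$ a belief model, $\cI'' \prec \kdint{T_i}{\wv}$) is a routine unfolding of Definition~\ref{def:int.prec} and the notion of belief model. A much shorter route --- apply the cited Theorem~\ref{thm:g91} to obtain that $\wv$ is a G91-world view and then invoke Theorem~\ref{thm:g91.weak} --- is also available, but I would keep the self-contained argument above, since the appendix's purpose is precisely to re-derive the FAEEL/G91 correspondence from first principles.
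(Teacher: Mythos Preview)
Your proposal is correct and follows essentially the same approach as the paper's own proof: both arguments establish the reverse inclusion by taking a strictly $\prec$-smaller, non-belief-total witness $\cI'=\kdint{H',T}{\wv'}$ below $\kdint{T}{\wv}$, extracting an HT-pair $\tuple{H_i,T_i}\in\wv'$ with $H_i\subsetneq T_i$, and observing that $\cI''=\kdint{H_i,T_i}{\wv'}$ is itself a belief model strictly below the equilibrium belief model $\kdint{T_i}{\wv}$ (using $(\wv')^t=\wv$ and $T_i\in\wv$). Your write-up is in fact slightly tidier than the paper's: you note directly that $\cI''$ is a belief model because $\tuple{H_i,T_i}\in\wv'$, whereas the paper reaches the same point by a small detour; and your separate treatment of the belief-total case is harmless but redundant, since $\kdint{T}{\wv}\in\WMEQ[\Gamma]$ already excludes belief-total witnesses.
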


\begin{proof}
Since every \kdequilibrium model is also a weak \kdequilibrium model,
it only remains to be shown that if $\wv$ is an autoepistemic world view then
\begin{enumerate}
\item[]there not exists any propositional interpretation $T$ such that $\tuple{\wv,T}$ is a weak \kdequilibrium model of $\Gamma$ and $T \notin \wv$.
\end{enumerate}
Suppose, for the sake of contradiction, that 
there is some propositional interpretation $T$ such that $\tuple{\wv,T}$ is a weak \kdequilibrium model of $\Gamma$ and $T \notin \wv$.
Since $\wv$ is an autoepistemic world view, this implies that
$\tuple{\wv,T}$ is not a \kdequilibrium model of $\Gamma$ and, thus, that there is some non-semi-total model $\cI'=\tuple{\wv',H,T}$ of $\Gamma$ such that $\cI' \prec \cI$.
Hence, there is some $\tuple{H',T'} \in \wv'$ such that $H' \subset T'$.
Let $\cI'' = \tuple{\wv',H',T'}$.
Then, we have that $\cI'' \prec \tuple{\wv,T'}$ and, since $\wv$ is an autoepistemic world view of $\Gamma$ and $T \in \wv$, we have that $\tuple{\wv,T'}$ is a \kdequilibrium model of $\Gamma$.
These two facts together imply that $\cI''$ is not a model of $\Gamma$.
Hence, there is a formula~$\varphi\in\Gamma$ such that $\cI''$ is not a model of $\varphi$ and, thus,
there is $\tuple{H'',T''} \in \wv' \cup \set{\tuple{H',T'}}$ such that
$\tuple{\wv',H'',T''} \not\models \varphi$.
On the other hand, since
$\cI'$ is a model of $\Gamma$ ,
it follows that
$\tuple{\wv',H''',T'''} \not\models \varphi$
for every $\tuple{H''',T'''} \in \wv' \cup \set{\tuple{H,T}}$.
Hence, it follows that $H''=H'$ and $T''=T'$ and that
$\tuple{\wv',H',T'} \not\models \varphi$.
However, since $\tuple{H',T'} \in \wv'$,
this implies that
$\cI' = \tuple{\wv',H,T} \not\models \varphi$, which is a contradiction with the fact that $\cI'$ is a model of $\Gamma$.
Consequently,
$\wv$ is a weak autoepistemic world view.
\end{proof}

\begin{lemma}\label{lem:epistemic.decomposition}
Let $\Gamma$ be a theory,
$\wv$ be an FEEL-world view of~$\Gamma$ and $T \in \wv$ be a propositional interpretation.
If $\tuple{\wv,T}$ is a weak equilibrium belief model of~$\Gamma$,
then $\tuple{\wv,T}$ is also a equilibrium belief model of~$\Gamma$.
\end{lemma}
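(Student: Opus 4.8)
The plan is to argue by contradiction, using the observation that the gap between weak equilibrium belief models and equilibrium belief models consists exactly of the dominating models whose belief view is \emph{not} total. A weak equilibrium belief model $\cI=\kdint{T}{\wv}$ forbids only \emph{belief-total} models $\cI'\prec\cI$, whereas an equilibrium belief model forbids \emph{every} belief model $\cI'\prec\cI$. So, assuming $\cI$ is a weak equilibrium belief model but not an equilibrium belief model, there is a belief model $\cI'=\kdint{H',T'}{\wv'}$ with $\cI'\prec\cI$, and $\wv'$ must be non-total --- otherwise $\cI'$ would be a belief-total model dominating $\cI$, contradicting that $\cI$ is a weak equilibrium belief model.

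Next I would extract the two consequences of this witness. First, since $\cI'$ is a belief model of $\Gamma$, every $\tuple{H_i',T_i}\in\wv'$ satisfies all formulas of $\Gamma$, so $\wv'$ is an epistemic model of $\Gamma$ in the sense of Section~\ref{sec:feel}. Second, from $\cI'\preceq\cI$ and the fact that $\wv$ is total (being an FEEL-world view, hence a total epistemic model), conditions (ii) and (iii) of Definition~\ref{def:int.prec} translate directly into conditions (i) and (ii) of Definition~\ref{def:int.prec.views}, that is, $\wv'\preceq\wv$. Since $\wv'$ is non-total and $\wv$ is total, we have $\wv'\neq\wv$, hence $\wv'\prec\wv$.

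At this point the contradiction is immediate: $\wv'$ is an epistemic model of $\Gamma$ with $\wv'\prec\wv$, which contradicts the hypothesis that $\wv$ is an epistemic equilibrium model (FEEL-world view) of $\Gamma$, by Definition~\ref{def:eq.epistemic.models}. Therefore no dominating belief model $\cI'$ exists, and $\cI=\kdint{T}{\wv}$ is an equilibrium belief model of $\Gamma$. The only delicate point --- more bookkeeping than obstacle --- is the passage from the order $\preceq$ on belief interpretations (Definition~\ref{def:int.prec}) to the order $\preceq$ on belief views (Definition~\ref{def:int.prec.views}): one must check that the ``there'' worlds of $\wv'$ and $\wv$ coincide, that the non-totality of $\wv'$ is preserved, and that the hypothesis that $\cI$ is a \emph{weak} equilibrium belief model is invoked precisely to discard the belief-total candidates for $\cI'$.
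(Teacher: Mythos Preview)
Your proof is correct and follows essentially the same approach as the paper's own argument: assume by contradiction that $\cI=\kdint{T}{\wv}$ is weak equilibrium but not equilibrium, extract a dominating belief model $\cI'$ whose belief view $\wv'$ must be non-total, observe that $\wv'$ is then an epistemic model of $\Gamma$ with $\wv'\prec\wv$, and contradict the assumption that $\wv$ is an FEEL-world view. The only cosmetic difference is that the paper phrases the final step contrapositively (from $\wv'$ not being an epistemic model, conclude $\cI'$ is not a belief model), whereas you go directly; the content is identical.
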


\begin{proof}
Suppose that $\tuple{\wv,T}$ is not a \kdequilibrium model of $\Gamma$.
Since $\tuple{\wv,T}$ is a weak \kdequilibrium model of $\Gamma$,
we have that $\kdint{T}{\wv}$ is a belief model of $\Gamma$
and, thus, there must be some \mbox{non-semi-total} model $\cI'=\tuple{\wv',H,T}$ of $\Gamma$ such that
$\cI' \prec \kdint{T}{\wv}$.
Hence, there is $\tuple{H',T'} \in \wv'$ with $H \subset T$.
This implies that $\wv' \prec \wv$ and, since $\wv$ is an \mbox{FEEL-world} view of~$\Gamma$,
that $\wv$ is not an epistemic model of~$\Gamma$
Hence, $\wv'$ is not a epistemic model of $\Gamma$ either.
This implies that there is some formula $\varphi \in \Gamma$ and \htinterpretation~$\tuple{H'',T''} \in \wv'$
such that $\tuple{\wv',H'',T''} \not\models \varphi$.
In its turn, this implies that $\cI'$ is not  belief model of $\Gamma$ which is a contradiction.
Consequently, $\tuple{\wv,T}$ is a equilibrium belief model of~$\Gamma$.
\end{proof}

\begin{Proofof}{\ref{thm:faeel<->feel+g91}}
First note that from Proposition~\ref{prop:faeel->feel} and~\ref{prop:a.world.view->weak},
it follows that every \mbox{FAEEL-world} view is both an FEEL-world view and a weak autoepistemic world view of~$\Gamma$.
The other way around.
Since $\wv$ is a weak autoepistemic equilibrium model of $\Gamma$,
it follows that $\tuple{\wv,T}$ is a weak \kdequilibrium model of $\Gamma$ for every $T \in \wv$
which, from Lemma~\ref{lem:epistemic.decomposition},
implies that
$\tuple{\wv,T}$ is a equilibrium belief model of $\Gamma$ for every $T \in \wv$.
Hence, it only remains to be shown that the following condition hold:
\begin{enumerate}
\item[] there is no propositional interpretation $T$ such that model $\tuple{\wv,T}$ is a equilibrium belief model of $\Gamma$ and $T \notin \wv$.
\end{enumerate}
Suppose, for the sake of contradiction, that
there is some propositional interpretation $T$ such that model $\tuple{\wv,T}$ is an equilibrium belief model of $\Gamma$ and $T \notin \wv$.
First, this implies that $\tuple{\wv,T}$ is an belief model of $\Gamma$.
Then, since $\wv$ is a weak autoepistemic equilibrium model of $\Gamma$ and $T \notin \wv$,
it follows that 
$\tuple{\wv,T}$
cannot be a weak autoepistemic equilibrium model of $\Gamma$ and, thus,
there is some semi-total belief model $\cJ$ is a model of $\Gamma$ such that $\cJ \prec \tuple{\wv,T}$.
But every semi-total belief model is also a belief model, so this is a contradiction with the fact that $\tuple{\wv,T}$ is a equilibrium belief model of $\Gamma$.
Consequently, $\wv$ must be an autoepistemic world view of $\Gamma$.
\end{Proofof}

\begin{proposition}[Free atom invariance]\label{prop:free.atom.invariance}
Given a  set of atoms $U \subseteq \at$,
a formula~$\varphi$ such that 
$\Atoms(\varphi) \subseteq U$
and
a pair of belief views $\wv$ and $\wv'$
such that 
$\restr{\wv}{U} = \restr{\wv'\!}{U}$,
it follows that $\wv \models \varphi$ iff $\wv' \models\varphi$.\qed
\end{proposition}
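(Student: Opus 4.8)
The plan is to derive the statement from a purely local property of the satisfaction relation: the truth value of $\varphi$ at a belief interpretation $\kdint{H,T}{\wv}$ is unchanged if we intersect $H$, $T$ and every member of $\wv$ with any set of atoms that contains $\Atoms(\varphi)$. Concretely, extending the restriction operator to belief views of HT-pairs by $\restr{\wv}{U} \eqdef \setm{\tuple{H\cap U, T\cap U}}{\tuple{H,T}\in\wv}$, I would first establish the auxiliary lemma: for every belief interpretation $\kdint{H,T}{\wv}$ and every formula $\varphi$ with $\Atoms(\varphi)\subseteq U$, we have $\kdint{H,T}{\wv}\models\varphi$ iff $\kdint{H\cap U, T\cap U}{\restr{\wv}{U}}\models\varphi$.

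I would prove this lemma by structural induction on $\varphi$. The cases $\bot$, $\wedge$ and $\vee$ are immediate from the induction hypothesis, and the atomic case $\varphi = a$ uses that $a\in\Atoms(\varphi)\subseteq U$, so $a\in H$ iff $a\in H\cap U$. For $\varphi = \bL\psi$ I would apply the induction hypothesis to $\psi$ at each $\kdint{H_i,T_i}{\wv}$ with $\tuple{H_i,T_i}\in\wv$, and then observe that as $\tuple{H_i,T_i}$ ranges over $\wv$ the pair $\tuple{H_i\cap U, T_i\cap U}$ ranges exactly over $\restr{\wv}{U}$ (with harmless repetitions under a universal quantifier), which turns ``$\kdint{H_i\cap U,T_i\cap U}{\restr{\wv}{U}}\models\psi$ for all $\tuple{H_i,T_i}\in\wv$'' into ``$\kdint{H',T'}{\restr{\wv}{U}}\models\psi$ for all $\tuple{H',T'}\in\restr{\wv}{U}$''. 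For $\varphi = \psi_1\to\psi_2$ I would treat the two conjuncts of the satisfaction clause separately: conjunct~(i) follows from the induction hypothesis on $\psi_1,\psi_2$ at $\kdint{H,T}{\wv}$ itself; for conjunct~(ii), which concerns the total interpretation $\kdint{T}{\wv^t}$, I would use that restriction to $U$ commutes with the ``there''-projection, i.e. $\restr{\wv^t}{U} = (\restr{\wv}{U})^t$, so that the there-reduct of $\kdint{H\cap U,T\cap U}{\restr{\wv}{U}}$ is exactly $\kdint{T\cap U}{\restr{\wv^t}{U}}$, and then apply the induction hypothesis to $\psi_1,\psi_2$ at the (total, hence admissible) belief interpretation $\kdint{T}{\wv^t}$.

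The proposition then follows immediately. By definition $\wv\models\varphi$ means $\kdint{H_i,T_i}{\wv}\models\varphi$ for all $\tuple{H_i,T_i}\in\wv$; by the lemma this is equivalent to $\kdint{H',T'}{\restr{\wv}{U}}\models\varphi$ for all $\tuple{H',T'}\in\restr{\wv}{U}$, i.e. to $\restr{\wv}{U}\models\varphi$. The identical chain of equivalences gives $\wv'\models\varphi$ iff $\restr{\wv'}{U}\models\varphi$, and since $\restr{\wv}{U}=\restr{\wv'}{U}$ by hypothesis, the two left-hand sides coincide.

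The only point requiring care is the implication case of the induction, and within it the bookkeeping showing that the ``there''-reduct used by conjunct~(ii) commutes with restriction to $U$ and that $\kdint{T}{\wv^t}$ is itself a belief interpretation to which the induction hypothesis applies; the remaining cases are routine.
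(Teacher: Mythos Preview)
Your proposal is correct and follows essentially the same approach as the paper: both reduce satisfaction by $\wv$ to satisfaction by $\restr{\wv}{U}$ via structural induction on $\varphi$. Your version is more explicit---you isolate the pointwise lemma and spell out the implication and modal cases---whereas the paper handles only the atomic case and then defers the rest to ``induction in the structure of the formula.''
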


\begin{proof}
By definition, it follows that $\wv \models \varphi$ iff
$\kdint{H_i,T_i}{\wv} \models \varphi$ for all \mbox{$\tuple{H_i,T_i} \in \wv$}
and that
$\wv' \models \varphi$ iff
$\kdint{H_i,T_i}{\wv'} \models \varphi$ for all \mbox{$\tuple{H_i',T_i'} \in \wv'$}.
Then, if $\varphi \in \at$ is an atom, 
we have $\varphi \in U$ and, thus,
it follows that
$\wv \models \varphi$ iff
$\kdint{H_i,T_i}{\wv} \models \varphi$ for all \mbox{$\tuple{H_i,T_i} \in \wv$}
iff
$a \in H_i$ for all \mbox{$\tuple{H_i,T_i} \in \wv$}
iff
$a \in H_i \cap U$ for all \mbox{$\tuple{H_i,T_i} \in \wv$}
iff
$a \in H_i'$ for all \mbox{$\tuple{H_i',T_i'} \in \restr{\wv}{U} = \restr{\wv'\!}{U}$}
iff
$\wv' \models \varphi$.
The rest of the proof follows by induction in the structure of the formula.
\end{proof}

Let us now extended the operation~$\sqcup$ to possibly non-total belief views as follows:
$$\wv_b \sqcup \wv_t \ \  = \ \ \setm{\tuple{H_b \cup H_t,T_b \cup T_t }}{ \tuple{H_b,T_b} \in \wv_b  \text{ and } \tuple{H_t,T_t} \in \wv_t  }$$

\begin{lemma}\label{lem:aux1:prop:splitting.aux}
Let $U \subseteq \at$ be some set of atoms
and let $\wv$ and $\wv_1$ be two belief views such that $\wv_1 \preceq \restr{\wv}{U}$.
Then, $\wv_1 \sqcup \restr{\wv}{\,\overline{U}} \preceq \wv$.
\end{lemma}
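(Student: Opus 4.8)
The plan is to check directly the two clauses of Definition~\ref{def:int.prec.views} for the pair consisting of $\wv^* \eqdef \wv_1 \sqcup \restr{\wv}{\overline{U}}$ in the lower position and $\wv$ in the upper position. The one structural fact I would keep using is that $U$ and $\overline{U}$ partition $\at$, so every interpretation $X$ splits uniquely as $X = (X\cap U) \cup (X\cap\overline{U})$; this is exactly what lets the $U$-part inherited from $\wv_1$ and the $\overline{U}$-part inherited from $\restr{\wv}{\overline{U}}$ be recombined, and it guarantees that gluing two interpretations sharing both their $U$- and their $\overline{U}$-``there'' components reproduces that same ``there''.

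For clause~(i) I would start from an arbitrary $\tuple{H,T}\in\wv$ and take it apart: $\tuple{H\cap U,\,T\cap U}\in\restr{\wv}{U}$ and $\tuple{H\cap\overline{U},\,T\cap\overline{U}}\in\restr{\wv}{\overline{U}}$. Applying clause~(i) of the hypothesis $\wv_1\preceq\restr{\wv}{U}$ to $\tuple{H\cap U,\,T\cap U}$ produces some $\tuple{H_1,\,T\cap U}\in\wv_1$ with $H_1\subseteq H\cap U$. Gluing this with the $\restr{\wv}{\overline{U}}$-witness yields $\tuple{H_1\cup(H\cap\overline{U}),\,T}\in\wv^*$, and since $H_1\cup(H\cap\overline{U})\subseteq (H\cap U)\cup(H\cap\overline{U})=H$, this is precisely the element demanded by clause~(i). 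This direction is essentially routine.

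Clause~(ii) is the one I expect to be the real obstacle, and here I would argue in reverse: take an arbitrary member of $\wv^*$, written $\tuple{H_1\cup H_2,\,T_1\cup T_2}$ with $\tuple{H_1,T_1}\in\wv_1$ and $\tuple{H_2,T_2}\in\restr{\wv}{\overline{U}}$, and reconstruct a member of $\wv$ above it. The $\overline{U}$-part pulls back at once: $\tuple{H_2,T_2}=\tuple{\bar H\cap\overline{U},\,\bar T\cap\overline{U}}$ for some $\tuple{\bar H,\bar T}\in\wv$. For the $U$-part, clause~(ii) of the hypothesis gives $\tuple{H',T_1}\in\restr{\wv}{U}$ with $H_1\subseteq H'$, hence some $\tuple{\hat H,\hat T}\in\wv$ with $\hat H\cap U=H'$ and $\hat T\cap U=T_1$. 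The difficulty is that $\tuple{\bar H,\bar T}$ and $\tuple{\hat H,\hat T}$ need not be the same element of $\wv$, so closing clause~(ii) requires merging them into a single $\tuple{H,T}\in\wv$ with $T=(\hat T\cap U)\cup(\bar T\cap\overline{U})=T_1\cup T_2$ and $H_1\cup H_2\subseteq H$. This is where one must use that $\wv$ respects the split, i.e.\ that $\wv=\restr{\wv}{U}\sqcup\restr{\wv}{\overline{U}}$, which is available in the settings where the lemma is invoked (there $\wv$ is built by composing a view over $U$ with a total view over $\overline{U}$, cf.\ Proposition~\ref{prop:splitting.aux}); granting it, $\tuple{(\hat H\cap U)\cup(\bar H\cap\overline{U}),\,(\hat T\cap U)\cup(\bar T\cap\overline{U})}$ lies in $\wv$ and dominates the chosen element of $\wv^*$, completing clause~(ii) and the lemma.
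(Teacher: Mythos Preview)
Your treatment of clause~(i) is correct and is exactly what the paper does (the paper phrases it by contradiction, but the witness is the same $\tuple{H_1\cup(H\cap\overline U),\,T}$ you build).

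For clause~(ii) you have put your finger on a genuine gap, and it is a gap in the lemma itself, not just in your argument. The paper's proof does precisely the step you were worried about: from $\tuple{H_1,T_1}\in\restr{\wv}{U}$ and $\tuple{H_2,T_2}\in\restr{\wv}{\overline U}$ it simply asserts $\tuple{H_1\cup H_2,\,T_1\cup T_2}\in\wv$, with no justification. As you observed, this needs $\wv=\restr{\wv}{U}\sqcup\restr{\wv}{\overline U}$, which is not among the hypotheses. And without it the statement is false: take $\at=\{a,b\}$, $U=\{a\}$, $\wv=\sset{\{a\},\{b\}}$ (total) and $\wv_1=\restr{\wv}{U}=\sset{\{a\},\emptyset}$; then $\wv_1\sqcup\restr{\wv}{\overline U}=\sset{\emptyset,\{a\},\{b\},\{a,b\}}$, and clause~(ii) of $\preceq$ fails because $\wv$ contains no pair with ``there'' component $\{a,b\}$ (nor $\emptyset$).

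So your proof does not establish the lemma as stated, but neither does the paper's; you were simply more honest about where the argument breaks. One caution about your proposed fix: the extra hypothesis $\wv=\restr{\wv}{U}\sqcup\restr{\wv}{\overline U}$ is \emph{not} obviously available at the point of use. In the proof of Proposition~\ref{prop:splitting.aux}, $\wv$ is an arbitrary equilibrium epistemic model of $\Pi$, and nothing there guarantees it decomposes over $U$. If you want to repair the downstream argument you would need either to show that such equilibrium models always decompose, or to replace $\wv_1\sqcup\restr{\wv}{\overline U}$ by a witness built directly from $\wv$ (e.g.\ shrinking the $U$-part of each $\tuple{H,T}\in\wv$ using clause~(i) of $\wv_1\preceq\restr{\wv}{U}$), which would give an element-by-element $\wv''\prec\wv$ without ever forming the full $\sqcup$.
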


\begin{proof}
Let $\wv' = \wv_1 \sqcup \restr{\wv}{\,\overline{U}}$
and suppose, for the sake of contradiction, $\wv' \not\preceq \wv$.
Then
\begin{enumerate}[ leftmargin=18pt , label=(\roman*) ]
\item there is $\tuple{H,T} \in \wv$ such that
$H' \not\subseteq H$
for all $\tuple{H',T} \in \wv'$, or

\item there is $\tuple{H',T} \in \wv'$
such that
$H'  \not\subseteq H$
for all $\tuple{H,T} \in \wv$.
\end{enumerate}
If the former,
$\tuple{H,T} \in \wv$
implies that
$\tuple{H \cap U,T \cap U} \in \restr{\wv}{U}$
and
$\tuple{H \cap \overline{U},T \cap \overline{U}} \in \restr{\wv}{\overline{U}}$.
Furthermore,
$\tuple{H \cap U,T \cap U} \in \restr{\wv}{U}$
plus $\wv_1 \preceq \restr{\wv}{U}$
imply that
there is
$\tuple{H'',T\cap U} \in \wv_1$ such that $H'' \subseteq H \cap U$.
Let $H' = H'' \cup (H \cap \overline{U})$.
Then, $\tuple{H',T} \in \wv_1 \cup \restr{\wv}{\overline{U}}= \wv'$
which is a contradiction.
\\[5pt]
If the latter,
$\tuple{H',T} \in \wv'$
implies that there are $\tuple{H_1',T_1} \in \wv_1$ and $\tuple{H_2,T_2} \in \restr{\wv}{\overline{U}}$
such that $H' = H_1' \cup H_2$ and $T = T_1 \cup T_2$.
Furthermore,
$\tuple{H_1',T_1} \in \wv_1$ 
plus
$\wv_1 \preceq \restr{\wv}{U}$
imply that
there is
$\tuple{H_1,T_1} \in \restr{\wv}{U}$ such that 
$H_1' \subseteq H_1$.
Let $H = H_1  \cup H_2$.
Then, $\tuple{H,T} \in \wv$
and $H' = H_1' \cup H_2 \subseteq H_1 \cup H_2 = H$
which is a contradiction.
\end{proof}

\begin{lemma}\label{lem:aux2:prop:splitting.aux}
Let $U \subseteq \at$ be some set of atoms
and let $\wv$ and $\wv_1$ be two belief views such that $\wv_1 \prec \restr{\wv}{U}$.
Then, $\wv_1 \sqcup \restr{\wv}{\,\overline{U}} \prec \wv$.
\end{lemma}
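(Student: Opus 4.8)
The plan is to obtain this strict refinement directly from Lemma~\ref{lem:aux1:prop:splitting.aux}. Since $\wv_1 \prec \restr{\wv}{U}$ entails $\wv_1 \preceq \restr{\wv}{U}$, that lemma already gives $\wv' \preceq \wv$, where I abbreviate $\wv' \eqdef \wv_1 \sqcup \restr{\wv}{\,\overline{U}}$. So the whole task reduces to showing $\wv' \neq \wv$; combined with $\wv' \preceq \wv$, this yields $\wv' \prec \wv$ by definition of $\prec$.

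To establish $\wv' \neq \wv$, first I would record a small ``projection'' fact: $\restr{\wv'}{U} = \wv_1$. This rests on two observations. On one hand, $\wv_1 \preceq \restr{\wv}{U}$ forces, via both clauses of Definition~\ref{def:int.prec.views}, the set of there-components occurring in $\wv_1$ to coincide with that occurring in $\restr{\wv}{U}$; the latter are all subsets of $U$, and since $H \subseteq T$ for every $\tuple{H,T} \in \wv_1$, every atom occurring in $\wv_1$ lies in $U$. On the other hand, every atom occurring in $\restr{\wv}{\,\overline{U}}$ lies in $\overline{U}$ by construction. Because belief views are non-empty, $\restr{\wv}{\,\overline{U}} \neq \emptyset$, so forming $\wv'$ does not discard any pair of $\wv_1$, and intersecting each pair of $\wv'$ with $U$ strips off exactly the $\overline{U}$-part and returns the corresponding pair of $\wv_1$; hence $\restr{\wv'}{U} = \wv_1$.

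With this in hand the conclusion is immediate: if we had $\wv' = \wv$, then restricting both sides to $U$ would give $\wv_1 = \restr{\wv'}{U} = \restr{\wv}{U}$, contradicting the assumption $\wv_1 \neq \restr{\wv}{U}$ that is part of $\wv_1 \prec \restr{\wv}{U}$. I expect the only delicate point to be the bookkeeping in the projection step — keeping track of which atoms can occur in $\wv_1$ versus in $\restr{\wv}{\,\overline{U}}$, and checking that $\sqcup$ loses nothing when both factors are non-empty; an alternative but more tedious route would be to trace a witnessing pair through the $\sqcup$ construction exactly as in the proof of Lemma~\ref{lem:aux1:prop:splitting.aux}, which I would avoid in favour of the projection argument.
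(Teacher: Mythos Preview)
Your proof is correct. Both you and the paper begin by invoking Lemma~\ref{lem:aux1:prop:splitting.aux} to obtain $\wv' \preceq \wv$, but then diverge on how to establish strictness. The paper traces an explicit witness: from $\wv_1 \prec \restr{\wv}{U}$ it extracts pairs $\tuple{H_1',T_1} \in \wv_1$ and $\tuple{H_1,T_1} \in \restr{\wv}{U}$ with $H_1' \subset H_1$, lifts $\tuple{H_1,T_1}$ to some $\tuple{H,T} \in \wv$, and sets $H' = H_1' \cup (H \cap \overline{U})$ to obtain $\tuple{H',T} \in \wv'$ with $H' \subset H$. Your route instead proves the clean projection identity $\restr{\wv'}{U} = \wv_1$ and derives $\wv' \neq \wv$ by contraposition.

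Your argument is arguably tidier and, as stated, more robust: the paper's final step (``we can see that $H' \subset H$, consequently $\wv' \prec \wv$'') tacitly needs an extra word to conclude $\wv' \neq \wv$ from the existence of such a pair, which is immediate when $\wv$ is total (the only case used later) but not for arbitrary belief views. Your projection identity sidesteps this entirely. The witness approach, on the other hand, is more concrete and mirrors the style of Lemma~\ref{lem:aux1:prop:splitting.aux}; you correctly identify it as the ``more tedious route'' you chose to avoid.
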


\begin{proof}
Let $\wv' = \wv_1 \sqcup \restr{\wv}{\,\overline{U}}$.
By definition,
$\wv_1 \prec \restr{\wv}{U}$
implies
$\wv_1 \preceq \restr{\wv}{U}$
and, from Lemma~\ref{lem:aux1:prop:splitting.aux},
this implies
$\wv' \preceq \wv$.
Furthermore, since $\wv_1 \prec \restr{\wv}{U}$,
it follows that
there is $\tuple{H_1',T_1} \in \wv_1$ and $\tuple{H_1,T_1} \in \restr{\wv}{U}$
such that $H_1' \subset H_1$.
In addition,
$\tuple{H_1,T_1} \in \restr{\wv}{U}$
implies that there is $\tuple{H,T} \in \wv$
such that $H_1 = H \cap U$ and $T_1 = T \cap U$.
Let $H' = H_1' \cup (H \cap \overline{U})$.
Then, $\tuple{H',T} \in \wv_1 \sqcup \restr{\wv}{\overline{U}} = \wv'$ and we can see that $H' \subset H$.
Consequently, $\wv' \prec \wv$.
\end{proof}

\begin{lemma}\label{lem:preceq.project}
Let $\wv$ and $\wv'$ be two belief views such that $\wv \preceq \wv'$ and $U \subseteq \at$ be a set of atoms.
Then, $\restr{\wv}{U} \preceq \restr{\wv'\!}{U}$.
\end{lemma}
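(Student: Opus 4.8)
The plan is to unfold Definition~\ref{def:int.prec.views} on both sides and verify the two required inclusion conditions directly, transporting the witnesses provided by $\wv \preceq \wv'$ through the restriction operation. Recall that on belief views consisting of HT\nobreakdash-interpretations the restriction acts componentwise, $\restr{\wv}{U} = \setm{ \tuple{H \cap U, T \cap U} }{ \tuple{H,T} \in \wv }$ (and similarly for $\wv'$), so every element of a restricted belief view arises from an element of the original one by intersecting both components with~$U$.

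First I would check condition~(i) of $\restr{\wv}{U} \preceq \restr{\wv'\!}{U}$: take $\tuple{H_2',T'} \in \restr{\wv'\!}{U}$ and pick $\tuple{H_2,T} \in \wv'$ with $H_2' = H_2 \cap U$ and $T' = T \cap U$. Since $\wv \preceq \wv'$, condition~(i) of Definition~\ref{def:int.prec.views} yields some $\tuple{H_1,T} \in \wv$ with $H_1 \subseteq H_2$; intersecting with~$U$ gives $\tuple{H_1 \cap U, T \cap U} \in \restr{\wv}{U}$, and $H_1 \cap U \subseteq H_2 \cap U = H_2'$ while $T \cap U = T'$, exactly as required. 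Condition~(ii) is entirely symmetric: starting from $\tuple{H_1',T'} \in \restr{\wv}{U}$, lift it to $\tuple{H_1,T} \in \wv$, apply condition~(ii) of $\wv \preceq \wv'$ to obtain $\tuple{H_2,T} \in \wv'$ with $H_1 \subseteq H_2$, and project back down by intersecting with~$U$.

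The only point that needs attention is that the order $\preceq$ pairs up HT\nobreakdash-interpretations sharing the same ``there'' component, so one must confirm that the ``there'' components still match after restriction; this is immediate because the intersection with~$U$ is applied uniformly to both coordinates, so equality of ``there'' components is preserved. There is no genuine obstacle here: the statement is a routine monotonicity property of the restriction operation with respect to~$\preceq$, and the argument amounts to the two short case checks above.
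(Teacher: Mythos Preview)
Your proof is correct and follows essentially the same approach as the paper: lift an element of the restricted belief view to the original, apply the hypothesis $\wv \preceq \wv'$ to obtain a matching element with the same ``there'' component, and project back by intersecting with~$U$. The only cosmetic difference is that the paper phrases the argument by contradiction (assuming $\restr{\wv}{U} \not\preceq \restr{\wv'\!}{U}$ and deriving a contradiction from each failed clause), whereas you verify the two clauses of Definition~\ref{def:int.prec.views} directly; the underlying manipulation is identical.
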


\begin{proof}
Suppose, for the sake of contradiction, that 
$\restr{\wv}{U} \not\preceq \restr{\wv'\!}{U}$.
Then,
one of the following hold:
\begin{enumerate}[ leftmargin=18pt , label=(\roman*) ]
\item there is $\tuple{H',T} \in \restr{\wv'\!}{U}$
such that $H \subseteq H'$
for every $\tuple{H,T} \in \restr{\wv}{U}$,

\item there is $\tuple{H,T} \in \restr{\wv}{U}$ such that
$H \subseteq H'$
for every $\tuple{H',T} \in \restr{\wv'\!}{U}$,
\end{enumerate}
Assume the former.
Then, there is $\tuple{H_1',T_1} \in \wv'$ such that $H' = H_1' \cap U$ and $T = T_1 \cap U$.
Furthermore, since $\wv \preceq \wv'$, this implies that
there is $\tuple{H_1,T_1} \in \wv$ such that $H_1 \subseteq H_1'$.
In its turn, this implies that $\tuple{H,T} \in \restr{\wv}{U}$
with $H = H_1 \cap U$ and, it is easy to see that $H = H_1 \cap U \subseteq H_1' \cap U = H'$.
which is a contradiction with the assumption.
\\[10pt]
Otherwise, there is
$\tuple{H_1,T_1} \in \wv'$ such that $H = H_1 \cap U$ and $T = T_1 \cap U$.
Furthermore, since $\wv \preceq \wv'$, this implies that
there is $\tuple{H_1',T_1} \in \wv'$ such that $H_1 \subseteq H_1'$.
In its turn, this implies that $\tuple{H',T} \in \restr{\wv'\!}{U}$
with $H' = H_1' \cap U$ and, it is easy to see that $H = H_1 \cap U \subseteq H_1' \cap U = H'$.
which is a contradiction with the assumption.
\end{proof}

\begin{lemma}\label{lem:prec.project}
Let $\wv$ and $\wv'$ be two belief views such that $\wv \prec \wv'$ and $U \subseteq \at$ be a set of atoms.
Then, $\restr{\wv}{U} \prec \restr{\wv'\!}{U}$ or $\restr{\wv}{\overline{U}} \prec \restr{\wv'\!}{\overline{U}}$.
\end{lemma}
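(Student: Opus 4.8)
The plan is to bootstrap from the non‑strict version already in hand, Lemma~\ref{lem:preceq.project}, and then pin down on which side of the split the strictness must live. Since $\wv\prec\wv'$ contains $\wv\preceq\wv'$, Lemma~\ref{lem:preceq.project} gives $\restr{\wv}{U}\preceq\restr{\wv'\!}{U}$ and $\restr{\wv}{\overline{U}}\preceq\restr{\wv'\!}{\overline{U}}$ for free; and because $\prec$ is just $\preceq$ together with $\neq$, the lemma reduces to showing that these two inequalities cannot both be equalities. Equivalently, I would assume $\restr{\wv}{U}=\restr{\wv'\!}{U}$ and $\restr{\wv}{\overline{U}}=\restr{\wv'\!}{\overline{U}}$ and aim to contradict $\wv\neq\wv'$ by deriving $\wv=\wv'$.

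Two observations drive the argument. First, $\wv\preceq\wv'$ already forces $\wv$ and $\wv'$ to share the very same set of ``there''-worlds: reading conditions (i) and (ii) of Definition~\ref{def:int.prec.views} one $T$ at a time yields the inclusion $\{T\mid\tuple{H,T}\in\wv'\}\subseteq\{T\mid\tuple{H,T}\in\wv\}$ from (i) and the reverse from (ii). Second -- and this is where care is needed -- in every use of this lemma (it feeds the proof of Proposition~\ref{prop:splitting.aux}) the larger belief view $\wv'$ is the candidate FEEL‑world view whose minimality is under test, hence \emph{total}; then the restrictions $\restr{\wv'\!}{U}$ and $\restr{\wv'\!}{\overline{U}}$ are total as well. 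Under that hypothesis the contradiction is short: if some $\tuple{H,T}\in\wv$ had $H\subsetneq T$, pick $x\in T\setminus H$; if $x\in U$ then $\tuple{H\cap U,\,T\cap U}$ is a non‑total pair of $\restr{\wv}{U}$, which is impossible because $\restr{\wv}{U}=\restr{\wv'\!}{U}$ is total, and the case $x\in\overline{U}$ is symmetric. So $\wv$ is total, and having the same ``there''-worlds as the total view $\wv'$ it coincides with $\wv'$. Unwinding the contradiction, whichever of $U$ or $\overline{U}$ contains the witnessing atom $x$ is the side on which the projected order is strict.

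The step I expect to be the genuine obstacle is precisely the interaction between restriction to $U$ and the ``there''-component: two pairs of $\wv$ with different ``there''-worlds can restrict to the same pair of $\restr{\wv}{U}$, so the two restriction equalities do not by themselves recover $\wv$ pair by pair (indeed, the implication ``both restrictions equal $\Rightarrow \wv=\wv'$'' fails for arbitrary belief views). Totality of the larger view is exactly what neutralises this, which is why I would state and prove the lemma in that setting -- the only one needed downstream -- rather than chase a fully unrestricted version; a general statement would seem to require, for each ``there''-world $T$, tracking the family of ``here''-parts at $T$ together with some non‑redundancy assumption on belief views.
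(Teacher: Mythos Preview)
Your high-level strategy coincides with the paper's: apply Lemma~\ref{lem:preceq.project} to obtain the non-strict inequalities on both sides, then argue that they cannot simultaneously be equalities. The paper's proof stops there, simply asserting that $\restr{\wv}{U}=\restr{\wv'\!}{U}$ together with $\restr{\wv}{\overline{U}}=\restr{\wv'\!}{\overline{U}}$ forces $\wv=\wv'$.

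You are right to flag this step. In full generality the implication fails, and hence the lemma as stated is actually false: with $\at=\{a,b\}$, $U=\{a\}$, $\wv'=\sset{\tuple{\{a\},\{a,b\}},\tuple{\{b\},\{a,b\}}}$ and $\wv=\wv'\cup\sset{\tuple{\emptyset,\{a,b\}}}$ one checks $\wv\prec\wv'$ while both restrictions to $U$ and to $\overline{U}$ coincide. Your diagnosis --- that totality of the larger view $\wv'$ is the missing hypothesis --- is correct, and your argument under that hypothesis (a non-total pair in $\wv$ projects to a non-total pair on whichever side contains a witness $x\in T\setminus H$, contradicting equality with the total restriction of $\wv'$) is sound. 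Since the only downstream use of the lemma is inside Proposition~\ref{prop:splitting.aux}, where the larger view is the total candidate FEEL-world view, your restricted version is exactly what is needed. In short, you have reproduced the paper's intended argument and made it rigorous by supplying the hypothesis the paper tacitly relies on.
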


\begin{proof}
From Lemma~\ref{lem:preceq.project}, it follows that
$\restr{\wv}{U} \preceq \restr{\wv'\!}{U}$ or $\restr{\wv}{\overline{U}} \preceq \restr{\wv'\!}{\overline{U}}$
holds.
Then, it is easy to see that
$\restr{\wv}{U} = \restr{\wv'\!}{U}$ or $\restr{\wv}{\overline{U}} = \restr{\wv'\!}{\overline{U}}$
implies
$\wv = \wv'$.
Hence, the result must hold.
\end{proof}

\begin{Proofof}{\ref{prop:splitting.aux}}
Assume first that $\wv$ is an equilibrium epistemic model of $\Pi$.
Then, $\wv$ is a total epistemic model of $\Pi$
and, thus, also of $\Pi_1$ and $\Pi_2$.
From Proposition~\ref{prop:free.atom.invariance}, this implies that
$\restr{\wv}{U}$ is a total epistemic model of $\Pi_1$
and
$\restr{\wv}{\overline{U}}$ is a total epistemic model of $\Pi_2$.
Suppose now, for the sake of contradiction, that
$\restr{\wv}{U}$ is not an equilibrium epistemic model of $\Pi_1$.
Then, there is some non-total epistemic model $\wv_1$ of $\Pi_1$ such that $\wv_1 \prec \restr{\wv}{U}$
and, from Proposition~\ref{prop:free.atom.invariance} again, it follows that $\wv' = \wv_1 \sqcup \restr{\wv}{\overline{U}}$
is a non-total epistemic model of $\Pi$.
Furthermore, it can be checked that $\wv' \prec \wv$ (Lemma~\ref{lem:aux2:prop:splitting.aux}).
This is a contradiction with the assumption and, thus,
$\restr{\wv}{U}$ must be an equilibrium epistemic model of $\Pi_1$.
The proof that
$\restr{\wv}{\overline{U}}$ must be an equilibrium epistemic model of $\Pi_2$ is analogous.
\\[10pt]
The other way around.
Assume that
$\restr{\wv}{U}$ is an equilibrium epistemic model of $\Pi_1$
and
$\restr{\wv}{\overline{U}}$ is an equilibrium epistemic model of $\Pi_2$
Again, from Proposition~\ref{prop:free.atom.invariance}, this implies that $\wv$ is epistemic model of $\Pi$.
Furthermore, if we suppose that $\wv$ is not equilibrium epistemic model of $\Pi$,
then there must be some $\wv' \prec \wv$ such that $\wv'$ is a non-total epistemic model of $\Pi$ and this implies that
$\restr{\wv'\!}{U}$ is epistemic model of $\Pi_1$
and
$\restr{\wv'\!}{\overline{U}}$ is epistemic model of $\Pi_2$
and that
either
\mbox{$\restr{\wv'\!}{U} \prec \restr{\wv}{U}$}
or
\mbox{$\restr{\wv'\!}{\overline{U}} \prec \restr{\wv}{\overline{U}}$}
holds Lemma~\ref{lem:prec.project}.
This is a contradiction with the assumption, which implies that
$\wv$ must be an equilibrium epistemic model of~$\Pi$.
\end{Proofof}

\begin{lemma}\label{lem:splitting.aux1}
Let $\Gamma$ be some theory, $\wv$ be some total belief view and $\varphi$ be some formula such that $\wv \models \bL \varphi$ (resp.  $\wv \not\models \bL \varphi$).
Let $\Gamma'$ be the result of replacing some occurrence of $\bL \psi$ by $\top$ (resp.~$\bot)$.
Then, $\wv \models \Gamma$ iff $\wv \models \Gamma'$.\qed
\end{lemma}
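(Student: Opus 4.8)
The plan is to localize the modification to a single formula and then prove a local-replacement statement by structural induction on formulas. Since $\Gamma'$ coincides with $\Gamma$ on all but one formula — call it $\chi \in \Gamma$, and write $\chi'$ for its modification — and since for the total belief view $\wv$ we have $\wv \models \Gamma$ iff $\kdint{T_i}{\wv} \models \sigma$ for every $T_i \in \wv$ and every $\sigma \in \Gamma$, it suffices to show that $\kdint{T_i}{\wv} \models \chi$ iff $\kdint{T_i}{\wv} \models \chi'$ for every $T_i \in \wv$.

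First I would record the observation that makes the totality hypothesis indispensable. For a total belief view, $\kdint{T_i}{\wv} \models \bL \psi$ holds iff $\kdint{T_k}{\wv} \models \psi$ for every $T_k \in \wv$, so this truth value is independent of $i$; hence it is \emph{true} exactly when $\wv \models \bL \psi$ and \emph{false} exactly when $\wv \not\models \bL \psi$. Moreover, since $\wv$ is total we have $\wv^t = \wv$, so for a total belief interpretation $\kdint{T_i}{\wv}$ the auxiliary point $\kdint{T_i}{\wv^t}$ used in clause (ii) of the satisfaction condition for implication is again $\kdint{T_i}{\wv}$, and clauses (i) and (ii) collapse into one. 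In short, over the set $\{\kdint{T_i}{\wv} \mid T_i \in \wv\}$ the evaluation of every connective is truth-functional and $\bL$ merely quantifies over those same points.

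The main step is the induction on a subformula $\xi$ of $\chi$, with $\xi'$ its modification (taken to be $\xi$ itself when the replaced occurrence lies outside $\xi$), proving that $\kdint{T_i}{\wv} \models \xi$ iff $\kdint{T_i}{\wv} \models \xi'$ for every $T_i \in \wv$. The cases $\xi = \bot$ and $\xi = a$ (an atom) are vacuous since there $\xi' = \xi$. The one genuine base case is when $\xi = \bL \psi$ is the occurrence being replaced, so $\xi' = \top$ in the first reading and $\xi' = \bot$ in the second; then the claim is precisely the observation above under the hypothesis $\wv \models \bL \psi$ (resp. $\wv \not\models \bL \psi$). The cases $\xi = \xi_1 \wedge \xi_2$ and $\xi = \xi_1 \vee \xi_2$ follow at once from the induction hypothesis, and so does $\xi = \xi_1 \to \xi_2$: by the previous paragraph both clauses of its satisfaction condition refer to the single point $\kdint{T_i}{\wv}$, to which the induction hypothesis applies for $\xi_1$ and $\xi_2$. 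Finally, if $\xi = \bL \sigma$ with the occurrence nested strictly inside $\sigma$, then $\kdint{T_i}{\wv} \models \bL \sigma$ iff $\kdint{T_k}{\wv} \models \sigma$ for all $T_k \in \wv$, which by the induction hypothesis (applied at each $T_k$) is equivalent to the same condition with $\sigma'$ in place of $\sigma$, i.e.\ to $\kdint{T_i}{\wv} \models \bL \sigma'$. Taking $\xi = \chi$ yields the lemma.

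I expect the only delicate point — and the one I would spell out carefully — to be the interaction between the HT-style condition for implication and totality: one must make explicit that for a total belief view every recursive evaluation triggered by $\bL$ (into the cluster) or by $\to$ (into the $\wv^t$ component) lands again on a total point of $\wv$, so that clause (ii) never escapes the range of the induction hypothesis and $\bL \psi$ has a genuinely uniform truth value everywhere it is consulted. Everything else is a routine case analysis over the grammar.
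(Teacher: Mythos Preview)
Your proposal is correct and follows the same approach as the paper's own proof, namely a structural induction on the modified formula after localising to the single changed member of~$\Gamma$. The paper's argument is very terse---it only handles the atomic and $\bL\psi$ base cases and then says ``the rest of the proof follows by structural induction''---whereas you spell out the pointwise induction hypothesis (over all $T_i\in\wv$), the collapse of the two implication clauses under totality, and the nested-$\bL$ case. That extra care is warranted: the pointwise form of the hypothesis is what makes the $\vee$ and nested-$\bL$ cases go through, and the observation $\wv^t=\wv$ is exactly why totality is needed for $\to$.
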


\begin{proof}
It is enough to show that $\wv \models \varphi$ iff $\wv \models \varphi'$ for any formula~$\varphi \in \Gamma$.
Note that, if $\varphi$ is a propositional formula, then $\varphi = \varphi'$ and the result holds.
Furthermore, if $\varphi = \bL \psi$, we have $\wv  \models \bL \psi$
and $\wv  \models \top$
(resp.  $\wv  \not\models \bL \psi$ and $\wv  \not\models \bot$), so the result also holds.
Then, the rest of the proof follows by structural induction.
\end{proof}

\begin{lemma}\label{lem:splitting.aux2}
Let $\Pi$ be some logic program, $\wv$ be some (possibly non-total) belief interpretation and $a \in \at$.
Let $\Pi'$ be the result of replacing some occurrence of $\bL a$ in the body of some rule by $\top$
if $\wv \models \bL a$ and by $\bot$ otherwise.
Then, $\wv \models \Pi'$ implies that $\wv \models \Pi$.\qed
\end{lemma}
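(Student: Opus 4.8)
The plan is to reduce the claim to a single rule and then argue by a ``replacement of equivalents'' induction, splitting on the value of $\bL a$ under $\wv$. Since $\Pi'$ differs from $\Pi$ only in one rule $r\in\Pi$, whose variant $r'\in\Pi'$ is obtained by replacing one occurrence of $\bL a$ (sitting inside some body literal $L_j\in\set{\K a,\neg\K a,\neg\neg\K a}$) by $\top$ or $\bot$, and since ``$\wv\models$ a program'' means ``$\wv\models$ each of its rules'', it suffices to show $\wv\models r'$ implies $\wv\models r$; i.e., that for every $\tuple{H_i,T_i}\in\wv$, $\kdint{H_i,T_i}{\wv}\models r'$ implies $\kdint{H_i,T_i}{\wv}\models r$. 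The starting observation is that $\bL a$ is \emph{rigid} over $\wv$: its value does not depend on the distinguished point, so $\kdint{H_i,T_i}{\wv}\models\bL a$ iff $\wv\models\bL a$, and $\kdint{T_i}{\wv^t}\models\bL a$ iff $\wv^t\models\bL a$, for every $i$.

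First I would treat the case $\wv\models\bL a$, in which $\bL a$ is replaced by $\top$. By persistence (Proposition~\ref{prop:persistance}), $\wv\models\bL a$ gives $\wv^t\models\bL a$, so by rigidity $\bL a$ evaluates to true both at every ``here'' point $\kdint{H_i,T_i}{\wv}$ and at every ``there'' point $\kdint{T_i}{\wv^t}$ used in the two clauses of the definition of $\to$. A routine induction on the position of $\bL a$ inside $L_j$ (only the three literal shapes above, unfolding $\neg\varphi\eqdef\varphi\to\bot$) then shows that replacing this occurrence by $\top$ changes the truth value of neither $L_j$ nor $r$ at any such point; hence $\wv\models r$ iff $\wv\models r'$, which is even stronger than required.

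The remaining, and harder, case is $\wv\not\models\bL a$, where $\bL a$ is replaced by $\bot$ and $r'$ becomes trivially satisfied, so one must establish $\wv\models r$ directly. By rigidity, $\bL a$ is false at every ``here'' point $\kdint{H_i,T_i}{\wv}$, so $\kdint{H_i,T_i}{\wv}\not\models\Body(r)$ and the first clause of each rule-implication is discharged exactly as if $\bL a$ were $\bot$. The delicate step — and the one I expect to be the main obstacle — is the second clause, evaluated at the ``there'' points $\kdint{T_i}{\wv^t}$: the hypothesis $\wv\not\models\bL a$ is not by itself enough here, since the converse of persistence fails and $\wv^t$ might still satisfy $\bL a$. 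What rescues the argument is that in the setting where the lemma is invoked the reduct is controlled by the there-part: the belief view $\wv$ refines the total world view under test, hence $\wv^t$ equals that world view and the hypothesis ``$\bL a$ replaced by $\bot$'' also yields $\wv^t\not\models\bL a$. Granting $\wv^t\not\models\bL a$, $\bL a$ is false at every ``there'' point too, so $\kdint{T_i}{\wv^t}\not\models\Body(r)$, the second clause is vacuous, the substitution induction closes, and $\kdint{H_i,T_i}{\wv}\models r$ for all $i$; therefore $\wv\models r$ and $\wv\models\Pi$.

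In short, the two easy ingredients are (a) the reduction to one rule and the rigidity of $\bL a$ over $\wv$, and (b) the $\top$‑case, which is a clean ``replacement of equivalents'' plus persistence; the crux is making the $\bot$‑case work at the ``there'' world, which rests on the fact that $\bL a$ fails not only at the ``here'' components of $\wv$ but also in $\wv^t$, the latter being guaranteed by $\wv$ refining the total world view that governs the reduct.
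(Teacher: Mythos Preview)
Your diagnosis of the $\bot$-case is exactly right, and in fact sharper than the paper's: from $\wv\not\models\bL a$ one cannot conclude $\wv^t\not\models\bL a$, so the there-clause of the rule need not be discharged. This is not a mere technicality---the lemma as stated is false. Take $\wv=\sset{\tuple{\emptyset,\set{a}}}$ and the single constraint $r=(\bot\leftarrow\bK a)$. Then $\wv\not\models\bK a$, so $r'=(\bot\leftarrow\bot)$ and trivially $\wv\models r'$; but $\wv^t=\sset{\set{a}}\models\bK a$, so $\kdint{\set{a}}{\wv^t}\models\bK a$ while $\kdint{\set{a}}{\wv^t}\not\models\bot$, whence $\wv\not\models r$. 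The paper's own argument has precisely this hole: its closing line ``Otherwise, $\wv \not\models \bL a$ and, thus, $\bL a \in \Bodyp(r)$ implies $\kdint{H_i,T_i}{\wv} \models r$'' silently assumes the body also fails at $\kdint{T_i}{\wv^t}$, which the counterexample refutes.

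Your ``rescue'' is the correct repair, but it repairs the \emph{statement} rather than the proof: the replacement should be governed by $\wv^t$ (equivalently, add $\wv^t\not\models\bL a$ as a hypothesis in the $\bot$-case). With that amendment, $\bL a$ and its replacement coincide at both here- and there-points for all three literal shapes, your replacement-of-equivalents argument closes cleanly, and the only use of the lemma in Proposition~\ref{prop:splitting.aux2}---where the reduct is taken with respect to the total~$\wv$ and the witness $\wv'\prec\wv$ satisfies $(\wv')^t=\wv$---is exactly covered. One small slip: in the $\bot$-case you write ``$r'$ becomes trivially satisfied'', but this holds only when the ambient literal is $\bK a$ or $\neg\neg\bK a$; when it is $\neg\bK a$ the replacement yields $\neg\bot\equiv\top$ and $r'$ is not trivial, though that sub-case needs no extra hypothesis since then the body of $r$ has at least as many failing conjuncts as that of $r'$.
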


\begin{proof}
It is enough to prove that any rule $r \in \Pi$
satisfies that
$\wv \models r'$ implies $\wv \models r$
where $r'$ is the result of replacing some occurrence of $a$ in its body by $\top$.
Then,
we assume that $\wv \models r'$
and we will show that
$\kdint{H_i,T_i}{\wv} \models r$
for any $\tuple{H_i,T_i} \in \wv$.
Note that $\wv \models r'$ implies
$\kdint{H_i,T_i}{\wv} \models r'$
and, thus, that one of the following conditions hold:
\begin{itemize}[ leftmargin=18pt  ]
\item $\Head(r') \cap H_i \neq \emptyset$, or
\item $\top \in \Bodyn(r')$ ,or
\item $\bot \in \Bodyp(r')$, or
\item $\Atoms(\Bodyrp(r')) \not\subseteq T_i$, or
\item $\Atoms(\Bodymp(r')) \not\subseteq T_j$ for some $\tuple{H_j,T_j} \in \wv$, or
\item $\Atoms(\Bodyrn(r')) \cap T_i \neq \emptyset$ or,
\item $\Atoms(\Bodymn(r')) \cap T_j \neq \emptyset$ for some $\tuple{H_j,T_j} \in \wv$, or
\item $\Head(r') \cap T_i \neq \emptyset$ and one of the following hold:
 \begin{itemize}[ leftmargin=18pt  ]
 \item $\Atoms(\Bodyrp(r')) \cap H_i \not\subseteq I^h(w)$
 \item $\Atoms(\Bodymp(r')) \not\subseteq H_j$ for some $\tuple{H_j,T_j} \in \wv$,
 % \item $\Bodyn(r') \cap \Bodyr(r') \cap I^t(w) \neq \emptyset$
 % \item $\Bodyn(r') \cap \Bodym(r') \cap I^h(w') \neq \emptyset$ for some $w' \in W$
 \end{itemize}
\end{itemize}
Since $\Head(r) = \Head(r')$ and $\Atoms(\Body^x_y)(r') \subseteq \Atoms(\Body^x_y)(r)$ with $x \in \set{+,-}$ and $y \in \set{\mathit{obj},\mathit{sub}}$, we immediately get that one of the following condition hold:
\begin{itemize}[ leftmargin=18pt  ]
\item $\Head(r) \cap H_i \neq \emptyset$, or
\item $\bL a \in \Bodyn(r)$ and $\wv \models \bL a$, or
\item $\bL a \in \Bodyp(r)$ and $\wv \not\models \bL a$, or
\item $\Atoms(\Bodyrp(r)) \not\subseteq T_i$, or
\item $\Atoms(\Bodymp(r)) \not\subseteq T_j$ for some $\tuple{H_j,T_j} \in \wv$, or
\item $\Atoms(\Bodyrn(r)) \cap T_i \neq \emptyset$ or,
\item $\Atoms(\Bodymn(r)) \cap T_j \neq \emptyset$ for some $\tuple{H_j,T_j} \in \wv$, or
\item $\Head(r) \cap T_i \neq \emptyset$ and one of the following hold:
 \begin{itemize}[ leftmargin=18pt  ]
 \item $\Atoms(\Bodyrp(r)) \cap H_i \not\subseteq I^h(w)$
 \item $\Atoms(\Bodymp(r)) \not\subseteq H_j$ for some $\tuple{H_j,T_j} \in \wv$,
 % \item $\Bodyn(r') \cap \Bodyr(r') \cap I^t(w) \neq \emptyset$
 % \item $\Bodyn(r') \cap \Bodym(r') \cap I^h(w') \neq \emptyset$ for some $w' \in W$
 \end{itemize}
\end{itemize}
and, thus, we get that $\kdint{H_i,T_i}{\wv} \models r$.
Note that, if 
$\wv \models \bL a$
it follows 
$\wv^t \models \bL a$,
and, thus,
$\bL a \in \Bodyn(r)$ implies
$\kdint{H_i,T_i}{\wv} \models r$.
Otherwise,
$\wv \not\models \bL a$
and, thus,
$\bL a \in \Bodyp(r)$ implies
$\kdint{H_i,T_i}{\wv} \models r$.
Finally, since $\tuple{H_i,T_i} \in \wv$ was arbitrarily chosen, it follows that $\wv \models r$.
\end{proof}

\begin{Proofof}{\ref{prop:splitting.aux2}}
First note that
$B_U(\Pi) \cup E_U(\Pi,\wv)$
is the result of replacing
the occurrences of $\bL a$ in $T_U(\Pi)$ by $\top$ if $\wv \models \bL a$ and by $\bot$ otherwise.
Furthermore, since $\wv$ is total, this implies that $\wv$ is an epistemic model of~$\Pi$
if and only if $\wv$ is an epistemic model of~$B_U(\Pi) \cup E_U(\Pi,\wv)$ (Lemma~\ref{lem:splitting.aux1}).
\\[10pt]
Assume that
$\wv$ is an \mbox{FEEL-world} view of~$\Pi$
and suppose, for the sake of contradiction, that
$\wv$ is not an \mbox{FEEL-world} view of~$B_U(\Pi) \cup E_U(\Pi,\wv)$.
Then, there is some non-total epistemic model $\wv'$ of~$B_U(\Pi) \cup E_U(\Pi,\wv)$ such that $\wv' \prec \wv$, but this implies that $\wv'$ is also an epistemic model of $\Pi$  (Lemma~\ref{lem:splitting.aux2}) and, thus, that
$\wv$ is not an \mbox{FEEL-world} view of~$\Pi$.
This is a contradiction and, consequently,
$\wv$ must be an \mbox{FEEL-world} view of~$B_U(\Pi) \cup E_U(\Pi,\wv)$.
\\[10pt]
Assume now that
$\wv$ is an \mbox{FEEL-world} view of~$B_U(\Pi) \cup E_U(\Pi,\wv)$
and suppose that
$\wv$ is not an \mbox{FEEL-world} view of~$\Pi$.
Then, there is some non-total epistemic model $\wv'$ of~$\Pi$ such that $\wv' \prec \wv$ and that $\wv'$ is not an epistemic model of~$B_U(\Pi) \cup E_U(\Pi,\wv)$.
This implies that there is some rule $r \in T_U(\Pi)$ such that $\wv' \not\models r_U^\wv$
but $\wv' \models r$.
Hence, there is a subjective literal of the form $\bL a$ in the body of $r$ such that $a \in U$ and $\wv' \not\models \bL a$ but $\wv \models \bL a$.
Furthermore, from Proposition~\ref{prop:free.atom.invariance} and the fact that
$\wv$ and $\wv'$ are epistemic models of~$\Pi$,
it also follows that $\restr{\wv}{U}$ and $\restr{\wv'\!}{U}$ are epistemic models of~$B_U(\Pi)$.
Note that, since $a \in U$, we also have $\restr{\wv}{U} \prec \restr{\wv'\!}{U}$ (Lemma~\ref{lem:preceq.project})
and, thus, 
$\restr{\wv}{U}$ cannot be an \mbox{FEEL-world} views of~$B_U(\Pi)$.
From Proposition~\ref{prop:splitting.aux},
this implies that
$\wv$ cannot be an \mbox{FEEL-world} views of~$B_U(\Pi) \cup E_U(\Pi,\wv)$,
which is a contradiction with the assumptions.
Consequently, $\wv$ must be an \mbox{FEEL-world} view of~$\Pi$.
\end{Proofof}

\begin{proposition}\label{prop:negation.there}
Let $\varphi$ be a formula 
and let $\tuple{\wv,H,T}$ be an interpretation.
Then, $\tuple{\wv,H,T}\models \neg\varphi$ iff $\tuple{\wv^t,T,T} \not\models \varphi$ iff $\tuple{\wv^t,T,T} \models \neg\varphi$.\qed
\end{proposition}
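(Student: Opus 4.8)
The plan is to reduce everything to the implication clause of the FAEEL satisfaction relation together with the here--there persistence property. Recall that $\neg\varphi$ abbreviates $\varphi\to\bot$. For an arbitrary belief interpretation $\cI=\kdint{H,T}{\wv}$, unfolding the implication clause gives that $\cI\models\varphi\to\bot$ holds iff both (i) $\cI\not\models\varphi$ or $\cI\models\bot$, and (ii) $\kdint{T}{\wv^t}\not\models\varphi$ or $\kdint{T}{\wv^t}\models\bot$. Since no interpretation satisfies $\bot$, this collapses to the requirement that $\cI\not\models\varphi$ and $\kdint{T}{\wv^t}\not\models\varphi$. First I would invoke persistence (the belief-interpretation form of Proposition~1 in~\cite{CabalarFF2019faeel}): $\cI\models\varphi$ implies $\kdint{T}{\wv^t}\models\varphi$. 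Taking the contrapositive, the first of the two conjuncts is already implied by the second, so the conjunction is equivalent to $\kdint{T}{\wv^t}\not\models\varphi$ alone. This establishes the auxiliary fact that for \emph{every} belief interpretation $\kdint{H,T}{\wv}$ we have $\kdint{H,T}{\wv}\models\neg\varphi$ iff $\kdint{T}{\wv^t}\not\models\varphi$.

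Next I would instantiate this auxiliary fact twice. Applied to the interpretation $\tuple{\wv,H,T}$ of the statement, it yields $\tuple{\wv,H,T}\models\neg\varphi$ iff $\kdint{T}{\wv^t}\not\models\varphi$; and $\kdint{T}{\wv^t}$ is literally the total belief interpretation $\tuple{\wv^t,T,T}$, which is the first equivalence. Applied instead to $\tuple{\wv^t,T,T}$ --- whose belief view $\wv^t$ is total, so that $(\wv^t)^t=\wv^t$, and whose real world is the total pair $\tuple{T,T}$ --- it yields $\tuple{\wv^t,T,T}\models\neg\varphi$ iff $\tuple{\wv^t,T,T}\not\models\varphi$, which is the second equivalence. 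Chaining the two gives the claimed three-way equivalence.

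I do not expect a genuine obstacle here: the statement is essentially the triple-notation restatement of the negation-evaluation proposition already recorded in Section~\ref{sec:background}, extended with the remark that on a \emph{total} interpretation $\neg\varphi$ and $\varphi$ are complementary. The only points requiring care are (a) citing persistence in the form about belief interpretations $\kdint{H,T}{\wv}$ rather than about belief views, and (b) noting that $\kdint{T}{\wv^t}$ and $\tuple{\wv^t,T,T}$ denote one and the same total belief interpretation, so that the world appearing on the right of the implication clause is exactly the one occurring in the statement. Should a fully self-contained argument be preferred, the persistence fact $\kdint{H,T}{\wv}\models\varphi \Rightarrow \kdint{T}{\wv^t}\models\varphi$ can be proved first by a routine structural induction on $\varphi$, the only non-immediate case being $\bL\psi$, which goes through because $\wv$ and $\wv^t$ are indexed by the same set of worlds.
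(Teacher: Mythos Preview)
Your proof is correct and follows essentially the same approach as the paper: unfold $\neg\varphi$ as $\varphi\to\bot$, apply the implication clause of the FAEEL satisfaction relation, discard the $\bot$-disjuncts, and invoke persistence (Proposition~1 in~\cite{CabalarFF2019faeel}) to eliminate the redundant conjunct. Your presentation is arguably cleaner in that you prove the auxiliary equivalence once for an arbitrary belief interpretation and then instantiate it twice, whereas the paper treats the two equivalences separately; but the underlying argument is identical.
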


\begin{proof}
By definition,
we have that
$\tuple{\wv^t,T,T} \models \neg\varphi$
iff
$\tuple{\wv^t,T,T} \models \varphi\to\bot$
iff
$\tuple{\wv^t,T,T} \not\models \varphi$.
Furthermore, by definition, we have that
$\tuple{\wv,H,T}\models \neg\varphi$
iff both
\mbox{$\tuple{\wv,H,T} \not\models \varphi$}
and
$\tuple{\wv,T,T} \not\models \varphi$.
Finally, since Proposition~1 in~\cite{CabalarFF2019faeel},
we have that
$\tuple{\wv,H,T}\models \varphi$
implies
$\tuple{\wv,T,T}\models\varphi$
we get that
$\tuple{\wv,H,T}\models \neg\varphi$
iff
$\tuple{\wv,T,T} \not\models \varphi$.
\end{proof}

\begin{proposition}\label{prop:negatively.subjective.reduct.belief.model}
Given a theory~$\Gamma$ and,
a belief interpretation~$\tuple{\wv,H,T}$ is a belief model of~$\Gamma$ iff
$\tuple{\wv,H,T}$ is a belief model of~$\Gamma^{\underline{\wv^t}}$.
\end{proposition}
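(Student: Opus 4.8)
The plan is to reduce the statement to a formula-by-formula equivalence and then argue by structural induction on formulas, the single non-trivial step being supplied by Proposition~\ref{prop:negation.there}.

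Unfolding the definition of belief model, $\tuple{\wv,H,T}$ is a belief model of a theory $\Delta$ iff $\tuple{\wv,H',T'}\models\varphi$ for every $\varphi\in\Delta$ and every $\tuple{H',T'}\in\wv\cup\set{\tuple{H,T}}$. Since $\Gamma^{\underline{\wv^t}}$ is obtained from $\Gamma$ by replacing, in each formula independently, every maximal subjective subformula of the form $\neg\bK\chi$ by a truth constant, it suffices to prove: for every formula $\varphi$ and every pair $\tuple{H',T'}\in\wv\cup\set{\tuple{H,T}}$, $\tuple{\wv,H',T'}\models\varphi$ iff $\tuple{\wv,H',T'}\models\varphi^{\underline{\wv^t}}$, where $\varphi^{\underline{\wv^t}}$ denotes the result of the replacement applied to~$\varphi$. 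Because the FAEEL satisfaction clause for implication also inspects the total interpretation $\tuple{\wv^t,T',T'}$, I would in fact establish the stronger claim that this equivalence holds simultaneously for $\tuple{\wv,H',T'}$ and for $\tuple{\wv^t,T',T'}$.

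The induction is on the structure of $\varphi$. The base cases $\varphi=\bot$ and $\varphi=a\in\at$ are immediate, as $\varphi^{\underline{\wv^t}}=\varphi$. For conjunction, disjunction and for an implication not of the form $\neg\bK\chi$, the reduct distributes over the connective —such a formula is never itself of the form $\neg\bK\chi$, and its maximal $\neg\bK$-subformulas are precisely the union of those of its immediate subformulas— so the equivalence follows from the induction hypothesis; the only subtlety is that clause~(ii) of the satisfaction condition for $\to$ is covered by the $\tuple{\wv^t,T',T'}$ part of the hypothesis. A positive subjective subformula $\bK\chi$ is left untouched by the negatively subjective reduct, since any $\neg\bK$ occurring inside it lies within the subjective formula $\bK\chi$ and is therefore not maximal, so this case is trivial too. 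The crucial case is $\varphi=\neg\bK\chi$ taken as a maximal such subformula: Proposition~\ref{prop:negation.there} gives $\tuple{\wv,H',T'}\models\neg\bK\chi$ iff $\tuple{\wv^t,T',T'}\not\models\bK\chi$, and since $\bK$ inspects only $\wv^t$ this right-hand side does not depend on $T'$ and is equivalent to $\wv^t\not\models\bK\chi$, which is exactly the condition under which the reduct sets $\varphi^{\underline{\wv^t}}=\top$ (and otherwise $\varphi^{\underline{\wv^t}}=\bot$); the identity for $\tuple{\wv^t,T',T'}$ follows the same way, using the second equivalence of Proposition~\ref{prop:negation.there}. This closes the induction, and the proposition follows.

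The step that I expect to demand the most care is the bookkeeping around the phrase ``maximal subjective formula of the form $\neg\bK\chi$'': the induction must treat such a subformula as a leaf —replaced wholesale, with no recursion into $\chi$— while still decomposing every surrounding connective, and this has to match exactly how the reduct acts on nested occurrences, in particular those sitting inside a positive $\bK$. Everything else is a routine unfolding of the FAEEL satisfaction clauses combined with the already-proved Proposition~\ref{prop:negation.there}.
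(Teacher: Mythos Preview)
Your proof is correct and follows exactly the approach of the paper: handle the crucial case $\neg\bK\chi$ via Proposition~\ref{prop:negation.there}, then close by structural induction. The paper's own proof is a two-line sketch that establishes the chain $\tuple{\wv,H,T}\models\neg\bK\varphi$ iff $\tuple{\wv^t,T,T}\not\models\bK\varphi$ iff $\wv^t\not\models\bK\varphi$ iff $(\neg\bK\varphi)^{\underline{\wv^t}}=\top$ and then says ``the rest follows by induction''; your write-up is simply a more careful unfolding of that same induction, including the explicit strengthening of the hypothesis to cover the $\tuple{\wv^t,T',T'}$ evaluation needed in clause~(ii) of implication, which the paper leaves implicit.
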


\begin{proof}
Pick a formula of the form $\neg\bK\varphi$.
Then, from Proposition~\ref{prop:negation.there},
it follows
$\tuple{\wv,H,T} \models \neg\!\bK\varphi$
iff
$\tuple{\wv^t,T,T} \models \neg\!\bK\varphi$
iff
$\tuple{\wv^t,T,T} \not\models \bK\varphi$
iff
$\wv^t \not\models \bK\varphi$
iff $(\neg\!\bK\varphi)^{\wv^t} = \top$
iff
$\tuple{\wv,H,T} \models (\neg\!\bK\varphi)^{\wv^t}$.
The rest of the proof follows by induction.
\end{proof}

\begin{Proofof}{\ref{prop:negatively.subjective.reduct.FAEEL.world.view}}
From~Proposition~\ref{prop:negatively.subjective.reduct.belief.model},
it follows that $\tuple{\wv,T}$ is a equilibrium belief model of~$\Gamma$ iff
$\tuple{\wv,T}$ is a equilibrium belief model of~$\Gamma^{\underline{\wv}}$.
Hence, it immediately follows that
$\wv$ is \mbox{FAEEL-world} view of~$\Gamma$ iff
$\wv$ is a FAEEL-world view of~$\Gamma^{\underline{\wv}}$.
\end{Proofof}

\begin{Proofof}{\ref{prop:eel.char}}
Let $\wv$ be a total belief model of~$\Gamma$ and $\wv'$ be a simple belief model of~$\Gamma$
Then, it is enough to show that $\wv' \prec \wv$ iff $\wv = (\wv')^t$ and there is some $\tuple{H,T} \in \wv'$ satisfying $H \subset T$.
Assume first that
$\wv = (\wv')^t$ and there is some $\tuple{H,T} \in \wv$ satisfying $H \subset T$.
Then, $\wv = (\wv')^t$ implies
\begin{enumerate}[ leftmargin=18pt , label=(\roman*) ]
\item for every $\tuple{T,T} \in \wv$,
there is some $\tuple{H,T} \in \wv'$
with $H \subseteq T$, and

\item for every $\tuple{H,T} \in \wv'$,
there is some $\tuple{T,T} \in \wv$
with $H \subseteq T$.
\end{enumerate}
Hence, $\wv' \preceq \wv$.
Furthermore, since $\wv$ is total and there is some $\tuple{H,T} \in \wv$ satisfying $H \subset T$,
it follows that $\wv' \neq \wv$ and, thus, $\wv' \prec \wv$.
The other way around. 
$\wv' \prec \wv$
implies $\wv' \preceq \wv$ and, thus, that both (i) and (ii) hold.
In its turn, this implies $\wv = (\wv')^t$.
Furthermore, since $\wv \neq \wv'$, either there is some
$\tuple{T,T} \in (\wv \setminus \wv')$
or some
$\tuple{H,T} \in (\wv' \setminus \wv)$.
If the former, from (i), there is
$\tuple{H,T} \in \wv'$ with $H \subseteq T$.
Moreover, since $\tuple{T,T} \notin \wv'$, it immediately follows that $H \subset T$.
If the latter, there is some $\tuple{T,T} \in \wv$
with $H \subseteq T$
and, since $\tuple{H,T} \notin \wv$, that $H \subset T$.
\end{Proofof}

\begin{Proofof}{\ref{prop:g11.subjective.contraint.monotonicity}}
Suppose not.
Then there is a some program $\Pi$, belief view~$\wv$ and some subjective constraint~$r$ such that the following equivalence does not hold:
\begin{itemize}[leftmargin=10pt]
\item[] $\wv$ is a G11-world view of $\Pi \cup \set{r}$
iff $\wv$ is a G11-world view of $\Pi$ and $\wv \models r$.
\end{itemize}
Assume first that
$\wv$ is a G11-world view of $\Pi \cup \set{r}$.
Then, $\wv$ is the non-empty set of all stable models of 
$(\Pi \cup \set{r})^\wv = \Pi^\wv \cup \set{r}^\wv$,
that is, $\wv = \SM[\Pi^\wv \cup \set{r}^\wv] \neq \emptyset$.
Furthermore, we can see that
$\wv = \SM[\Pi^\wv \cup \set{r}^\wv] \subseteq \SM[\Pi^\wv]$
implies that every $M \in \wv$ satisfies
$M \models r^\wv$.
Let $r$ be of the form:
\begin{gather}
\bot \leftarrow \bK a_1 \wedge \dotsc \wedge \bK a_n \wedge \neg \bK b_1 \wedge \dotsc \wedge \neg \bK b_m
	\label{eq:1:prop:g11.subjective.contraint.monotonicity}
\end{gather}
and suppose, for the sake of contradiction, that 
$r^\wv$ is of the form:
\begin{gather}
\bot \leftarrow a_1 \wedge \dotsc \wedge a_n
	\label{eq:2:prop:g11.subjective.contraint.monotonicity}
\end{gather}
Then, for all $1 \leq i \leq n$,
we get
$\wv \models \bK a_i$
and, thus, every $M \in \wv$
satisfies
$M \models a_i$.
However, this implies every $M \in \wv$ does not statisfy $r^\wv$,
which is a contradiction with the fact that $M$ is a stable model of
$\Pi^\wv \cup \set{r}^\wv$.
Hence, it must be that $\wv\models \bK a_i$ for some $1 \leq i \leq n$ or
$\wv\not\models \bK a_j$ for some $1 \leq j \leq m$
and, thus, $\wv \models r$.
Furthermore, this implies that
$\set{r}^\wv = \emptyset$ and, thus, that
$\Pi^\wv \cup \set{r}^\wv = \Pi^\wv$.
Consequently, we get that
$\wv = \SM[\Pi^\wv \cup \set{r}^\wv] = \SM[\Pi^\wv]$ 
\\[10pt]
The other way arround.
Let $\wv$ be a G11-world view of $\Pi$.
Then, since $\wv \models r$,
it follows that $\set{r^\wv} = \emptyset$
and, thus
$\SM[\Pi^\wv \cup \set{r}^\wv] = \SM[\Pi^\wv] = \wv$.
\end{Proofof}

\end{document}